\def\softd{{\leavevmode\setbox1=\hbox{d}%
     \hbox to 1.05\wd1{d\kern-0.4ex{\char039}\hss}}}
\begin{document}
\topmargin -70pt %
\textheight = 720pt %
\makeatletter
\renewcommand{\section}{\@startsection{section}{1}{0pt}%
{3.5ex plus 1ex minus .2ex}{2.3ex minus .2ex}%
{\normalfont\normalsize\bfseries}}
\renewcommand{\subsection}{\@startsection{subsection}{2}{0.7cm}%
{1.5ex plus 0.2ex minus .2ex}{-1ex}%
{\normalfont\normalsize\bfseries}} %
\makeatother

\setcounter{secnumdepth}{2} %
\numberwithin{equation}{section}
\newtheorem{theorem}{\indent Theorem}
\newtheorem{proposition}{\indent Proposition}[section]
\newtheorem{lemma}{\indent Lemma}[section]
\newtheorem{corollary}{\indent Corollary}[section]
\newtheorem{observation}{\indent Observation}[section]
\renewcommand{\proofname}{\indent \textup{\textbf{Proof.}}}

\newcommand{\Card}{\mathrm{Card}}
\newcommand{\Anc}{\mathrm{Anc}}
\newcommand{\Norm}{\mathrm{Norm}}
\newcommand{\EqAOF}{\mathrm{EqAOF}}

\begin{center}
{\LARGE Almost overlap-free words \\and the word problem \\
for the free Burnside semigroup \\satisfying $x^2=x^3$}\\[4pt]
{Andrew N. Plyushchenko, Arseny M. Shur}
\end{center}

\begin{abstract}
In this paper we investigate the word problem of the free Burnside semigroup satisfying $x^2=x^3$ %
and having two generators. Elements of this semigroup are classes of equivalent words. %
A natural way to solve the word problem is to select a unique ``canonical'' representative %
for each equivalence class. We prove that  overlap-free words and so-called %
almost overlap-free words (this notion is some generalization of the notion of overlap-free words) %
can serve as canonical representatives for corresponding equivalence classes. We %
show that such a word in a given class, if any, can be efficiently found. %
As a result, we construct a linear-time algorithm that partially solves %
the word problem for the semigroup under consideration.
\end{abstract}

\section*{Introduction}
\label{Intr}
The free Burnside semigroups satisfying $x^n=x^{n+m}$ are defined by the identities, which impose %
the equivalence of $n$-th and $(n{+}m)$-th powers of words in any context. Thus %
elements of these semigroups are classes of equivalent words. The structure %
of free Burnside semigroups is far from being completely described. However, %
a considerable progress was achieved in the 1990's. Ka\softd{}ourek and Pol\'{a}k~\cite{Kadourek}, %
de~Luca and Varricchio~\cite{Luca}, McCammond~\cite{McCammond}, Guba~\cite{Guba1, Guba2}, %
and do~Lago~\cite{Lago1,Lago2} produced a series of papers that led to the discovery %
of many structural properties of free Burnside semigroups. The reader is referred %
to the survey~\cite{LaSi} for the history and the formulations of remarkable results.

The word problem is probably the most important and challenging combinatorial problem related %
to free Burnside semigroups. It is formulated as~follows: \textit{given words $U$ and $V$, %
decide whether or not $U$ and $V$ are equivalent in a given semigroup}. In a series %
of papers~\cite{Guba1,Guba2,Lago1,Luca,McCammond} the word problem was solved for all %
free Burnside semigroups satisfying $x^n=x^{n+m}$ for~$n\geq3$ and~$m\geq1$. In this case, %
all equivalence classes are regular languages, and the deciding algorithm for the word problem constructs an
automaton recognizing the class of the word $U$ and tries to accept $V$ by this automaton
(see~\cite{Guba1,Guba2}). Due to Green and Rees \cite{Green} and later Ka\softd{}ourek and Pol\'{a}k
\cite{Kadourek}, the word problem for the case $n=1$ was solved modulo periodic groups (i. e., %
reduced to the word problem for the groups satisfying $x^{m}=1$). For the case $n=2$, this problem remains open
(and some equivalence classes are not regular languages, see~\cite{Lago2}). Note that the word problem for the
particular case $n=2$ and $m=1$ was explicitly formulated by Brzozowski \cite{Brz}. This case was considered to be
the hardest one to analyze (see~\cite{Lago1}). In what follows, we consider the free Burnside semigroup satisfying
$x^2=x^3$ and having two generators.

A natural way to solve the word problem is to select a unique ``canonical'' representative for each equivalence
class. Thus every word is equivalent to exactly one ``canonical'' word. %
If the latter can be efficiently found, then the word problem is decidable. It is clear that the choice of the
canonical representatives is not an easy task. For example, we cannot take just a cube-free word as a
representative, since there exist equivalent cube-free words. And if we take the shortest word in the class as a
representative, it may be very hard to determine such a word.

As was proved in \cite{MyWork}, overlap-free words can serve as %
canonical representatives for corresponding equivalence classes. In this paper we %
generalize this result for ``almost'' overlap-free words and show that %
such a word in a given class, if any, can be efficiently found. %
Thus we construct an efficient (in fact, linear-time) algorithm that partially solves %
the word problem for the semigroup under consideration.

To give precise formulations of the main results, we say a few words about definitions %
and notation.

Let $\Sigma=\{a,b\}$. As usual, we write~$\Sigma^*$ %
for the monoid of all words over~$\Sigma$ (including the empty %
word~$\lambda$) and~$\Sigma^+$ for the semigroup of all non-empty words %
over~$\Sigma$. For a word $W$, its length is denoted by~$|W|$ and %
its~$i$-th letter  is denoted by~$W[i]$; thus, $W=W[1]\dotso W[|W|]$. %
In the sequel, we write $W[i\dotso j]$ instead of $W[i]\dotso W[j]$. %
\textit{Factors}, \textit{prefixes}, \textit{suffixes}, and \textit{powers} %
of a word are defined in the usual way. Recall that the \textit{Kleene star} $W^*$ of $W$ is %
the union of all nonnegative powers of the word $W$.

A factor, prefix, or suffix of a word $W$ is called  %
\textit{proper} if it is not equal to $W$.  A factor of a word $W$ %
is called \textit{internal} if it is neither a prefix nor a suffix of $W$.  %
We write  $U\leq V$ ($U<V$, $U\ll V$) if a word $U$ is a factor (resp., %
proper factor, internal factor) of a word $V$. %

A word $U$ is called \textit{overlap-free\/} if it contains %
no factor of the form $XYXYX$ for any $X\in\Sigma^+,Y\in\Sigma^*$. %
If~$U$ contains no \textbf{proper} factor of the form %
above, then we call it \textit{almost overlap-free}. %
Finally, we call a word \textit{cube-free} if it contains no factor of the %
form $XXX$ for any $X\in\Sigma^+$.

Two words $U$ and $V$ are \textit{neighbours} if and only if one %
of them can be obtained from another one by replacing some factor of the form $Y^2$ by the factor $Y^3$. %
Thus we have the \textit{neighbourhood relation} \[\pi = \{(XY^2Z, XY^3Z), (XY^3Z,XY^2Z)\mid X,Z\in\Sigma^*, Y\in\Sigma^+\}.\]  %
The \textit{free Burnside semigroup satisfying~$x^2=x^3$ generated %
by~$\Sigma$} is defined as the quotient semigroup~$\Sigma^+/{\sim}$,
where~$\sim$ is the smallest congruence containing~$\pi$. %
If $U\sim V$, then the words $U$ and $V$ are said %
to be \textit{equivalent}. The congruence class of $U$ is denoted by~$[U]$. \medskip

In these terms, the first result of this paper is formulated as follows.

\begin{theorem}
\label{Th1} %
Except for the classes $[aa]$ and $[bb]$, each class of the congruence~$\sim$ %
contains at most one almost overlap-free word. Each of the two exceptional %
classes contains exactly two almost overlap-free words. %
\end{theorem}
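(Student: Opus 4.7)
My plan is to separate the existence assertion for the exceptional classes from the general uniqueness assertion. For existence, direct verification suffices: each of $aa$, $aaa$, $bb$, $bbb$ has only single letters and $aa$ or $bb$ among its proper factors, none of which is of the form $XYXYX$ with $X\in\Sigma^+$, so all four words are almost overlap-free; and a single application of $x^2\leftrightarrow x^3$ gives $aa\sim aaa$ and $bb\sim bbb$. The substantive content of the theorem is that no third almost overlap-free word lies in $[aa]$ or $[bb]$, and that every other class contains at most one almost overlap-free word.

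The entry point for uniqueness is a simple dichotomy: any almost overlap-free word $W$ is either overlap-free, or else $W$ itself has the form $XYXYX$ for some $X\in\Sigma^+$, $Y\in\Sigma^*$ (any overlap-shaped factor of $W$ must coincide with $W$, since proper-factor overlaps are forbidden). By the result of \cite{MyWork}, each class contains at most one overlap-free word, so if a class contains two distinct almost overlap-free words $U\ne V$, at least one of them, say $U$, must itself be an overlap $U=XYXYX$. The goal then reduces to showing $\{U,V\}=\{a^2,a^3\}$ or $\{b^2,b^3\}$.

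I would next prove a rigidity lemma for $U=XYXYX$, starting from the cube case $Y=\lambda$. Writing $X=x_1X'$ with $X'\ne\lambda$, which is possible whenever $|X|\geq 2$, the suffix of $X^3$ of length $3|X|-1$ equals $X'x_1X'x_1X'$ and matches the overlap pattern $AZAZA$ with $A=X'$, $Z=x_1$; being a proper factor of $U$, it contradicts almost overlap-freeness. Hence the only almost overlap-free cubes are $a^3$ and $b^3$, exactly the overlaps inhabiting the two exceptional classes. For the non-cube case $|Y|\geq 1$, longer almost overlap-free overlaps do exist (for instance $ababa$), so no analogous length restriction holds; the task instead becomes to show that the $\sim$-class of such a $U$ meets the set of almost overlap-free words only at $U$ itself.

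The main obstacle is this last step. Two almost overlap-free words in the same class need not be $\pi$-neighbours, and any $\pi$-chain connecting them is forced to pass through intermediate words containing proper overlaps, so a purely local neighbour-by-neighbour analysis cannot close the argument. I anticipate the proof here will appeal to auxiliary machinery developed later in the paper \textemdash{} most plausibly an invariant of $\pi$-chains or a normal-form procedure \textemdash{} that tracks the structural signature of $U=XYXYX$ and forbids reaching a distinct almost overlap-free word within $[U]$; combined with the rigidity lemma above, this yields the theorem outside the two exceptional classes.
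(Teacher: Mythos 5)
Your preliminary steps are correct: the verification that $aa,aaa,bb,bbb$ are almost overlap-free and pairwise equivalent within $\{aa,aaa\}$ and $\{bb,bbb\}$ is fine, and your rigidity lemma for cubes (writing $X=x_1X'$ and exhibiting the proper suffix $X'x_1X'x_1X'$ as an overlap) correctly shows that $a^3$ and $b^3$ are the only almost overlap-free cubes. The dichotomy "an almost overlap-free word is either overlap-free or is itself of the form $XYXYX$" is also valid, and invoking \cite{MyWork} for the overlap-free case is legitimate, since the paper itself builds on that result.

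However, there is a genuine gap, and you name it yourself: the final step is not carried out. After your reduction you must still show that a class containing an almost overlap-free word $U=XYXYX$ with $Y\neq\lambda$ contains no \emph{other} almost overlap-free word --- in particular you must rule out the mixed case where the second word $V$ is overlap-free, which is not covered by \cite{MyWork} at all. This is the entire substance of the theorem, and "an invariant of $\pi$-chains or a normal-form procedure" is a placeholder, not an argument. The paper does not follow your dichotomy; it reformulates the theorem as statement $(*)$ for $r_1$-reduced words (which absorbs the exceptional classes, since $aaa$ and $bbb$ are the only non-$r_1$-reduced almost overlap-free words) and runs a minimal-counterexample descent: given equivalent distinct $r_1$-reduced almost overlap-free words $U\sim V$ of minimal length, it first excludes the finitely many words of $\mathcal{S}_1$ and the letter-alternating words, then uses the tail reduction $r_T$ and Corollary~\ref{TailAOF} to make both words uniform while preserving equivalence, and finally applies $\varphi^{-1}\circ\xi$ (Proposition~\ref{HTXiEta}) to produce an equivalent pair of distinct almost overlap-free words of roughly half the length, a contradiction. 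None of this descent machinery --- which is where all the work lies --- appears in your proposal, so the proof is incomplete.
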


Theorem~\ref{Th1} for overlap-free words was proved in~\cite{MyWork}. %
In this paper the proof from~\cite{MyWork} is simplified and applied to a more general case.\medskip %

The second result of this paper now follows.

\begin{theorem}
\label{Th2} %
If at least one of words $U$ and $V$ is equivalent to an almost overlap-free %
word, then the word problem for the pair $(U,V)$ can be solved in time $O(n)$, where $n=\max\{|U|,|V|\}$.
\end{theorem}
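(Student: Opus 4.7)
My plan is to reduce the word problem for a pair $(U,V)$ satisfying the hypothesis to the computation of a canonical representative for each class. By Theorem~\ref{Th1}, a class different from $[aa]$ and $[bb]$ contains at most one almost overlap-free (AOF) word, so the natural canonical form is the unique AOF member of the class whenever it exists. The theorem will then follow from a linear-time subroutine $\EqAOF$ which, given a word $W$, either outputs the unique AOF word in $[W]$ or declares that none exists. One then runs $\EqAOF$ on both $U$ and $V$, treats the exceptional classes $[aa]$ and $[bb]$ by a direct test (their AOF representatives are exactly $\{aa,aaa\}$ and $\{bb,bbb\}$, so membership is checked in constant time), and concludes $U\sim V$ iff the outputs agree.

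The heart of the argument is the construction of $\EqAOF$. I would design it as a single left-to-right scan that maintains, after reading $W[1\dotso i]$, a word $W_i\in[W[1\dotso i]]$ together with bounded auxiliary data describing the recent suffix of $W_i$. When appending the next letter creates an overlap $XYXYX$ as a proper factor of $W_{i+1}$, the algorithm tries to remove it by a $\pi$-step $X^3\leftrightarrow X^2$ applied at the right end of the buffer; the step is locally dictated by the structure of the newly created overlap. Each modification preserves the class, so the buffer always lies in $[W[1\dotso i+1]]$. If no local $\pi$-step repairs the word, the algorithm halts and reports failure. By Theorem~\ref{Th1} the AOF target is essentially unique, so the choices along the scan are forced; thus the procedure is deterministic, and if it succeeds, its output is necessarily the unique AOF word in $[W]$.

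The main obstacle is establishing the linear running time. A naive analysis yields only $O(n^2)$, because a single letter might trigger a chain of rewrites extending far back into the buffer. To obtain $O(n)$ I would use an amortized (potential-function) argument resting on two structural facts: every $X^3\to X^2$ contraction strictly shortens the buffer, so there are at most $n$ contractions in total; and the look-back length per step is bounded by a constant governed by the sparse repetition structure of cube-free binary words (AOF suffixes support only few short square factors). An expansion $X^2\to X^3$, when it is forced, can be ``paid for'' by subsequent contractions, since repeatedly inflating the same position would contradict the uniqueness in Theorem~\ref{Th1}. Combining these bounds caps the total work by $O(|W|)$.

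With $\EqAOF$ in hand, the assembly is routine. Compute $\EqAOF(U)$ and $\EqAOF(V)$ in time $O(n)$ each. If both succeed, compare the outputs letter by letter in $O(n)$, after flagging the exceptional classes $[aa]$ and $[bb]$; declare $U\sim V$ iff the comparison matches. If only one of the two invocations succeeds, the hypothesis of the theorem places the AOF-equivalent word in that class while the other class contains none, so $U\not\sim V$. This yields the claimed $O(n)$ bound and completes the proof.
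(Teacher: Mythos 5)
Your high-level assembly (compute a canonical AOF representative for each input, compare, handle $[aa]$ and $[bb]$ separately) matches the paper's plan, but the core of the proof --- the construction and correctness of $\EqAOF$ --- is where your proposal has a genuine gap. You propose a single left-to-right scan that repairs each newly created overlap by a \emph{local} $\pi$-step at the right end of the buffer, and you assert that if no local step works the algorithm may correctly report failure, and that Theorem~\ref{Th1} forces the choices. Neither assertion is justified. The congruence $\sim$ is the transitive closure of $\pi$, and there is no confluence or locality result saying that membership of an AOF word in $[W]$ can be detected by greedy local rewrites; equivalent words may be linked only through chains of $\pi$-steps that pass through longer intermediate words and act far from the current scan position. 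Uniqueness of the target (Theorem~\ref{Th1}) does not imply that a greedy procedure converges to it. The paper's actual algorithm works entirely differently: it repeatedly halves the word via $\varphi^{-1}$ after massaging it into a $\varphi$-image (using $r_1$, tail reduction $r_T$, complete reduction $r$, and $\eta$), reaching an ``ancestor'' of constant size in $O(\log|U|)$ rounds, checks the ancestor against a finite list $\mathcal{S}$, and then reconstructs the candidate by applying $\varphi$ back up the series. Its correctness rests on Proposition~\ref{MainProp} and, crucially, on the analysis of \emph{bad pairs} (Lemma~\ref{Zlo}), i.e.\ equivalent uniform words $U\sim V$ with $\eta(U)\not\sim\eta(V)$ --- exactly the situation showing that the reduction is \emph{not} locally forced and must be handled by a special case ($V=Y^2$ versus $Y^3$) inside procedure Normalize. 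Your proposal has no counterpart to this analysis.

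The time analysis is also not salvageable as written. The words being scanned are arbitrary inputs, not cube-free, so ``sparse repetition structure of cube-free binary words'' does not bound the look-back; and ``repeatedly inflating the same position would contradict uniqueness'' is not an argument that bounds the number of expansions. In the paper the linear bound comes for free from the geometric decay $|U_k|\leq 2^{-k+1}|U|$ of the primary series, so that the total work is $\sum_k (C|U_k|+D)=O(|U|)$ --- no amortization is needed. To repair your proof you would essentially have to import the paper's reduction machinery, at which point the scanning algorithm is superfluous.
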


In fact, we will construct Algorithm EqAOF %
(abbr. Equivalent Almost Overlap-Free), which returns the %
almost overlap-free word $V$ that is equivalent to a given input word $U$ %
or reports that no such word $V$ exists. %
It should be mentioned that some %
equivalence classes  contain no almost overlap-free words like %
the class $[ababaa]=(ab)^*ababaa$.

Sketches of the proofs of Theorems~\ref{Th1} and~\ref{Th2} were given in~\cite{Words2007}. %
Here we present a full version of these proofs.

The text is subdivided into six sections. In Sect.~\ref{Prelim} we introduce %
the main tools and techniques. Sect.~\ref{ProofOne} contains the proof of Theorem~\ref{Th1}. %
Last four sections are devoted to the construction and analysis of Algorithm EqAOF.

\section{The main tools and techniques}
\label{Prelim} %
Recall that \textit{Thue-Morse morphism} $\varphi$ %
of $\Sigma^+$ is defined by the rule~\[\varphi(a)=ab,\quad\varphi(b)=ba.\] %
A \textit{$\varphi$-image\/} is any word $U\in\varphi(\Sigma^+)$. %


One can easily check %

\begin{observation}
\label{EvenPhi}
The set $\varphi(\Sigma^+)$ consists exactly of all even-length words such that %
all factors $aa$ or $bb$ start at even positions. %
\end{observation}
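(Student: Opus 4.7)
The plan is to prove both inclusions of the set equality by arguing about length parity and the positions at which equal adjacent letters can occur.

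For the forward inclusion, I would take $U = \varphi(W)$ for some $W \in \Sigma^+$ and decompose $U$ into consecutive blocks of length two, namely $U[1\dotso 2], U[3\dotso 4], \dotsc, U[|U|-1\dotso |U|]$, each of which equals $\varphi(W[i])$ and is therefore either $ab$ or $ba$. This immediately gives $|U| = 2|W|$, which is even. Within any single block the two letters differ, so a factor $aa$ or $bb$ cannot be contained in one block; hence it must straddle two consecutive blocks, which forces its first letter to lie at position $2i$ for some $i$, that is, at an even position.

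For the reverse inclusion, I would take an even-length word $U$ whose $aa$ and $bb$ factors all start at even positions and again partition $U$ into the blocks $U[2i-1\dotso 2i]$ for $i=1,\dotsc,|U|/2$. The key observation is that each such block must consist of two distinct letters: if instead $U[2i-1] = U[2i]$ then $U$ would contain an occurrence of $aa$ or $bb$ starting at the odd position $2i-1$, contradicting the hypothesis. Thus every block is one of $ab$, $ba$, and defining $W \in \Sigma^+$ by $W[i] = a$ when $U[2i-1\dotso 2i] = ab$ and $W[i] = b$ when $U[2i-1\dotso 2i] = ba$ yields $\varphi(W) = U$.

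Neither step presents a real obstacle; this is essentially a parity bookkeeping argument. The only point requiring the slightest care is formulating the condition ``$aa$ or $bb$ starts at an even position'' in a way that cleanly contradicts having equal letters inside a fixed length-two block at an odd starting position, which is handled simply by the block decomposition of $U$.
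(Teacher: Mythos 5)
Your proof is correct and is precisely the routine block-decomposition argument the paper has in mind when it says ``one can easily check'' (the paper gives no written proof of this observation). Both directions are handled completely, so nothing is missing.
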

\medskip

The main idea of our solution to the word problem for an instance $(U,V)$ is to simplify %
and shorten the words $U$ and $V$. Each simplification is either an equivalent transformation of a word or a
simultaneous transformation of a pair of words, preserving their equivalence/non-equivalence. The main instrument
is the function $\varphi^{-1}$ applied to a pair of words. %
This function reduces the length of both words to one half. All other transformations are needed %
to get a pair $(U',V')$ of $\varphi$-images $U'$ and $V'$. These transformations are
\begin{itemize}
\item[--] complete reduction of a word, which is an equivalent transformation on the class of so-called $\widetilde{AB}$-whole words; %
\item[--] tail reductions applied to a pair of words in order to make the words $\widetilde{AB}$-whole;
\item[--] functions $\xi$ and $\eta$ applied to a pair of completely reduced words in order to turn %
them into $\varphi$-images.
\end{itemize}

In this preliminary section we introduce all these transformations and study their properties.

\subsection{Uniformity and complete reduction.}
\label{UnCompleteRed}
Following~\cite{Shur}, we generalize the notion of $\varphi$-image. %
Namely we say that a word $U$ is \textit{uniform\/} if all its factors %
$aa$ or $bb$ start in $U$ either always at even positions or %
always at odd positions. Otherwise a word is called \textit{non-uniform}. %
We call a word \textit{letter-alternating} if it contains no factor $aa$ or $bb$. %
In what follows, $\widetilde{A}$ (resp., $\widetilde{B}$) abbreviates an arbitrary
letter-alternating word of the form $aba(ba)^*$ (resp., $bab(ab)^*$).  %
All letter-alternating words are obviously uniform. %

Uniformity plays a crucial role in subsequent %
considerations. First, uniform words form a vast majority among %
all almost overlap-free words. Second, an important  connection %
between the uniformity, Thue-Morse morphism, and the congruence~$\sim$
was established in~\cite{Bakirov}: %

\begin{proposition}
\label{MainProp} Suppose $W\sim\varphi(U)$ for some $U\in\Sigma^+$ and %
a uniform word $W$. Then there exists a word $V\in\Sigma^+$ such %
that $W=\varphi(V)$ and $V\sim U$.
\end{proposition}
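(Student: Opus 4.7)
The plan is to establish the proposition in two stages: first that $W$ itself belongs to $\varphi(\Sigma^+)$, say $W = \varphi(V)$; then that the preimage $V$ satisfies $V \sim U$.

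For the second stage, once $W = \varphi(V)$ is in hand, the conclusion $V \sim U$ reduces to proving injectivity of $\varphi$ modulo $\sim$, i.e., that $\varphi(V_1) \sim \varphi(V_2)$ implies $V_1 \sim V_2$ for arbitrary $V_1, V_2 \in \Sigma^+$. I would prove this auxiliary statement by showing that a single $\pi$-step $XY^2Z \leftrightarrow XY^3Z$ connecting two $\varphi$-images can always be reparsed as a $\pi$-step $X'(Y')^2Z' \leftrightarrow X'(Y')^3Z'$ in which $X', Y', Z'$ are themselves $\varphi$-images: by the block characterization of Observation~\ref{EvenPhi}, the pairs $W[2k-1]W[2k]$ of a $\varphi$-image belong to $\{ab,ba\}$, and one can shift the boundary of the $Y$-factor within the word so as to align it with these block boundaries (the parity of $|X|$ may force the realignment). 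Applying $\varphi^{-1}$ componentwise then yields a $\pi$-step between the preimages, and concatenation handles an arbitrary chain.

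For the first stage, I would use Observation~\ref{EvenPhi}: $W\in\varphi(\Sigma^+)$ iff $|W|$ is even and every $aa$ or $bb$ factor starts at an even position. Since $W$ is uniform, all its $aa/bb$ factors share the same starting-parity, so the task is to prove that this common parity is even and $|W|$ is even. I would proceed by induction on the length $n$ of an equivalence chain $W = W_0 \sim W_1 \sim \cdots \sim W_n = \varphi(U)$, tracking how each $\pi$-step $XY^iZ \leftrightarrow XY^{i+1}Z$ affects the parity of positions and of $|W|$: steps with $|Y|$ even preserve parities in the suffix $Z$, while steps with $|Y|$ odd flip them; in either case $|W|$ changes by $|Y|$.

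The main obstacle is precisely this parity-tracking argument, because intermediate words $W_i$ generally fail to be uniform, so one cannot naively shorten the chain via induction applied to a suffix. The critical combinatorial input is that the parity-flipping steps (those with $|Y|$ odd) must appear in a compensating configuration within any chain connecting a uniform endpoint to a $\varphi$-image, since otherwise the accumulated parity flip would destroy the uniformity of $W$. Making this precise requires a careful local analysis of how odd-period $\pi$-steps interact with the pattern of $aa, bb$ factors in $W$, and that is where the bulk of the combinatorial work lies.
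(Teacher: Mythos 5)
First, a point of reference: the paper does not prove Proposition~\ref{MainProp} at all --- it is imported verbatim from the cited work of Bakirov and Sukhanov~\cite{Bakirov}, so there is no in-paper proof to measure your attempt against. Judged on its own, your two-stage decomposition (first $W\in\varphi(\Sigma^+)$, then injectivity of $\varphi$ modulo $\sim$) is the right shape, but neither stage is actually carried out, and each has a concrete gap.

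Stage~1 is the heart of the proposition, and you explicitly defer it: the claim that the parity-flipping $\pi$-steps ``must appear in a compensating configuration'' is exactly what needs to be proved, and your own remark that the intermediate words $W_i$ fail to be uniform shows why the naive parity-tracking induction does not close. Saying that ``the bulk of the combinatorial work lies'' there is an admission that the proof is absent, not a proof. Stage~2 has a subtler but equally real gap. Your re-parsing argument is sound for a \emph{single} $\pi$-step both of whose endpoints are $\varphi$-images (the shift by one position works because the occurrence of $Y^2$ at an even position forces $Y[|Y|]Y[1]$ to be a block, so the conjugate period aligns and the exponents $2,3$ are preserved). But $\varphi(V)\sim\varphi(U)$ is witnessed by a chain whose intermediate words are in general neither $\varphi$-images nor even uniform, so ``concatenation handles an arbitrary chain'' does not follow from the single-step case. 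To repair this you would need something like Proposition~\ref{Pi_r} of the paper (equivalent uniform words are linked by a $\pi$-chain staying inside the uniform words), combined with Stage~1 to force every link of that chain to be a $\varphi$-image --- machinery the paper develops for other purposes via Lemma~\ref{Quasi_Pi}, and which your proposal neither invokes nor replaces. As it stands, the proposal is a plausible plan with both of its load-bearing steps missing.
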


According to Proposition~\ref{MainProp}, since Thue-Morse morphism preserves the congruence~$\sim$, %
we have $\varphi(V)\sim \varphi(U)$ if and only if $V\sim U$. %
Thus the word problem for two $\varphi$-images %
$U$ and $V$ can be reduced to the word problem for the pair of shorter %
words $\varphi^{-1}(U)$ and $\varphi^{-1}(V)$. So, we are going to replace considered words %
by $\varphi$-images whenever it is possible.
First, we introduce three reduction operations in order to transform any given word %
to a uniform word (see~\cite{Bakirov}).

The first operation is called \textit{$r_1$-reduction\/}. It reduces all factors of the %
form $c^n$ to $c^2$, where $c\in\Sigma$ and~$n>2$. The result of this %
operation applied to a word $U$ is denoted by $r_1(U)$. We say that %
a word $U$ is \textit{$r_1$-reduced\/} if $U=r_1(U)$. %

Obviously, $U\sim r_1(U)$ for any $U\in\Sigma^+$. Moreover, $r_1$-reduction %
preserves the relation $\pi$ as well: $(U,V)\in\pi$ implies $(r_1(U),r_1(V))\in\pi$ or $r_1(U)=r_1(V)$ %
(see~\cite{Bakirov}). Thus if we denote by $\pi_{r_1}$ and $\sim_{r_1}$ %
the restrictions of relations $\pi$ and $\sim$ respectively to %
the set of all $r_1$-reduced words, then we get $\sim_{r_1}=\pi_{r_1}^+$.
Therefore it is sufficient to solve the word problem %
for $r_1$-reduced words only. For this reason, in the sequel we usually consider %
$r_1$-reduced words. %

Let us denote the class of all $r_1$-reduced words that are equivalent %
to $U$ by $[U]_{r_1}$. Obviously, all almost overlap-free words, except for the words $aaa$ and $bbb$, %
are $r_1$-reduced. Hence, Theorem~\ref{Th1} can %
be reformulated as follows: %
\begin{itemize}%
\item[$(*)$] \textit{for any word $U$, the class $[U]_{r_1}$ contains at most one almost overlap-free word.} %
\end{itemize} %
Actually we will prove Theorem~\ref{Th1} in this form.\medskip

\begin{subequations}\label{rAB}%
Now let $U$ be an $r_1$-reduced word. Then $r_A(U)$ is the word obtained from $U$ by performing all possible
reductions of the form
\begin{equation}
a\widetilde{A}a\quad\to\quad aa. \label{rA} %
\end{equation}
The word $r_B(U)$ is defined in a symmetric %
way using reductions of the form
\begin{equation}
b\widetilde{B}b\quad\to\quad bb\ldotp \label{rB} %
\end{equation}
\end{subequations}

Finally, let $r(U)=r_B(r_A(r_1(U)))$ for an arbitrary word $U\in\Sigma^*$. %
The operation $r$ is called  \textit{complete reduction}. We call a word $U$ %
\textit{completely reduced\/} if $r(U)=U$. As it was shown in~\cite{Bakirov}, the word $r(U)$ %
can be obtained from $r_1(U)$ by performing all possible reductions of the form~\eqref{rA} %
and~\eqref{rB} \textbf{in any order}.

It is easy to check that, in contrast to $r_1$-reductions, some complete reductions %
do not preserve the congruence $\sim$. However, %
as we will see later, under certain conditions even the complete reduction preserves $\sim$. \medskip %

It appears that completely reduced words are exactly uniform ones: %

\begin{proposition}
\label{RegRed} The following three conditions are equivalent for an arbitrary word $U$: %

\textup{(1)}\ $U$ is a factor of a $\varphi$-image;

\textup{(2)}\ $U$ is uniform;

\textup{(3)}\ $U$ is completely reduced.
\end{proposition}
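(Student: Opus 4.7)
The plan is to prove the three-way equivalence by showing (1) $\Leftrightarrow$ (2) via Observation~\ref{EvenPhi} and (2) $\Leftrightarrow$ (3) by direct analysis of when the reductions $r_1$, $r_A$, $r_B$ apply.

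For (1) $\Rightarrow$ (2), if $U=W[i\dotso j]$ with $W\in\varphi(\Sigma^+)$, Observation~\ref{EvenPhi} places every $aa$ or $bb$ factor of $W$ at an even position of $W$, which translates to a position of fixed parity in $U$ (depending only on the parity of $i$). Conversely, given a uniform $U$, I would construct a $\varphi$-image containing $U$ as a factor: if the $aa$/$bb$ factors of $U$ start at odd positions, prepend a letter $c\neq U[1]$, shifting every such factor to an even position without creating a new one at position $1$; if the resulting length is still odd, append a letter different from the current last one. The outcome has even length and all its $aa$/$bb$ factors at even positions, so by Observation~\ref{EvenPhi} it lies in $\varphi(\Sigma^+)$, and $U$ is one of its factors.

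For (2) $\Rightarrow$ (3) I would simply rule out the three reducible patterns inside a uniform word. In $ccc$ the two overlapping $cc$-occurrences sit at consecutive, hence opposite-parity, positions. In $a\widetilde{A}a$ the two $aa$-occurrences start at positions $1$ and $|\widetilde{A}|+1$ of the factor, whose difference $|\widetilde{A}|$ is odd by the definition of $\widetilde{A}$, so their parities differ; the $b\widetilde{B}b$ case is analogous. Hence uniformity blocks every reduction step and $r(U)=U$.

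The main obstacle is (3) $\Rightarrow$ (2). Assuming $U$ is completely reduced but not uniform, I would pick positions $i<j$ at which $aa$ or $bb$ factors start with $i\not\equiv j\pmod 2$ and $j-i$ minimal. Any further $aa$/$bb$-start strictly between $i$ and $j$ would pair with one of $i,j$ of opposite parity and violate minimality, so $U[i+1\dotso j]$ must be letter-alternating. The case $j=i+1$ is impossible, since it would force $U[i\dotso i+2]=ccc$, contradicting $r_1$-reducedness (which is part of complete reducedness). Hence $j-i$ is odd and at least $3$. Taking without loss of generality $U[i\dotso i+1]=aa$, the letter-alternating block $U[i+1\dotso j]$ begins with $a$ and, having odd length, also ends with $a$, so $U[j\dotso j+1]=aa$; then $U[i\dotso j+1]$ has the form $a\widetilde{A}a$, contradicting $r_A$-reducedness. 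The symmetric subcase $U[i\dotso i+1]=bb$ yields a forbidden $b\widetilde{B}b$ factor via $r_B$, completing the contradiction.
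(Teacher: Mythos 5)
Your proof is correct, but it is considerably more self-contained than what the paper does: the paper dispatches the equivalence of (2) and (3) entirely by citation to an earlier reference and declares (1)$\Leftrightarrow$(2) to follow from definitions, offering no written argument. Your (1)$\Leftrightarrow$(2) is the intended definitional argument made explicit, and the padding construction (prepend $c\neq U[1]$ to fix parity, then append a letter distinct from the last one to fix length, then invoke Observation~\ref{EvenPhi}) correctly handles all cases, including letter-alternating words. Your (2)$\Leftrightarrow$(3) replaces the external citation with a direct combinatorial argument, and both directions check out: the parity computation for $a\widetilde{A}a$ (occurrences of $aa$ at positions $1$ and $|\widetilde{A}|+1$ with $|\widetilde{A}|$ odd) is right, and in the harder direction (3)$\Rightarrow$(2) the minimality of $j-i$ does force $U[i{+}1\dotso j]$ to be letter-alternating, the case $j=i+1$ does collapse to a cube, and the odd-length letter-alternating block beginning with $a$ does end with $a$, so the factor at $j$ is necessarily $aa$ and $U[i\dotso j{+}1]$ has the form $a\widetilde{A}a$ with $|\widetilde{A}|=j-i\geq 3$, exactly matching the definition $\widetilde{A}=aba(ba)^*$. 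What your route buys is a proof readable without access to the cited (Russian-language) source; what it costs is length relative to the paper's two-line delegation. One cosmetic point you may wish to make explicit: complete reducedness, defined as $r(U)=U$, is equivalent to the absence of all three patterns $c^3$, $a\widetilde{A}a$, $b\widetilde{B}b$ because each reduction strictly shortens the word; you use this equivalence tacitly in both directions of (2)$\Leftrightarrow$(3).
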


The equivalence of (2) and (3) was proved in~\cite{MyWork}, while the equivalence %
of (1) and (2) follows from definitions.

\subsection{Reduction of $\widetilde{AB}$-whole words. Non-reducible tails.}
\label{RedAB}
So, we can reduce any word $U$ to the uniform word $r(U)$.  %
Proposition~\ref{RedSave} below states that words $U$ and $r(U)$ are equivalent %
under certain conditions. This proposition uses the important notions of  %
$\widetilde{A}$-whole and $\widetilde{B}$-whole words (see~\cite{Bakirov}).
An $r_1$-reduced word $W$ is called \textit{$\widetilde{A}$-whole\/} if every %
factor~$X$ of the form $a\widetilde{A}a$ occurs in~$W$ inside the factor~$abXba$. %
The notion of \textit{$\widetilde{B}$-whole word\/} is dual to %
the above one. If a word is $\widetilde{A}$-whole  and $\widetilde{B}$-whole, %
it will be called  $\widetilde{AB}$-whole. Obviously, any completely reduced word is %
$\widetilde{AB}$-whole. %

In the sequel we often use the following proposition proved in~\cite{Bakirov}.

\begin{proposition}
\label{WholeEqv} %
If $r_1$-reduced words $U$ and $V$ are equivalent and %
$U$ is   $\widetilde{A}$-whole ($\widetilde{B}$-whole), %
then $V$ is $\widetilde{A}$-whole (resp., $\widetilde{B}$-whole) as well.
\end{proposition}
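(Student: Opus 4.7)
The plan is to reduce the statement to a single step of the generating relation and then perform a local case analysis. By the symmetry between the letters $a$ and $b$, I focus on the $\widetilde{A}$-whole version; the $\widetilde{B}$-whole statement will follow by swapping the two letters throughout.

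First, using the identity $\sim_{r_1} = \pi_{r_1}^+$ recalled in Section~\ref{UnCompleteRed}, I would induct on the length of a $\pi_{r_1}$-chain joining $U$ and $V$, so that it suffices to treat the one-step case $(U,V)\in\pi_{r_1}$. Since $\pi$ is symmetric, two subcases arise: inflation $U = XY^2Z \to V = XY^3Z$ and deflation $U = XY^3Z \to V = XY^2Z$. The hypothesis that both $U$ and $V$ are $r_1$-reduced is a strong constraint on $Y$ — in particular $|Y|\geq 2$, and $Y^3$ contains no factor $aaa$ or $bbb$.

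Next, I fix an arbitrary occurrence of a factor $F = a\widetilde{A}a$ inside $V$ and try to exhibit an $abFba$-context around it. If this occurrence, together with the two-letter margins on each side, lies entirely inside the unmodified prefix $X$ or the unmodified suffix $Z$, then the identical letters appear in $U$ in the same configuration, and the $\widetilde{A}$-wholeness of $U$ supplies the required context for free.

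The main work, and the step I expect to be the chief obstacle, is the case in which $F$ overlaps the modified $Y$-block. Here I would exploit the fact that $Y^2$ and $Y^3$ agree on their first $2|Y|$ and last $2|Y|$ letters: any occurrence of $F$ in $V$ that meets the modified region corresponds, after translation by a multiple of $|Y|$, to an occurrence of $F$ in $U$. The rigid shape of $F$, namely $aa(ba)^{n+1}a$, combined with the $r_1$-reducedness of both words, should sharply limit the possible alignments of $F$ with the $Y$-periodic region; a finite enumeration of these alignments should then allow me to transport the $abFba$-context from the matching occurrence in $U$ back to $V$. The deflation subcase runs the same enumeration in reverse.
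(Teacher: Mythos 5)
The paper does not actually prove this proposition: it is imported verbatim from~\cite{Bakirov} (``the following proposition proved in~\cite{Bakirek}''\,---\,sic, the citation is to Bakirov--Sukhanov), so there is no in-paper proof to compare against. The closest internal analogue is the proof of Lemma~\ref{RedTails}, which has exactly the architecture you propose: reduce to a single pair of neighbours via $\sim_{r_1}=\pi_{r_1}^+$, then analyse how the distinguished pattern can sit relative to the block $XY^2/XY^3$, using $r_1$-reducedness to kill most alignments. So your skeleton is the right one, and the reduction to one $\pi_{r_1}$-step, the symmetry reduction to the $\widetilde{A}$ case, and the observation $Y\notin\{a,b,aa,bb\}$ are all correct.

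The gap is in the step you yourself flag as the chief obstacle, and it is not merely unexecuted\,---\,one of its supporting claims is false as stated. You assert that an occurrence of $F=aa(ba)^{n+1}a$ in $V$ meeting the modified block ``corresponds, after translation by a multiple of $|Y|$, to an occurrence of $F$ in $U$.'' That is only true when $F$ lies inside the common prefix $XY^2$ or the common suffix $Y^2Z$. When $F$ properly contains a full copy of $Y$ with margins, that copy forces $Y$ to be letter-alternating of even length sitting in the alternating core of $F$ (since $F$ has exactly two defects $aa$, one at each end), and the corresponding factor of $U$ is then a \emph{different} word $aa(ba)^{n+1\mp k}a$\,---\,shorter or longer by $|Y|$. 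You must separately check that this word still has a nonempty alternating core (it does, because otherwise $U$ would contain $aaa$), so that $U$'s wholeness applies to it at all. Second, even when $F$ does lie entirely in the shared prefix $XY^2$, the required margins $ab\,F\,ba$ need not: if $F$ ends at or one letter before the end of the second copy of $Y$, the two letters following $F$ in $V$ belong to the third copy of $Y$ while in $U$ they belong to $Z$ (or vice versa in the deflation direction), so wholeness applied to the \emph{same} occurrence in $U$ tells you about the wrong letters. One has to first rule out $|F|>2|Y|$ in this configuration (again by counting the defects of $F$ against those of $Y^2$) and then apply wholeness to a second occurrence of $F$ in $U$, shifted by $|Y|$, to read off the correct context. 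These boundary and absorption cases are precisely where the content of the proof lives; deferring them to ``a finite enumeration should then allow me to transport the context'' leaves the argument incomplete.
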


Now we ready to formulate Proposition~\ref{RedSave}.

\begin{proposition}
\label{RedSave} %
Let $U$ be an $\widetilde{AB}$-whole word. %
Then $U\sim r(U)$ if and only if $U$ has no prefix of the form $(aba)(aba)^*(ab)^2(ab)^*aa$ and %
no suffix of the form $aa(ba)^*(ba)^2(aba)^*(aba)$ up to negation. %
\end{proposition}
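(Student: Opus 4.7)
The plan is to prove both directions separately, with the central task being to characterize when a single reduction $ab\cdot a\widetilde{A}a\cdot ba\to abaaba$ (or the symmetric $b\widetilde{B}b$-reduction) inside $U$ is compatible with~$\sim$. Since $r_1$-reductions always preserve $\sim$ \cite{Bakirov}, I may restrict attention to $r_1$-reduced, $\widetilde{AB}$-whole $U$ and analyze the iterated application of reductions (\ref{rA}) and (\ref{rB}). The $a\leftrightarrow b$ symmetry allows me to treat only the $\widetilde{A}$-side forbidden prefix/suffix, with the negated forms following by symmetry.

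For the sufficient direction, I would establish a local equivalence lemma: if $V$ is $r_1$-reduced, $\widetilde{AB}$-whole, and contains $ab\cdot a\widetilde{A}a\cdot ba$ in a position whose surroundings do not create a forbidden prefix/suffix of~$V$, then replacing this factor by $abaaba$ yields a word $\sim V$. Using the identity $\widetilde{A}=aba(ba)^k=(ab)^{k+1}a$, the lemma is proved by induction on~$k$. The base $k=0$ gives $(aba)^3\sim(aba)^2=abaaba$ by a single $\pi$-move. For $k\geq 1$, I combine the $(ab)^2$ occurring inside $\widetilde{A}$ with the neighbouring letters in $V$ supplied by $\widetilde{AB}$-wholeness (e.g., an $a$ to the left of $ab\cdot a\widetilde{A}a\cdot ba$, or a $b$ to the right), building a larger square such as $(abaab)^j$ or $(ab)^j$ that straddles the boundary of the reducible factor. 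A $\pi$-move on this square effectively removes one $ba$-block from $\widetilde{A}$, reducing the problem to the case $k-1$. Iterating the local lemma over all reducible factors of $U$, in any order permitted by the uniformity and wholeness of intermediate words (\cite{Bakirov}, Proposition~\ref{RegRed}), gives $U\sim r(U)$.

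For the necessary direction, I must show that when $U$ has a forbidden prefix or suffix, no chain of $\pi$-moves from $U$ can reach $r(U)$. I would isolate an invariant of $\sim$ that separates the two words. A promising candidate rests on Observation~\ref{EvenPhi} and Proposition~\ref{MainProp}: the forbidden prefix $(aba)^m(ab)^n aa$ (with $m\ge 1,\,n\ge 2$) forces $aa$-occurrences at positions of different parities in a bounded window at the left end of~$U$, while the corresponding window in $r(U)=abaaba\cdots$ contains only a single $aa$ at a fixed parity. Combined with Proposition~\ref{WholeEqv}, which ensures that $\widetilde{A}$-wholeness is preserved along chains of $\sim$-equivalent $r_1$-reduced words, one can trace how $\pi$-moves can and cannot relocate the first $aa$-factor, and verify that the parity mismatch between $U$ and $r(U)$ is an obstruction to any putative chain.

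The main obstacle is precisely this necessary direction. The delicate point is that a $\pi$-chain between $U$ and $r(U)$ may pass through non-$r_1$-reduced intermediates, where the invariants attached to $r_1$-reduced, $\widetilde{AB}$-whole words are not directly available. Overcoming this would likely require either a parity/position invariant robust under arbitrary $\pi$-rewrites, or a recursive normal-form argument using Proposition~\ref{MainProp} to descend along $\varphi^{-1}$ until the forbidden pattern becomes visible at the base of the recursion, where it blocks the final step of the chain.
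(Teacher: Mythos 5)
The necessary direction of your proposal is a genuine gap, not a proof. You describe a hoped-for parity invariant and then correctly identify that you cannot establish it, citing the problem of $\pi$-chains passing through non-$r_1$-reduced intermediates. But that obstacle was already removed in Subsection~\ref{UnCompleteRed}: since $\sim_{r_1}=\pi_{r_1}^+$, any two equivalent $r_1$-reduced words (and $\widetilde{AB}$-whole words are $r_1$-reduced by definition) are joined by a chain of $\pi$-moves staying inside the $r_1$-reduced words, so an invariant only needs to be checked on a single $\pi$-move between $r_1$-reduced neighbours. The paper packages exactly the invariant you are groping for as Lemma~\ref{RedTails}: possession of a non-reducible tail of a given kind is preserved under $\sim$ among $r_1$-reduced words (proved by a case analysis on where the square $Y^2$ sits relative to the prefix $(aba)^n(ab)^maa$). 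Given that lemma, the backward implication is one line: $r(U)$ is uniform, hence has no non-reducible tail, hence neither does $U$. Without Lemma~\ref{RedTails} or an equivalent statement, your proposal does not prove this direction at all.

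The sufficient direction follows the same general strategy as the paper (realize each reduction $a\widetilde{A}a\to aa$ by explicit $\pi$-moves, using the context letters guaranteed by $\widetilde{AB}$-wholeness and the absence of non-reducible tails, and maintain these hypotheses along the way via Proposition~\ref{WholeEqv} and Lemma~\ref{RedTails}), but your specific inductive mechanism is doubtful. To peel one $ba$-block off $\widetilde{A}$ you must change the length by $2$, which forces a period-$2$ cube; one checks that $baba(ab)^kaabab$ contains no factor $(ab)^3$ or $(ba)^3$ for $k\ge 2$, so no single straddling square move of the kind you describe exists. The paper's computation instead pumps the flanking squares $(ba)^2$ and $(ab)^2$ up to exponent $k{+}1$, which creates a cube of the long period $(ab)^ka$, collapses that cube to a square, and then deflates --- removing the whole block $a(ab)^k$ in one round rather than one $ba$ at a time. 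The ordering trick (first eliminate all $aabaa$ and $bbabb$, then the longer factors) is also needed to guarantee the context $X''baba\cdots ababY''$; your appeal to ``any order'' glosses over this.
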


We refer to the prefixes and suffixes mentioned in Proposition~\ref{RedSave} %
as \textit{non-reducible tails} and distinguish four kinds of such tails according to the %
following table:

\centerline{
\begin{tabular}{rcc}
&$A$-tail&$B$-tail\\
left (prefix)&$(aba)(aba)^*(ab)^2(ab)^*aa$&$(bab)(bab)^*(ba)^2(ba)^*bb$\\
right (suffix)&$aa(ba)^*(ba)^2(aba)^*(aba)$&$bb(ab)^*(ab)^2(bab)^*(bab)$
\end{tabular}
}\medskip

In order to prove Proposition~\ref{RedSave}, we need an auxiliary result.


\begin{lemma}
\label{RedTails} %
Let $U$ and $V$ be $r_1$-reduced words, $U\sim V$, and let $U$ have a non-reducible tail. %
Then the word $V$ has a non-reducible tail of the same kind.
\end{lemma}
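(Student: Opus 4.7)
The natural plan is to reduce the lemma to a single elementary step and then to analyse, by a position-based case split, how the step can sit relative to the tail.

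First I would exploit the description $\sim_{r_1}=\pi_{r_1}^+$ to argue by induction on the length of a $\pi_{r_1}$-chain between $U$ and $V$: the base case is $U=V$ and the inductive step reduces to the claim that any single $\pi_{r_1}$-neighbour of a word with a non-reducible tail again has a tail of the same kind. By the two obvious symmetries of the set-up (reversal sends left tails to right tails, and the letter-swap $a\leftrightarrow b$ sends $A$-tails to $B$-tails, both preserving $\pi$ and $r_1$-reducedness), it is enough to treat one kind. I fix the left $A$-tail case: $U$ begins with $T=(aba)^{k}(ab)^{\ell}aa$ for some $k\geq 1$, $\ell\geq 2$, and $V=XY^{\beta}Z$ arises from $U=XY^{\alpha}Z$ with $\{\alpha,\beta\}=\{2,3\}$, both sides $r_1$-reduced.

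Now I would run a case split on how the window occupied by the factor $Y^{\alpha}$ sits relative to the prefix $T$ of length $p=3k+2\ell+2$. If the window lies entirely to the right of position $p$, then $V$ has $T$ itself as a prefix, and we are done. If the window lies inside $T$, I would use the periodic structure of $T$ together with a Fine--Wilf style argument: the block $(aba)^{k}$ has period $3$, the block $(ab)^{\ell}$ has period $2$, and the boundary letters are rigidly prescribed. One checks that $Y=a$ and $Y=b$ are excluded (they would break $r_1$-reducedness after expansion, and $\pi$ never contracts a letter into a shorter word), so any admissible $Y^{\alpha}$ inside $T$ must be compatible with the local period, forcing $Y$ to be a cyclic factor of $aba$, a cyclic factor of $ab$, or a power of one of these. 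The swap then just shifts $k$ or $\ell$ by $\pm 1$, and I have to rule out the collapses $k\to 0$ or $\ell\to 1$ by observing that the boundary letters would create a cube or break the $aa$ at the end of $T$; in every surviving subcase $V$ begins with a tail of the same kind.

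The main obstacle, and where I would spend the bulk of the work, is the straddling case: the window starts inside $T$ and ends inside the continuation of $U$. Here the argument has to combine two pieces of information, namely the period of $Y^{\alpha}$ read from inside $T$ and the constraint that the continuation starts with $b$ (because $T$ ends with $aa$ and $U$ is $r_1$-reduced). I expect that pushing the period of $Y^{\alpha}$ out of $T$ forces the tail of $T$ together with a short window of the continuation to already look like a longer piece of the pattern $(aba)^{\ast}(ab)^{\ast}aa$, so that after the swap the prefix of $V$ is still a left $A$-tail, only with larger or smaller parameters. The verification that no straddling window can destroy the tail outright amounts to examining a short list of overlaps of $Y^{2}$ and $Y^{3}$ with the block $\cdots abab\cdot aa\cdot b$ at the right edge of $T$, which is the only genuinely combinatorial part of the proof.
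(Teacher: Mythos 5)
Your overall strategy --- reduce to a single $\pi_{r_1}$-step via $\sim_{r_1}=\pi_{r_1}^{+}$, use the two symmetries to fix the left $A$-tail case, and then case-split on where the repeated block sits relative to the tail $T=(aba)^{k}(ab)^{\ell}aa$ --- is the same as the paper's. Your inside-$T$ analysis (periods $3$ and $2$, exclusion of short $Y$ by $r_1$-reducedness, $Y$ forced to be a power of a conjugate of $aba$ or of $ab$, and a count to rule out the collapses $k\to 0$, $\ell\to 1$) matches the paper's three sub-cases and is sound in outline.

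The gap is the straddling case, which you correctly single out as the crux but then only conjecture away (``I expect that pushing the period of $Y^{\alpha}$ out of $T$ forces\dots''). The idea you are missing, and the one that makes this case evaporate in the paper, is that $U=XY^{\alpha}Z$ and $V=XY^{\beta}Z$ with $\{\alpha,\beta\}=\{2,3\}$ \emph{share the prefix} $XY^{2}$. Hence if $|XY^{2}|\geq|T|$, then $T$ is a prefix of $XY^{2}$ and therefore of $V$, and you are done with no combinatorics at all; this absorbs every configuration in which the square already reaches the end of $T$, including most of what you call straddling. The only remaining situation is $XY^{2}$ a \emph{proper} prefix of $T$, and there the same periodicity analysis you propose for the interior case shows that the third copy of $Y$ (when present) cannot in fact leave $T$ either: a letter-alternating $Y$ forces $Y^{3}\leq(ab)^{\ell}a$, a $Y$ straddling the block boundary is impossible, and a period-$3$ $Y$ forces $XY^{3}\leq(aba)^{k+1}$, which still lies inside $T$. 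So your ``main obstacle'' is either trivial (covered by the common prefix) or vacuous. As written, your plan leaves its hardest-looking case unproved; once you add the common-prefix observation, the proof closes exactly along the lines you set out.
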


\begin{proof}
It is sufficient to prove the statement of the lemma  for a pair of neighbours. %
Let $U=XY^kZ$, $V=XY^lZ$, where $X,Z\in\Sigma^*$, $Y\in\Sigma^+$, and  $\{k, l\}=\{2, 3\}$. %
Without loss of generality assume that $U$ has a left $A$-tail, that is, %
$U=(aba)^n(ab)^maaT$, where $n\geq 1, m\geq 2$, and $T\in\Sigma^*$. The words $U$ and $V$ have %
the common prefix $XYY$, therefore the proof is evident if $(aba)^n(ab)^maa$ is a prefix %
of the word $XYY$. So suppose that $XYY$ is a proper %
prefix of the word $(aba)^n(ab)^maa$.  Consider all possible cases.

First suppose that $|X|\geq|(aba)^n|$. In this case, $YY\leq (ab)^ma$. %
Hence $Y$ is a letter-alternating word of even length. Since $Y^k$ is also letter-alternating, %
we have $Y^k\leq (ab)^ma$. Then the word $V$ has the prefix $(aba)^nSa$ for some neighbour %
$S$ of the word $(ab)^ma$. Thus $V$ has a  left $A$-tail, as desired.

Now let $|X|<|(aba)^n|$ and $|XY|>|(aba)^n|$. Then $Y$ contains the factor $aa$ obtained %
from the last letter of $(aba)^n$ and the first letter of $(ab)^ma$.  %
On the other hand, the suffix $Y$ of  $XYY$ is letter-alternating as a factor %
of the word $(ab)^ma$. We get a contradiction. Hence, this case is impossible.

Finally, let $|XY|\leq |(aba)^n|$. Then the word $Y$ contains no factors %
$abab$ and $baba$. Thus, $XY^2\leq (aba)^n(aba)$ whence the word $Y^2$ %
also contains no  factors $abab$ and $baba$, in particular, $Y\not\in\{ab, ba\}$.  %
Since the words $U$ and $V$ are $r_1$-reduced, we get $Y\not\in\{a,b,aa,bb\}$. %
Therefore, $|Y|>2$ and the factors $abab$ and $baba$ can occur %
in $Y^k$ inside the factor $Y^2$ only. %
Hence the word $Y^k$ contains no factors $abab$ and $baba$ as well, and we %
have~$XY^k\leq(aba)^{n+1}$. %
So the word $V$ has the prefix $Sb(ab)^{m-2}aa$ for some neighbour $S$ of %
the word $(aba)^{n+1}$. One can easily check that the prefix $Sb(ab)^{m-2}aa$ is a left $A$-tail.
This completes the proof of the lemma.
\end{proof}


\begin{proof}[\indent \textup{\textbf{Proof of Proposition~\ref{RedSave}}}]
Let a word $U$ be $\widetilde{AB}$-whole  and  have no non-reducible tails.
Recall that reductions of kind~\eqref{rA} and~\eqref{rB} can be applied in any order. %
So, to prove the forward implication it is sufficient to find a sequence of reductions %
from $U$ to $r(U)$ such that every single reduction preserves the relation $\sim$. %
Indeed, by Proposition~\ref{WholeEqv}, %
the word obtained by a single reduction will remain $\widetilde{AB}$-whole and, %
by Lemma~\ref{RedTails}, this word will have no non-reducible tails.

We will reduce $U$ as follows: first perform all possible reductions of the form $aabaa\to aa$ %
and $bbabb\to bb$ (in any order) and then all other reductions (also, in any order).

Suppose that $U=XaabaaY$, and $U'=XaaY$ is obtained from $U$ %
by the reduction $aabaa\to aa$. Since $U$ is $\widetilde{A}$-whole, %
we have $U=X'abaabaabaY'$, where $X = X'ab$ and $Y=baY'$. %
Thus we get $U=X'(aba)^3Y'$, $U'=X'(aba)^2Y'$ whence $U\sim U'$. %
A single reduction $bbabb\to bb$ is examined symmetrically.

Now suppose that $U$ contains no factors $aabaa$ and $bbabb$. %
Without loss of generality assume that $U=Xa(ab)^kaaY$, where $k\geq2$, %
and let $U'$ be obtained from $U$ by the reduction $a(ab)^kaa\to aa$, that is, $U'=XaaY$.
Since $U$ is an~$\widetilde{A}$-whole word, we can write $U = X'aba(ab)^kaabaY'$, %
where $X=X'ab$ and $Y=baY'$. If $X'=\lambda$ or $Y'=\lambda$, then $U$ has a non-reducible left %
(resp., right) tail, which is impossible by the conditions of the proposition.
Hence, $X'\neq\lambda$ and $Y'\neq\lambda$. %
Since the word $U$ contains no factors $aabaa$ and $bbabb$, we have $U=X''baba(ab)^kaababY''$. Then
\begin{equation*}
\begin{split}
U=X''(baba)(ab)^ka(abab)Y''&\sim X''(ba)^{k+1}(ab)^ka(abab)^{k+1}Y'' = \\ %
X''b\,(ab)^ka\,(ab)^ka\,(ab)^ka\,bY''&\sim X''b(ab)^ka\,(ab)^kabY''=\\
X''(ba)^{k+1}(ab)^{k+1}Y''&\sim X''babaababY''=U'.%
\end{split}
\end{equation*}
We get $U\sim U'$, as desired. So, the forward implication is proved.

\medskip The backward implication ($r(U)\sim U$ implies that $U$ has no non-reducible tails) %
trivially follows from Lemma~\ref{RedTails}, since $r(U)$ has no non-reducible tails.
\end{proof}

Our prime interest is in the study of equivalence classes that contain %
almost overlap-free words. Since almost overlap-free words have no  non-reducible tails, %
all words from such equivalence classes have no non-reducible tails by Lemma~\ref{RedTails}. %
So, if a word $U$ is $\widetilde{AB}$-whole, we can replace $U$ by %
the word $r(U)$, which is equivalent to $U$ by Proposition~\ref{RedSave}, and thus significantly %
simplify further analysis. The case when a word $U$ is not $\widetilde{AB}$-whole will be %
considered below in Subsection~\ref{NonUnAOF}.


\subsection{Uniform neighbours and quasi-neighbours.}
\label{Quazi}
This subsection is devoted to the study of uniform neighbours.  %
We already know that $\sim_{r_1} = \pi_{r_1}^+$. %
Let us denote the restrictions of the relations $\sim$ and $\pi$ to the set of all uniform words %
by $\sim_r$ and $\pi_r$ respectively. In this subsection we prove the following equality:

\begin{proposition}
\label{Pi_r} %
$\sim_r=\pi_r^+$. %
\end{proposition}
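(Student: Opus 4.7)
The inclusion $\pi_r^+\subseteq\sim_r$ is immediate: every step of a $\pi_r$-chain is a $\pi$-step between uniform words, so the chain's endpoints are $\sim$-equivalent and uniform. For the reverse inclusion, let $U$ and $V$ be uniform with $U\sim V$. Since uniform words are completely reduced and hence $r_1$-reduced, $U\sim_{r_1}V$, and the already established equality $\sim_{r_1}=\pi_{r_1}^+$ produces a chain
\[
U=W_0\pi_{r_1}W_1\pi_{r_1}\cdots\pi_{r_1}W_n=V.
\]
I would prove by induction on $n$ that such a chain may be replaced by one consisting only of uniform words. The base $n=0$ is trivial, and whenever some $W_i$ with $0<i<n$ is itself uniform, I split the chain at $W_i$ and apply the induction hypothesis to the two shorter pieces.

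A short parity lemma is needed. Suppose $W_i$ is uniform and the step $W_i\pi_{r_1}W_{i+1}$ arises from a substitution $XY^kZ\leftrightarrow XY^lZ$ with $\{k,l\}=\{2,3\}$. Then $W_{i+1}$ is also uniform unless $Y\in\widetilde{A}\cup\widetilde{B}$: when $|Y|$ is even, the suffix $Z$ shifts by an even amount and all occurrences of $aa$ and $bb$ retain the parity of their starting positions; when $|Y|$ is odd and $Y$ itself contains some factor $aa$ or $bb$, two translates of that factor appear inside $Y^k$ ($k\geq 2$) at opposite parities, contradicting the uniformity of $W_i$. Combined with $r_1$-reducedness, which forbids $Y\in\{a,b\}$, the only remaining possibility is $Y\in\widetilde{A}\cup\widetilde{B}$.

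The remaining inductive case is a chain in which $W_1,\ldots,W_{n-1}$ are all non-uniform. The parity lemma, applied at both ends, forces $W_0=X\widetilde{Y}^2Z\to W_1=X\widetilde{Y}^3Z$ with $\widetilde{Y}\in\{\widetilde{A},\widetilde{B}\}$ and symmetrically at $W_{n-1}\to W_n$. My plan is to track along the non-uniform segment a distinguished \emph{defect block} $\widetilde{Y}_i^{m_i}$ with $m_i\geq 3$ (an occurrence whose presence certifies the non-uniformity of $W_i$) and to classify the step $W_i\pi_{r_1}W_{i+1}$ as either (a) acting inside the defect block and changing $m_i$ by $\pm 1$, or (b) acting on a factor disjoint from the defect block. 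In case~(b) the same step can be mimicked on the uniform word obtained by contracting the defect block to $\widetilde{Y}_i^2$, and composing these mimicked steps yields the desired $\pi_r$-chain between $U$ and $V$. The main obstacle is handling $\pi_{r_1}$-steps that straddle the boundary of the defect block, or that would create a second defect block; here the rigidity of letter-alternating words, together with $r_1$-reducedness, should suffice to show that every such step either collapses to an (a)+(b) combination on the uniform side or can be avoided by a different choice of chain.
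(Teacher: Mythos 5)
Your forward inclusion and your parity lemma are both correct, and the overall strategy (induct on an $r_1$-linking chain, splitting at uniform intermediate words, so that the hard case is a maximal segment of non-uniform words bracketed by two steps $\widetilde{Y}^2\to\widetilde{Y}^3$ and $\widetilde{Y'}^3\to\widetilde{Y'}^2$ with $\widetilde{Y},\widetilde{Y'}$ odd-length letter-alternating) is a genuinely different route from the paper, which instead applies the complete reduction $r$ to \emph{every} word of the $r_1$-chain and proves that consecutive reduced words are quasi-neighbours (Lemma~\ref{Quasi_Pi}). However, your proof has a genuine gap, and it sits exactly at the point you defer with ``should suffice.'' First, the non-uniformity of an intermediate $W_i$ is not certified by a block of the form $\widetilde{Y}_i^{m_i}$ with $m_i\ge 3$: after the first step the witness is such a block, but a later step whose factor $Y'$ straddles its boundary destroys that shape. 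For example, if $W_0=\dotso ab\,(aba)^2\dotso$ and the first step yields $W_1=\dotso ab\,(aba)^3\dotso = \dotso(ab)^2aabaaba\dotso$, a subsequent step $(ab)^2\to(ab)^3$ produces $\dotso(ab)^3aabaaba\dotso$, whose ``defect'' is now a factor of the form $a\widetilde{A}a$ not equal to any $\widetilde{Y}^3$. So the object you need to track is a reducible factor $a\widetilde{A}a$ or $b\widetilde{B}b$ of variable shape (possibly several of them), not a power of a fixed letter-alternating word, and your dichotomy (a)/(b) does not cover the recurring case where the substituted factor $Y^k$ overlaps the defect.

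Second, and more importantly, even once the defect is correctly identified, contracting it does \emph{not} in general commute with a $\pi_{r_1}$-step up to a single $\pi_r$-step: in the example above, contracting the defect in $\dotso(ab)^2aabaaba\dotso$ and in $\dotso(ab)^3aabaaba\dotso$ yields two uniform words that differ by the exponent of $(ab)$ \emph{and} by the collapsed block, and reconciling them requires extra $\pi$-steps on powers of $ab$ or $ba$ performed on \emph{both} contracted words. This is precisely why the paper is forced to introduce $ab$-neighbours and quasi-neighbours and to carry out the case analysis of Lemma~\ref{Quasi_Pi} (see in particular the case $P\le XY$ with $k\ge 2$ there, where $k-1$ auxiliary $\pi_{ab}$-steps are inserted on each side). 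Your ``mimicking'' plan, as stated, produces no mechanism for these auxiliary steps, so the composed contracted chain is not a $\pi_r$-chain. You would also need to argue that the defect collapsed by the final step $W_{n-1}\to W_n$ is the one you have been tracking (otherwise some intermediate word was uniform, or two defects coexist and the contracted endpoints are not $W_0$ and $W_n$). None of these points is fatal to the strategy, but filling them in amounts to reproving the paper's Lemma~\ref{Quasi_Pi}, so as written the argument is incomplete at its core.
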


First, we give more definitions. We say that words $U$ and $V$ %
are \textit{$ab$-neighbours} and write $(U,V)\in\pi_{ab}$ if one of them has %
the form $X(ab)^2Z$ and the other has the form $X(ab)^3Z$ %
for some $X, Z\in\Sigma^*$. Thus we get the \textit{$ab$-neighbourhood} %
relation, which obviously preserves the uniformity. %
We call words $U$ and $V$ \textit{quasi-neighbours} if %
$U=V$ or there exist two sequences $\{U_i\}_{i=1}^n$ and $\{V_j\}_{j=1}^m$ %
of words such that $U_1=U$, $(U_i, U_{i+1})\in\pi_{ab}$ for each %
$i=1, \dotsc, n{-}1$; $V_1=V$, $(V_j, V_{j+1})\in\pi_{ab}$ %
for each $j=1, \dotsc, m{-}1$; and  $(U_n, V_m)\in\pi$. %

The proof of Proposition~\ref{Pi_r} is based on the following lemma.

\begin{lemma}
\label{Quasi_Pi} %
Let $\widetilde{AB}$-whole words $U$ and $V$ have no non-reducible tails. If $U$ and $V$ are neighbours, %
then the words $r(U)$ and $r(V)$ are quasi-neighbours.
\end{lemma}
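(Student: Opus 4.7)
Write $U = XY^2Z$ and $V = XY^3Z$ (without loss of generality), with $X,Z\in\Sigma^*$ and $Y\in\Sigma^+$. Both $U$ and $V$ are $r_1$-reduced since they are $\widetilde{AB}$-whole, which already rules out $Y\in\{a,b\}$ or $Y\in\{aa,bb\}$. My plan is to show that $r(U)$ and $r(V)$ can be brought, via $\pi_{ab}$-moves applied independently to each, into the form $X'W^2Z'$ and $X'W^3Z'$ for a common $W$, so that the resulting pair lies in $\pi$. Reductions lying entirely inside $X$ or entirely inside $Z$ act identically in $U$ and $V$, so the entire analysis concentrates on the middle $Y^k$ block and the two boundaries $X$-$Y^k$ and $Y^k$-$Z$.

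I would first dispose of the case when $Y$ is letter-alternating. If $|Y|=2$, so $Y\in\{ab,ba\}$, then $Y^k$ together with a cyclic shift of a boundary letter gives $(U,V)\in\pi_{ab}$ directly (possibly after a relabeling), and the reductions either leave $r(U),r(V)$ still in this $\pi_{ab}$ form or collapse both sides through a surrounding wrap $a\widetilde{A}a$ or $b\widetilde{B}b$ into a single shorter pair. If $|Y|\geq 3$ and $Y$ is letter-alternating, then $Y^k$ sits inside a long alternating stretch; by $\widetilde{AB}$-wholeness together with the absence of non-reducible tails this stretch is, in both $U$ and $V$, wrapped into a \emph{single} $a\widetilde{A}a$ or $b\widetilde{B}b$ factor that gets reduced to $aa$ or $bb$ identically on both sides, so either $r(U)=r(V)$ or they differ by a single $ab$-block.

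Next I would treat the general case when $Y$ contains a factor $aa$ or $bb$. Decompose $Y = R_0 Q_1 R_1\cdots Q_m R_m$ uniquely, where each $Q_i\in\{aa,bb\}$ is a maximal constant block and each $R_i$ is letter-alternating (possibly empty). Every reducible maximal alternating run inside $Y^k$ falls in one of three types: (i) runs internal to a single copy of $Y$ (sandwiched between two $Q_i$'s of that copy), (ii) runs at a $Y$-$Y$ junction (formed by concatenating $R_m$ of one copy with $R_0$ of the next, sandwiched between $Q_m$ and $Q_1$), and (iii) runs at the $X$-$Y^k$ or $Y^k$-$Z$ boundary. Type (i) runs are identical in every copy, so they reduce identically in $U$ and in $V$. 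Type (ii) runs are identical at every $Y$-$Y$ junction (one junction in $U$, two in $V$), and by $\widetilde{AB}$-wholeness they are bounded by the same $Q_m$, $Q_1$ context in both words, so again the reductions proceed uniformly. For type (iii), $\widetilde{AB}$-wholeness guarantees that, if the boundary run is reducible, it extends into a complete $a\widetilde{A}a$ or $b\widetilde{B}b$ factor that is bounded by matching letters of $X$ or $Z$ on one side and by $Q_1$ (or $Q_m$) on the other, identically in $U$ and $V$, so it reduces in the same way. Combining, $r(U)$ and $r(V)$ are obtained from $U$ and $V$ by replacing each $Y$ by the same reduced block $W$, yielding $r(U)=X'W^2Z'$ and $r(V)=X'W^3Z'$, which are $\pi$-neighbours and hence quasi-neighbours.

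The main obstacle I anticipate is the boundary analysis of type (iii), which is where the hypothesis of no non-reducible tails pays off: the tail patterns $(aba)(aba)^*(ab)^2(ab)^*aa$ and their symmetric variants are precisely the configurations that would let a reduction propagate asymmetrically into $X$ or $Z$ on one of $U$, $V$ but not the other, destroying the alignment of $W$ on both sides. Excluding these patterns reduces the boundary analysis to a short finite check on the last letters of $X$ and the first letters of $Z$ against $R_0, R_m, Q_1, Q_m$. I also expect Lemma~\ref{RedTails} to be invoked implicitly to ensure that intermediate words appearing along any partial reduction sequence still satisfy the no-tail hypothesis, so that the case analysis can be iterated cleanly.
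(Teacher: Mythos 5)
Your overall framework --- pre-reduce the constituent pieces, then classify the surviving reducible factors by whether they lie inside a copy of $Y$, at a $Y$--$Y$ junction, or at the $X$--$Y^k$ and $Y^k$--$Z$ seams, with letter-alternating $Y$ treated separately --- is essentially the paper's. But your execution of the boundary case (your type (iii)) has a genuine gap, and it is exactly the point of the lemma. When a reducible factor $P=aX_1Y_1a$ straddles the seam, with $aX_1$ a suffix of $X$, $Y_1a$ a prefix of $Y$, and $X_1Y_1=(ab)^ka$, the reduction $P\to aa$ consumes a suffix of $X$ \emph{together with} a proper prefix of the first copy of $Y$. The result is therefore \emph{not} obtained by ``replacing each $Y$ by the same reduced block $W$'': writing $Y=Y_1aY'$ and (using $\widetilde{AB}$-wholeness plus the absence of a non-reducible left tail) $X=X''babaX_1$, one gets $r(U)=X''babaaY'YZ$ and $r(V)=X''babaaY'YYZ$, and for $k\ge 2$ these are not $\pi$-neighbours, because the surviving copies of $Y$ no longer sit after a common prefix ending in a full copy of $Y$. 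To repair the alignment one must apply to each word a chain of $\pi_{ab}$-moves pumping $X''b(ab)^{2}a\cdots$ up to $X''b(ab)^{k+1}a\cdots$, thereby regenerating $X_1Y_1$ and landing on the genuine neighbours $X''bX_1Y^2Z$ and $X''bX_1Y^3Z$. You announce such $\pi_{ab}$-moves in your plan but never construct them; your final conclusion that $r(U)$ and $r(V)$ are $\pi$-neighbours is false in general --- if it were true, the lemma could have been stated with ``neighbours'' instead of ``quasi-neighbours''.

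Two secondary points. First, the no-non-reducible-tails hypothesis is not just a device for ``a short finite check'' at the boundary: in the case $k\ge2$ it is used to rule out $X'=\lambda$ in $X=X'abaX_1$ (otherwise $U$ and $V$ would begin with the non-reducible tail $(aba)(ab)^kaa$), which is what forces $X=X''babaX_1$ and gives the $\pi_{ab}$-chain the letters it needs to anchor on. Second, a junction reduction (your type (ii)) changes the repeated block from $Y$ to the shorter word $Y'$ with $Y=Y_1Y'Y_2$, so even there the slogan ``each $Y$ is replaced by the same $W$'' is not literally what happens, although in that case the reduced words do turn out to be neighbours.
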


\begin{proof}
Without loss of generality let $U=XYYZ$ and $V=XYYYZ$ %
for some $X, Z\in\Sigma^*$ and $Y\in\Sigma^+$. %
First, we apply the $r$-reduction to the words $XY[1]$, $Y$, and $Y[|Y|]Z$. %
Obviously, $r(XY[1])=X'Y[1]$ and $r(Y[|Y|]Z)=Y[|Y|]Z'$ for %
some $X', Z'\in\Sigma^*$, and $r(Y)$ begins with $Y[1]$ and %
ends by $Y[|Y|]$. Thus we obtain the neighbours $X'r(Y)r(Y)Z'$ and %
$X'r(Y)r(Y)r(Y)Z'$. Clearly, these words are $\widetilde{AB}$-whole %
and have no non-reducible tails. So, to simplify notation we may assume  that %
the words $XY[1]$, $Y$, and $Y[|Y|]Z$ are already $r$-reduced. Suppose that %
at least one of the words $U$ and $V$ is not $r$-reduced, and a word $P$ %
is its factor of the form $a\widetilde{A}a$ or $b\widetilde{B}b$. %
Without loss of generality we assume that $P=a(ab)^kaa$ for some integer $k>0$.
The word $P$ either contains %
$Y$ as internal factor or occurs inside the factors %
$XY$, $YY$, or $YZ$. Consider first the cases when $Y\ll P$ or $P\leq YY$. %

Suppose $Y\ll P$. Then the word $Y$ is letter-alternating. %
If $Y$ has odd length, then the word $V=XYYYZ$ contains the factor %
$Y[|Y|]YY[1]$ of the form $a\widetilde{A}a$ or $b\widetilde{B}b$.  %
The reduction of this factor turns $V$ into $U$, and the lemma readily follows. %
If the length of $Y$ is even, then the words $YY$ and %
$YYY$ are also letter-alternating. Hence the factor $P$ begins %
inside $X$ and ends inside  $Z$, that is, $P=X'Y^kZ'$, where $X'$ is %
a~suffix of $X$, $Z'$ is a~prefix of $Z$, and either $k=2$ and $P\leq U$ %
or $k=3$ and $P\leq V$. %
Clearly, the words $X'YYZ'$ and $X'YYYZ'$ both have the form %
$a\widetilde{A}a$. Thus we reduce the factor $X'YYZ'$ in $U$ and the factor $X'YYYZ'$ in $V$ %
to get two identical words and prove the lemma.%

In the case $P\leq YY$, we have $P = aY_2Y_1a$, where the words $Y_1a$ and $aY_2$ %
are respectively a prefix and  a suffix of $Y$ such that $Y_2Y_1 = (ab)^ka$. %
Obviously, the prefix $Y_1$ of the word $Y$ %
does not overlap the suffix $Y_2$ of $Y$ (otherwise $Y_1$ contains %
the factor $aY_2[1] = aa$, which is impossible). Thus, $Y=Y_1Y'Y_2$ for %
some $Y'\in\Sigma^*$. We have \[U=XY_1Y'(Y_2Y_1)Y'Y_2Z \text{ and }  %
V=XY_1Y'(Y_2Y_1)Y'(Y_2Y_1)Y'Y_2Z, \] and after the reduction $P\to aa$ we %
get the neighbours $XY_1Y'Y'Y_2Z$ and $XY_1Y'Y'Y'Y_2Z$.

Now suppose that the word $YY$ is already $r$-reduced and %
$Y$ is not an internal factor of $P$. Then $P$  occurs into both words $U$ and $V$ %
inside the factor $XY$ or $YZ$. First, assume  %
that $YZ$ is $r$-reduced. Then $P\leq XY$ and $P$ is a %
unique factor of the form $a\widetilde{A}a$ or $b\widetilde{B}b$ in %
$U$ and $V$. So, $P=aX_1Y_1a$, where $aX_1$ is a suffix of $X$, %
$Y_1a$ is a prefix of $Y$, $X_1\in\Sigma^*$, $Y_1\in\Sigma^+$, and %
$X_1Y_1=(ab)^ka$. Let $Y = Y_1aY'$ for some $Y'\in\Sigma^*$. %
Since the words $U$ and $V$ are $\widetilde{AB}$-whole, we %
have $X = X'abaX_1$ for some $X'\in\Sigma^*$. %
If $k=1$, then \[U=X'aba(aba)aY'YZ \text{ and } V=X'aba(aba)aY'YYZ.\] %
After the reduction $P\to aa$ we obtain the neighbours %
\[X'\underbrace{aba}_{X_1Y_1}aY'YZ = X'X_1\underbrace{Y_1aY'}_YYZ = X'X_1YYZ\]  and  %
\[X'\underbrace{aba}_{X_1Y_1}aY'YYZ = X'X_1\underbrace{Y_1aY'}_YYYZ = X'X_1YYYZ.\] %
In the case $k\geq 2$, we have $X'\neq\lambda$ (if $X'=\lambda$, then both words $U$ and $V$  %
have the non-reducible tail $(aba)(ab)^kaa$, in contradiction with the lemma's condition). %
Since $X$ is $r$-reduced, the last letter of $X'$ cannot be equal %
to $a$, therefore $X=X''babaX_1$. After the reduction $P\to aa$ we %
obtain the quasi-neighbours %
\[U'=X''babaaY'YZ \text{ and } V' = X''babaaY'YYZ.\] %
Indeed, if we put \[U_i = X''ba(ba)^iaY'YZ \text{\, and\, } %
V_i = X''ba(ba)^iaY'YYZ\] %
for $i=1, \dotsc, k$, we  get $U_1=U'$, $V_1=V'$, $(U_i, U_{i+1})\in\pi_{ab}$, %
$(V_i, V_{i+1})\in\pi_{ab}$ for each $i=1, \dotsc, k{-}1$, and %
the words \[U_k = X''b\underbrace{a(ba)^k}_{X_1Y_1}aY'YZ = X''bX_1\underbrace{Y_1aY'}_YYZ = %
X''bX_1YYZ\] and \[V_k =  X''b\underbrace{a(ba)^k}_{X_1Y_1}aY'YYZ = %
X''bX_1\underbrace{Y_1aY'}_YYYZ = X''bX_1YYYZ\] are neighbours. %
So, since $U'=r(U)$ and $V' = r(V)$, the words $r(U)$ and $r(V)$ are quasi-neighbours, as desired.

The case $P\leq YZ$ is considered in the same way. Finally, %
if  both words $XY$ and $YZ$ are not $r$-reduced, %
then there exist two words $P_1$ and $P_2$ of the form %
$a\widetilde{A}a$ or $b\widetilde{B}b$ such that $P_1\leq XY$ %
and $P_2\leq YZ$. In this case, after the reductions of $P_1$ and $P_2$ %
we obtain the words $r(U)$ and $r(V)$, which appear to be %
quasi-neighbours. Indeed, we can construct two sequences of words, %
instead of the one in the previous case, in the same way, increasing degree of the factor %
$ab$ or $ba$ in the prefix $r(XY)$ and in the suffix $r(YZ)$ of the words $r(U)$ and %
$r(V)$. These sequences obviously satisfy all conditions %
from the definition of quasi-neighbours. %
This completes the proof. %
\end{proof}

We should mention that a weaker version of Lemma~\ref{Quasi_Pi} for the case when $U$ and $V$ %
are equivalent to some $\varphi$-images was proved in~\cite{Bakirov}. \medskip%

\begin{proof}[\indent \textup{\textbf{Proof of Proposition~\ref{Pi_r}}}]
We say that a sequence  $\{R_i\}_{i=0}^n$ of words is called %
a \textit{linking $(U, V)$-sequence} if $R_0=U$, $R_n=V$, and $(R_{i-1}, R_i)\in\pi$ %
for each $k=1, \dotsc, n$. If all the words $R_i$ are $r_1$-reduced ($r$-reduced), we call such %
a sequence an \textit{$r_1$-linking} (resp., an \textit{$r$-linking}) \textit{$(U, V)$-sequence}. %
In these terms, Proposition~\ref{Pi_r} means that two uniform words are equivalent %
if and only if there exists an $r$-linking $(U, V)$-sequence. %

If there exists an $r$-linking $(U, V)$-sequence, then $U\sim V$ and both words $U$ and %
$V$ are uniform. Conversely, let $U$ and $V$ be uniform words and $U\sim V$. %
Then there exists an $r_1$-linking  $(U, V)$-sequence $\{W_k\}_{k=0}^n$. %
Since the words $U$ and $V$ are uniform, all words %
$W_k$ are $\widetilde{AB}$-whole and have no non-reducible tails. %
Consider the sequence $\{W'_k = r(W_k)\}_{k=0}^n$. We have
$W'_0=U$, $W'_n=V$,  and, by Lemma~\ref{Quasi_Pi}, %
the words $W'_{k-1}$ and $W'_k$ are uniform quasi-neighbours for each $k = 1, \dotsc, n$. %
So, the sequence $\{W'_k\}_{k=0}^n$ is an $r$-linking $(U, V)$-sequence, %
as desired.
\end{proof}


\subsection{Non-uniform almost overlap-free words. Non-uniform tails.}
\label{NonUnAOF}%
All results mentioned in this subsection were proved in~\cite{MyWork}. %
In the sequel, we often make use of the negation operation $\overline{\phantom{A}}$, %
which is a unique nontrivial automorphism of $\Sigma^*$. For example,
$\varphi(\overline{W})=\overline{\varphi(W)}$. We also write %
$\overline{\mathcal{L}}=\{\overline{W}\mid W\in\mathcal{L}\}$ for any $\mathcal{L}\subset\Sigma^*$.
\medskip

Suppose that a word $U$ is not %
$\widetilde{AB}$-whole and an almost overlap-free word $V$ is equivalent to $U$. %
Then $V$ is not $\widetilde{AB}$-whole as well by Proposition~\ref{WholeEqv}; %
in particular, $V$ is non-uniform. Define
\begin{multline*}
\mathcal{A}_1=\{aabaa, aabaab, baabaa, baabaab, aabaabb, \\ bbaabaa, aabaaba, abaabaa, aabaabbaabaa\},
\end{multline*}
and let $\mathcal{S}_1 = \mathcal{A}_1\cup\overline{\mathcal{A}_1}$. %
The non-uniform almost overlap-free words can be characterized as follows.

\begin{proposition}
\label{NonReg} %
Let $V$ be  a non-uniform $r_1$-reduced almost overlap-free word. %
Then at least one of the following conditions holds:

\textup{1)} $V\in\mathcal{S}_1$; %

\textup{2)} Up to negation, $V$ has the prefix $aabaabba$  or the suffix $abbaabaa$. %
\end{proposition}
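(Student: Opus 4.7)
The strategy is to locate a short ``parity-breaking'' factor inside $V$ and then extend it letter by letter, using $r_1$-reducedness (no $aaa$ or $bbb$) and the prohibition of proper factors of the form $XYXYX$ to force the structure.

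First I locate the seed factor. Non-uniformity supplies two doubles (each of type $aa$ or $bb$) at positions of opposite parity. Among such pairs I choose $(i,j)$ with $i<j$ such that no double of parity opposite to $i$ occurs strictly between them. Then the interval $V[i+2\ldots j-1]$ contains only doubles of the same parity as $i$, and a short case distinction using $r_1$-reducedness together with the forbidden overlap $ababa = XYXYX$ ($X=a,\ Y=b$) shows that $j-i = 3$ and $V[i\ldots i+4] \in \{aabaa,\ bbabb\}$. Up to negation I may assume $aabaa$ occurs in $V$ and write $V = L\cdot aabaa\cdot R$ for a fixed occurrence.

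Next I extend the seed in both directions. The letter immediately after $aabaa$ cannot be $a$ (else $aaa$), so either $R=\lambda$ or $R$ begins with $b$. After $aabaab$, the next letter cannot be $a$ unless the word stops there, because $aabaaba$ is itself an overlap ($X=a,\ Y=ab$) and hence forbidden as a \emph{proper} factor. So the forced continuation is $aabaabb$, and then $aabaabba$ (no $bbb$). The symmetric analysis of $L$ yields the mirror chain $aabaa\leftarrow baabaa\leftarrow bbaabaa\leftarrow abbaabaa$. At each step I verify that no new proper-factor overlap is created; the critical obstructions that block ``mixing'' of nontrivial left and right continuations once the chain is long are $baabaab = XYXYX$ ($X=b,\ Y=aa$) and $abaabaa = XYXYX$ ($X=a,\ Y=ba$).

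Finally I assemble the cases. If neither side triggers the full chain -- the right extension stops by $aabaabb$ and the left extension stops by $bbaabaa$ -- then the length of $V$ is bounded and an enumeration places $V$ in the explicit list $\mathcal{A}_1$; the word $aabaabbaabaa$ arises as the unique case in which both sides extend maximally without triggering one of the forbidden factors above. Otherwise the chain propagates on at least one side, and the ``mixing'' obstructions force the opposite side to be empty, yielding $V$ with prefix $aabaabba$ (right-extension branch) or with suffix $abbaabaa$ (left-extension branch). Negation covers the $bbabb$-seed case.

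The principal difficulty is disciplined bookkeeping: several overlap patterns ($ababa$, $aabaaba$, $abaabaa$, $baabaab$, $aabaabbaabaa$) must be tracked simultaneously, and one has to distinguish overlaps admitted as entire words from those forbidden only as proper factors -- this is exactly why $aabaaba$ and $abaabaa$, which are themselves overlaps, appear in $\mathcal{A}_1$. Once this bookkeeping is in place, the case analysis becomes mechanical.
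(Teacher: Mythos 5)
The paper itself does not prove Proposition~\ref{NonReg}: the whole subsection is quoted from~\cite{MyWork}, so there is no in-paper argument to compare against. Your proof is correct and self-contained, and all the overlap identifications you rely on ($ababa=a\cdot b\cdot a\cdot b\cdot a$, $aabaaba$ with $X=a$, $Y=ab$, $abaabaa$ with $X=a$, $Y=ba$, $baabaab$ with $X=b$, $Y=aa$) are accurate. Two small points of tightening. First, in the seed step you should choose the two opposite-parity doubles at positions $i<j$ so as to minimize $j-i$; then minimality excludes \emph{all} doubles strictly between them (an intermediate double of either parity would pair with $i$ or with $j$ to give a closer opposite-parity pair), so $V[i+1\dotso j]$ is letter-alternating of odd length, forcing both doubles to be over the same letter and the factor to be $a(ab)^{m}aa$ up to negation; the proper factor $ababa$ then excludes $m\geq2$, leaving $aabaa$. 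Your phrasing, which only excludes intermediate doubles of parity opposite to $i$, does not immediately give the letter-alternating interval. Second, you need not single out $aabaabbaabaa$ as a terminal case of the enumeration: any word with prefix $aabaabba$ already satisfies condition 2), and that word sits in $\mathcal{A}_1$ for reasons external to this proposition (it is the exceptional class in Proposition~\ref{TailReg}). With those adjustments the case analysis is complete: both sides nonempty forces $V=baabaab$; one side empty and the other of length $0,1,2$ yields the remaining words of $\mathcal{A}_1$ (with $aabaaba$ and $abaabaa$ occurring only as entire words, since they are overlaps); and length at least $3$ on one side forces the prefix $aabaabba$ or the suffix $abbaabaa$ up to negation.
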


Since all words from $\mathcal{S}_1$ are almost overlap-free, not uniform, %
and not $\widetilde{AB}$-whole, we immediately get the following corollary of Proposition~\ref{NonReg}.

\begin{corollary}
\label{ABUn} %
Any almost overlap-free word is $\widetilde{AB}$-whole if and only if it is uniform.
\end{corollary}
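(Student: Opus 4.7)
The plan is to establish the two directions separately. The backward direction is essentially immediate and does not even require the almost overlap-freeness hypothesis: if $U$ is uniform, then $U$ is $r_1$-reduced (a factor $aaa$ would force consecutive occurrences of $aa$ at positions of opposite parity, violating uniformity), and by Proposition~\ref{RegRed} $U$ is completely reduced; the paper has already remarked that every completely reduced word is $\widetilde{AB}$-whole, so this direction is done.

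For the forward direction I argue contrapositively: suppose $U$ is almost overlap-free and non-uniform, and show that $U$ is not $\widetilde{AB}$-whole. First, handle the non-$r_1$-reduced case. Since the pattern $aaa$ has the form $XYXYX$ with $X=a$ and $Y=\lambda$, almost overlap-freeness forbids $aaa$ (and symmetrically $bbb$) as a \emph{proper} factor. Hence if $U$ is not $r_1$-reduced it must equal $aaa$ or $bbb$, and these are neither uniform nor $\widetilde{AB}$-whole (the latter notion being defined only on $r_1$-reduced words), so the equivalence holds trivially for them.

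It remains to treat the case where $U$ is $r_1$-reduced, almost overlap-free, and non-uniform. Proposition~\ref{NonReg} gives two possibilities. In case~1, $U \in \mathcal{S}_1$, and inspection of the 18 explicitly listed short words confirms that none is $\widetilde{AB}$-whole; this is precisely the remark preceding the corollary. In case~2, up to negation $U$ has prefix $aabaabba$ or suffix $abbaabaa$. The key observation is that $aabaa$ is of the form $a\widetilde{A}a$ with $\widetilde{A}=aba$, and in both subcases this occurrence sits at an extremity of $U$: as a prefix it cannot be preceded by $ab$, and as a suffix it cannot be followed by $ba$. Hence $U$ fails to be $\widetilde{A}$-whole, while the negated instances fail to be $\widetilde{B}$-whole for symmetric reasons. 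The only modest obstacle is verifying the $\mathcal{S}_1$ case, but this is a finite inspection handled once and for all; the genuine insight is the extremity argument in case~2, which exploits the fact that the witnessing $a\widetilde{A}a$-factor cannot be enclosed in the required $ab\cdot a\widetilde{A}a\cdot ba$ context.
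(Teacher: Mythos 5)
Your proposal is correct and follows essentially the same route as the paper, which derives the corollary immediately from Proposition~\ref{NonReg} together with the observation that the words of $\mathcal{S}_1$ are not $\widetilde{AB}$-whole; your extremity argument for case~2 (an occurrence of $aabaa$ at a prefix or suffix position cannot sit inside $ab\cdot aabaa\cdot ba$) and your backward direction via Proposition~\ref{RegRed} are exactly the details the paper leaves implicit.
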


The $r_1$-reduced words that are equivalent to one of the prefixes or suffixes mentioned in %
Proposition~\ref{NonReg}, 2) will be called \textit{non-uniform tails}. %
There are four kinds of such tails:\\[3pt]
\centerline{
\begin{tabular}{rcc}
&left&right\\
$A$-tail&$(aab)^*(aab)^2ba$&$ab(baa)^2(baa)^*$\\
$B$-tail&$(bba)^*(bba)^2ab$&$ba(abb)^2(abb)^*$
\end{tabular}
}\\[2pt]

The \textit{tail reduction\/} operation $r_T$, defined in~\cite{MyWork}, %
reduces a left non-uniform tail of any word  to the last 7 symbols, and a right non-uniform %
tail to the first 7 symbols:\\[3pt]
\centerline{
\begin{tabular}{rcc}
&left&right\\
reduced $A$-tail&$abaabba$&$abbaaba$\\
reduced $B$-tail&$babbaab$&$baabbab$
\end{tabular}
}\\[2pt]

Note that a tail of an almost overlap-free word, if any, has  %
exactly 8 symbols. Hence the operation $r_T$ deletes exactly one letter from such a tail.

In addition to the function $r_T$, we use the functions $r^l_T$ and $r^r_T$, which reduce %
a left (resp., right) non-uniform tail of any given word to the last (resp., first) 7 symbols. %
If a word $W$ has both left and right tails, then these tails have at most 6 symbols in common %
(up to negation, for the word $W = (aab)^*aabaabbabb(abb)^*$). Hence %
the operation $r^l_T$ preserves the right tail of $W$ and $r^r_T$ preserves the left tail %
of $W$. Thus we have $r_T(W)=r^r_T(r^l_T(W)) = r^l_T(r^r_T(W))$ for any word $W$.\medskip

We make one easy observation.

\begin{observation}
\label{TailRedInv} %
 Let $U$ and $V$ be $r_1$-reduced words and $r_T(U)\sim r_T(V)$. Then $U\sim V$.
\end{observation}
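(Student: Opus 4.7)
My plan is to reduce the observation to two separate facts and then combine them via the congruence of~$\sim$.

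The first fact is an \emph{inflation identity}: for any $r_1$-reduced word $W$ I would show
\[
W \;\sim\; \ell_W \cdot r_T(W) \cdot \rho_W,
\]
where $\ell_W,\rho_W \in \{a,b,\lambda\}$ are determined by the types of $W$'s left and right non-uniform tails. If, for instance, $W$ has a left $A$-tail, then $W = (aab)^k ba Z$ with $k\geq 2$; iterating the $\pi$-move with $Y = aab$ yields $(aab)^k \sim (aab)^2$, hence
\[
W \;\sim\; (aab)^2 ba Z \;=\; a \cdot abaabba \cdot Z \;=\; a \cdot r^l_T(W),
\]
so $\ell_W = a$ in this case. The three other tail types are handled by the same argument up to symmetry, and a side without a tail simply contributes $\lambda$, since $r_T$ leaves that side unchanged.

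The second fact is that the hypothesis $r_T(U) \sim r_T(V)$ forces $\ell_U = \ell_V$ and $\rho_U = \rho_V$. I would derive this from a non-uniform counterpart of Lemma~\ref{RedTails}: namely, that $\sim$-equivalent $r_1$-reduced words share the same non-uniform tail types on each side. The proof of such a lemma runs parallel to that of Lemma~\ref{RedTails} --- reduce to a single neighbour pair $(XY^2Z, XY^3Z)$ and perform a case analysis, depending on how the factor $Y^2$ or $Y^3$ overlaps with the putative tail, to show that a single $\pi$-move can neither destroy nor create a non-uniform tail of a given type.

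Once both facts are in place, the observation follows immediately from the congruence of~$\sim$:
\[
U \;\sim\; \ell_U \cdot r_T(U) \cdot \rho_U \;\sim\; \ell_V \cdot r_T(V) \cdot \rho_V \;\sim\; V.
\]
The main obstacle is the tail-invariance fact of the second step. The analysis is analogous to Lemma~\ref{RedTails}, but with the pattern $(aab)^k ba$ (and its three relatives) replacing $(aba)^n(ab)^m aa$; in particular, one must handle the fact that the tail length is unbounded in~$k$, and that short $\pi$-moves near the boundary between the tail and the rest of the word could a priori disrupt the tail structure. Given the techniques already developed in Section~\ref{Prelim} and in~\cite{MyWork}, the case analysis should be routine, and the inflation identity together with the final chain of equivalences is essentially mechanical.
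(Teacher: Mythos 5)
The paper gives no proof of this observation (it is presented as ``easy''), so there is no argument of the authors' to compare yours against; I can only assess the proposal on its own terms. Your first step, the inflation identity $W\sim \ell_W\, r_T(W)\, \rho_W$, is correct and is surely the intended substance of the observation: for a left $A$-tail, $W=(aab)^kbaZ\sim(aab)^2baZ=a\cdot r^l_T(W)$ since $(aab)^k\sim(aab)^2$ for $k\geq 2$, and the other three kinds are symmetric. Also note that the ``non-uniform counterpart of Lemma~\ref{RedTails}'' you propose to prove is already available in the paper as Proposition~\ref{TailReg}, 1), so there is no need to redo that case analysis.

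The genuine gap is in your second step, and it cannot be repaired. Proposition~\ref{TailReg}, 1) applied to the hypothesis $r_T(U)\sim r_T(V)$ only tells you that $r_T(U)$ and $r_T(V)$ have non-uniform tails of the same kinds --- but neither of them has any, because $r_T$ destroys the tails (the reduced tail $abaabba$ is not itself a non-uniform tail). So the hypothesis gives no information about the tail types of $U$ and $V$ themselves, and $\ell_U=\ell_V$ does not follow: $r_T$ identifies a word carrying a minimal ($8$-letter) tail with a word carrying no tail at all. Concretely, take $U=abaabba$ and $V=aabaabba=(aab)^2ba$. Then $U$ has no non-uniform tail while $V$ has a left $A$-tail, so $r_T(U)=r_T(V)=abaabba$ and the hypothesis holds trivially, yet $\ell_U=\lambda\neq a=\ell_V$. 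Worse, $U\not\sim V$: the only factors of $abaabba$ of the form $Y^2$ are $aa$ and $bb$, and it has no factor $Y^3$, so it has no $r_1$-reduced $\pi$-neighbour whatever; since $\sim_{r_1}=\pi_{r_1}^+$, we get $[abaabba]_{r_1}=\{abaabba\}$, whereas $[aabaabba]_{r_1}=(aab)^*(aab)^2ba$. Thus this pair refutes not only your step 2 but the observation as literally stated. Both the statement and your proof become correct under the additional hypothesis that $U$ and $V$ have non-uniform tails of the same kinds on each side (which is the situation in which the observation is actually used: procedure Ancestor records the deleted letters in the arrays $L$ and $R$, so the tail kinds are known to agree); with that hypothesis, your inflation identity and the final chain of equivalences finish the argument. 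You should state that hypothesis explicitly rather than try to derive it from $r_T(U)\sim r_T(V)$.
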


Note that the inverse is not true in general. %
Surprisingly, under certain conditions, the operation $r_T$ %
preserves the congruence $\sim$.  %

\begin{proposition}
\label{TailReg} Suppose that $U\sim V$ for $r_1$-reduced words $U$ and $V$. Then

\textup{1)}\ $U$ and $V$ have non-uniform tails of the same kind, if any;

\textup{2)}\ If $U, V\not\in [aabaabbaabaa]_{r_1}\cup[bbabbaabbabb]_{r_1}$ and %
at least one of the words $r_T(U)$ and $r_T(V)$ is $\widetilde{AB}$-whole, then $r_T(U)\sim r_T(V)$.
\end{proposition}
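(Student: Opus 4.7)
For part~1, I would mimic the strategy of Lemma~\ref{RedTails}. Since $\sim_{r_1}$ equals $\pi_{r_1}^+$, it suffices to verify the claim for a single neighbour pair $(U,V)\in\pi_{r_1}$, writing $U=XY^kZ$, $V=XY^lZ$ with $\{k,l\}=\{2,3\}$ and assuming (up to symmetry and negation) that $U$ starts with a left $A$-tail $(aab)^n(aab)^2ba\cdot T$, $n\geq0$. Then I would split cases by the relative position of $XYY$ with respect to this tail: if $XYY$ lies entirely beyond the tail, the tail is untouched; if $XYY$ lies inside the alternating portion of the tail, $Y$ is letter-alternating and $Y^l$ also fits inside the tail, so $V$ begins with a tail of the same kind; in the boundary case where $XYY$ overshoots an $(aab)$-block but stays inside the tail, the $r_1$-reducedness of $Y$ together with the way the blocks $aab$ tile the tail force $Y$ to be a power of $aab$, so the rewriting only inserts or deletes whole $aab$-blocks and the left $A$-tail of $V$ persists. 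The three other tail kinds follow by symmetry.

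For part~2, I would take an $r_1$-linking sequence $U=W_0,W_1,\dots,W_n=V$ and prove by induction on $n$ that $r_T(W_0)\sim r_T(W_n)$. By part~1, every $W_i$ carries the same non-uniform tails as $U$, so each $r_T(W_i)$ is well defined and has the same ``skeleton.'' For a single step $(W_i,W_{i+1})\in\pi_{r_1}$, written $W_i=XY^kZ$, $W_{i+1}=XY^lZ$, I would consider three configurations: if $Y^k$ sits strictly between the two (possibly empty) non-uniform tails, then $r_T$ leaves the tails intact and $r_T(W_i),r_T(W_{i+1})$ are still neighbours, hence equivalent; if $Y^k$ sits entirely inside one of the tails, then by the analysis of part~1 both $r_T(W_i)$ and $r_T(W_{i+1})$ collapse to the same 7-letter reduced tail followed by the same body, so they are literally equal; in the remaining ``straddling'' case I would show that the rewriting can be replaced by an equivalent composition of rewritings of the two previous kinds, provided that $W_i,W_{i+1}$ avoid the two exceptional classes.

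The assumption that $r_T(U)$ or $r_T(V)$ is $\widetilde{AB}$-whole is what makes the straddling step legal: by Proposition~\ref{WholeEqv} applied to any intermediate reduction, $\widetilde{AB}$-wholeness propagates along the chain, and this prevents $r_T$ from creating a new $a\widetilde{A}a$ or $b\widetilde{B}b$ factor at the cut point that would spoil the equivalence. The main obstacle is precisely this straddling configuration. I expect the proof to verify case by case that the only $r_1$-reduced words in which a left and a right non-uniform tail interlock so tightly that $r_T$ destroys equivalence are exactly those in $[aabaabbaabaa]_{r_1}\cup[bbabbaabbabb]_{r_1}$, justifying their exclusion from the hypothesis. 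The bookkeeping over the four tail kinds, their negations, and the possible placements of $Y^k$ (straddling the left tail, the right tail, or, in the exceptional classes, both) will be the most tedious part of the argument.
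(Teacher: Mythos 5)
The paper does not actually prove Proposition~\ref{TailReg}: Subsection~\ref{NonUnAOF} opens by stating that all results of that subsection are imported from~\cite{MyWork}, so there is no in-paper argument to measure your proposal against. Judged on its own, your plan for part~1 is sound and follows the same template as the paper's proof of the analogous Lemma~\ref{RedTails} for non-reducible tails: reduce to a single $\pi_{r_1}$-step $U=XY^kZ$, $V=XY^lZ$, and compare the prefix $XYY$ with the tail. Since both are prefixes of $U$, one contains the other; if the tail is a prefix of $XYY$ the claim is immediate, and otherwise $YY$ is a square factor of $(aab)^{n+2}ba$, which together with $r_1$-reducedness (ruling out $Y\in\{a,b\}$) forces $Y$ to be a conjugate of a power of $aab$, so the rewriting only inserts or deletes whole blocks and the tail kind survives. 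That part is essentially complete.

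Part~2 is where the real content of the proposition lies, and your proposal does not yet contain an argument for it. The operation $r_T$ deletes letters, and deletion does not respect $\sim$ in general (this is exactly the $\eta$-versus-$\xi$ phenomenon the paper discusses at the end of Sect.~\ref{Normal}); what one gets for free is only $U\sim L\,r_T(U)\,R$ with the same deleted letters $L,R$ on both sides, and the whole difficulty is cancelling $L$ and $R$. Your cases (a) and (b) of a neighbour step are fine: a rewriting strictly between the tails leaves $r_T(W_i)$ and $r_T(W_{i+1})$ neighbours, and a rewriting entirely inside a tail makes them literally equal. But the straddling case is precisely where both hypotheses --- the $\widetilde{AB}$-wholeness of $r_T(U)$ or $r_T(V)$ and the exclusion of $[aabaabbaabaa]_{r_1}\cup[bbabbaabbabb]_{r_1}$ --- must do their work, and there you only write that you ``expect'' the rewriting to factor into a tail-internal step followed by a body step. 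That factorization is a substantive combinatorial claim, not a routine consequence of anything you have set up: in the straddling configuration $W_i$ and $W_{i+1}$ have tails of different lengths \emph{and} different bodies, so you must show $abaabba\,B_i\sim abaabba\,B_{i+1}$ from the mere fact that the full words are neighbours. Your appeal to Proposition~\ref{WholeEqv} (``preventing $r_T$ from creating a new $a\widetilde{A}a$ factor at the cut point'') identifies a necessary condition but supplies no mechanism for deriving the equivalence; the actual argument presumably has to pass through complete reduction and Proposition~\ref{RedSave}-type machinery on the reduced words, which is why the $\widetilde{AB}$-wholeness hypothesis appears at all. Until the straddling case is written out --- including the verification that it fails only inside the two excluded classes --- the proof of part~2 is a plan rather than a proof.
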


Combining Propositions~\ref{TailReg},~\ref{NonReg}, \ref{WholeEqv}, and  %
the definition of $r_T$, we get the following corollary.

\begin{corollary}
\label{TailAOF} %
Let $U\sim V$ for  an $r_1$-reduced word $U$ and an $r_1$-reduced almost overlap-free word $V$. %
If $V\not\in\mathcal{S}_1$, then $r_T(U)\sim r_T(V)$ and the words %
$r_T(U)$, $r_T(V)$ are $\widetilde{AB}$-whole.
\end{corollary}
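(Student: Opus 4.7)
The plan is to split on whether $V$ is uniform, and in each case assemble the conclusion from Propositions \ref{TailReg}, \ref{NonReg}, \ref{WholeEqv}, Corollary \ref{ABUn}, and Theorem \ref{Th1} in the form $(*)$.

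If $V$ is uniform, Corollary \ref{ABUn} gives that $V$ is $\widetilde{AB}$-whole. Every non-uniform tail contains factors $aa$ and $bb$ starting at positions of different parities---for instance, in $aabaabba$ the factor $aa$ starts at positions $1$ and $4$---so uniformity forbids any non-uniform tail in $V$, whence $r_T(V)=V$. Proposition \ref{TailReg}(1) transfers the absence of non-uniform tails to $U$, giving $r_T(U)=U$, and Proposition \ref{WholeEqv} transfers $\widetilde{AB}$-wholeness to $U$.

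If $V$ is non-uniform, Proposition \ref{NonReg} combined with $V\notin\mathcal{S}_1$ forces $V$ to carry a non-uniform tail. I would apply Proposition \ref{TailReg}(2) followed by Proposition \ref{WholeEqv}, after verifying two things. First, neither $U$ nor $V$ lies in $[aabaabbaabaa]_{r_1}\cup[bbabbaabbabb]_{r_1}$: both $aabaabbaabaa$ and $bbabbaabbabb$ lie in $\mathcal{S}_1$ and are thus almost overlap-free, so Theorem \ref{Th1} in the form $(*)$ identifies each as the unique almost overlap-free representative of its class; were $V$ in one of these classes, it would equal that representative and thereby belong to $\mathcal{S}_1$, contradicting the hypothesis. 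The exclusion extends to $U$ by $U\sim V$.

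The second check---that $r_T(V)$ is $\widetilde{AB}$-whole---is the main technical point. I would prove the stronger statement that $r_T(V)$ contains no factor of the form $a\widetilde{A}a$ or $b\widetilde{B}b$. For $\widetilde{A}$ strictly longer than $aba$, the word $a\widetilde{A}a$ contains the factor $ababa=XYXYX$, which as a proper factor of $V$ would violate almost overlap-freeness; so only $aabaa$ itself can occur in $V$ as an $a\widetilde{A}a$-factor. If $aabaa$ occurred in $V$ at a position $i$ with $2\leq i\leq |V|-5$, then $r_1$-reducedness would force $V[i-1]=V[i+5]=b$, planting the proper factor $baabaab=XYXYX$ in $V$, a contradiction; hence $aabaa$ occurs in $V$ only at position $1$ or $|V|-4$, each time inside the corresponding length-$8$ non-uniform $A$-tail. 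Since $r_T$ strips the outermost letter of every non-uniform tail, the $aabaa$ factor is destroyed in each case. The symmetric argument by negation handles $b\widetilde{B}b$. With $r_T(V)$ thus $\widetilde{AB}$-whole, Proposition \ref{TailReg}(2) yields $r_T(U)\sim r_T(V)$, and Proposition \ref{WholeEqv} propagates wholeness to $r_T(U)$.
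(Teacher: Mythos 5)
Your overall strategy is the right one and matches what the paper intends (it gives no written proof, just the instruction to combine Propositions~\ref{TailReg}, \ref{NonReg}, \ref{WholeEqv} and the definition of $r_T$). The uniform case is handled correctly, and your hands-on argument that $r_T(V)$ contains no factor of the form $a\widetilde{A}a$ or $b\widetilde{B}b$ at all is a legitimate, slightly more elementary substitute for the paper's implicit route (which is: $r_T(V)$ is almost overlap-free and tail-free, hence uniform by Proposition~\ref{NonReg}, hence $\widetilde{AB}$-whole by Corollary~\ref{ABUn}). The small omissions there are harmless: you should note that the overlap $baabaab$ you exhibit is a \emph{proper} factor because $V\neq baabaab$ (as $baabaab\in\mathcal{S}_1$), and the assertion that a prefix occurrence of $aabaa$ sits inside a length-$8$ tail needs one more line (a prefix $aabaab a$ with $|V|\geq 8$ already contains the proper overlap $aabaaba=a\,ab\,a\,ab\,a$, so $r_1$-reducedness and almost overlap-freeness force the prefix $aabaabba$).

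The genuine gap is circularity. To verify the hypothesis of Proposition~\ref{TailReg},~2) you invoke Theorem~\ref{Th1} in the form $(*)$ to conclude that a class containing $aabaabbaabaa$ or $bbabbaabbabb$ contains no other almost overlap-free word. But the paper's proof of Theorem~\ref{Th1} (Section~\ref{ProofOne}) applies Corollary~\ref{TailAOF} to the putative minimal counterexample, so the corollary must be established \emph{before} Theorem~\ref{Th1}; your argument as written assumes the theorem whose proof depends on the statement you are proving. The repair is easy and is exactly what the paper does elsewhere: Lemma~\ref{A1},~9) gives the explicit regular expression $[aabaabbaabaa]_{r_1}=(aab)^2(aab)^*(b(aab)^*aab)^*(baa)^*(baa)^2$, and a direct inspection shows that every word of this language other than $aabaabbaabaa$ itself contains a proper overlap (already $(aab)^3\dotso$ contains $aabaaba$), so the only almost overlap-free word in that class is $aabaabbaabaa\in\mathcal{S}_1$; the same holds for the negated class. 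This yields $U,V\notin[aabaabbaabaa]_{r_1}\cup[bbabbaabbabb]_{r_1}$ without any appeal to Theorem~\ref{Th1}. With that substitution your proof goes through.
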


So, it remains to calculate the equivalence classes of all words from $\mathcal{S}_1$. %
The following lemma gives the answer.

\begin{lemma}
\label{A1} \textup{1)}\ $[aabaa]_{r_1}=aabaa$;

\textup{2)}\ $[aabaab]_{r_1}=(aab)^2(aab)^*$;

\textup{3)}\ $[baabaa]_{r_1}=(baa)^*(baa)^2$;

\textup{4)}\ $[baabaab]_{r_1} = (baa)^2(baa)^*b$;

\textup{5)}\ $[aabaabb]_{r_1} = (aab)^2(aab)^*b$;

\textup{6)}\ $[bbaabaa]_{r_1}=b(baa)^*(baa)^2$;

\textup{7)}\ $[aabaaba]_{r_1}=(aab)^2(aab)^*a$;

\textup{8)}\ $[abaabaa]_{r_1}=a(baa)^*(baa)^2$;

\textup{9)}\ $[aabaabbaabaa]_{r_1} = (aab)^2(aab)^*(b(aab)^*aab)^*(baa)^*(baa)^2$.

The classes of the words from $\overline{\mathcal{A}_1}$ are negations %
of the classes 1)--9).
\end{lemma}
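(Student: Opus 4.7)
The plan is to prove each of the nine identities by establishing both inclusions. For the ``$\supseteq$'' direction one must exhibit a sequence of $\pi_{r_1}$-steps from the base word to each word in the claimed language. For the ``$\subseteq$'' direction, since $\sim_{r_1}=\pi_{r_1}^+$ as shown in Subsection~\ref{UnCompleteRed}, it suffices to prove that the claimed language $\mathcal{L}$ is closed under $\pi_{r_1}$: whenever $U\in\mathcal{L}$ and $(U,V)\in\pi_{r_1}$, also $V\in\mathcal{L}$.

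Cases 1)--8) I would treat uniformly. Each base word decomposes as a (possibly empty) fixed prefix letter, then $(aab)^2$ or $(baa)^2$, then a (possibly empty) fixed suffix letter. The ``$\supseteq$'' direction is immediate by repeatedly applying the substitution $(aab)^2\leftrightarrow(aab)^3$ (respectively $(baa)^2\leftrightarrow(baa)^3$) inside the main run, which generates all $X(aab)^k Y$ (respectively $X(baa)^k Y$) with $k\geq 2$. For the ``$\subseteq$'' direction I would enumerate all square factors $Y^2$ appearing in $X(aab)^n Y$: apart from $a^2$ and $b^2$, whose cube expansions fail $r_1$-reducedness, every such square of length $\geq 4$ is a cyclic rotation of $(aab)^r$, so the corresponding $\pi_{r_1}$-substitution merely alters the exponent of the main run and keeps the word in the language. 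Case 1) is the degenerate case with no substitutable square, so the class reduces to the singleton $\{aabaa\}$.

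Case 9) is the main obstacle. The base word is $W=(aab)^2(baa)^2$, and $\mathcal{L}_9$ contains, besides the plain words $(aab)^n(baa)^m$ with $n,m\geq 2$, also all words with arbitrary sequences of inserted middle blocks $b(aab)^{k_i+1}$. For the ``$\supseteq$'' direction I would combine four types of moves: (a) $(aab)^2\leftrightarrow(aab)^3$ within the initial $(aab)^{n_0}$-run; (b) the symmetric move on the final $(baa)^{m_0}$-run; (c) $(b(aab)^{k+1})^2\leftrightarrow(b(aab)^{k+1})^3$ at an appropriate boundary between the runs, inserting or removing a middle block of size $k$; and (d) $(aab)^2\leftrightarrow(aab)^3$ inside an existing middle block of size $\geq 1$, adjusting the parameter $k_i$. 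The seed is that $(aab)^n(baa)^m$ for appropriate $n,m$ contains $(b(aab)^{k+1})^2$ as an internal factor; for instance, $(aab)^2(baa)^2$ contains $(baab)^2$ at positions $3$--$10$, whose expansion inserts a middle block $baab$, while $(aab)^3(baa)^3$ contains $(baabaab)^2$, whose expansion inserts a middle block of size~$1$.

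For the ``$\subseteq$'' direction in case 9), the task is to show that every $\pi_{r_1}$-neighbor of a word $U\in\mathcal{L}_9$ remains in $\mathcal{L}_9$. I would proceed by an exhaustive classification of the square factors $Y^2$ in $U$, based on the position of $Y$ relative to the block decomposition. The key structural fact is that no word in $\mathcal{L}_9$ contains the pattern $abab$ or $baba$, which eliminates squares with $|Y|=2$ other than $a^2$ and $b^2$ (whose substitutions fail $r_1$-reducedness), while every longer square must, by periodicity combined with the rigid local letter pattern of $U$, have a primitive root of the form $aab$, $baa$, or $b(aab)^{k+1}$ up to cyclic rotation. The main obstacle is making this classification rigorous for cross-boundary squares spanning multiple middle blocks; I would handle it by induction on $|Y|$ together with a careful local analysis of the periods compatible with the local letter pattern of $U$.
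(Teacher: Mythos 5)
The paper contains no proof of Lemma~\ref{A1} to compare against: Subsection~\ref{NonUnAOF} opens by declaring that all results stated there were proved in~\cite{MyWork}, and the lemma is quoted without argument. Your overall strategy --- prove $\supseteq$ by exhibiting $\pi_{r_1}$-derivations and $\subseteq$ by showing each regular language is closed under $\pi_{r_1}$, invoking $\sim_{r_1}=\pi_{r_1}^+$ --- is the natural (essentially the only) one, and your treatment of cases 1)--8) is sound: the absence of $abab$ and $baba$, together with the Fine--Wilf argument forcing every square of length $\geq 6$ inside a period-$3$ run to have period divisible by $3$, does close those eight languages under $\pi_{r_1}$.

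Case 9) has two concrete gaps. First, the structural claim underlying your $\subseteq$ argument is false as stated: a word of $\mathcal{L}_9$ whose middle blocks read $[b(aab)][b(aab)^2][b(aab)][b(aab)^2]$ contains the square whose primitive root is $b(aab)b(aab)^2$ (length $11$), which is conjugate neither to $aab$ nor to any $b(aab)^{k+1}$. The invariant you actually need is weaker: every square of length $\geq 6$ has a root conjugate either to a power of $aab$ inside a single run or to a concatenation of consecutive complete blocks; equivalently, passing to the run-length encoding $U=aa\,b^{e_1}aa\cdots b^{e_r}aa$ with $e_i\in\{1,2\}$, every such square induces a square of the word $e\in 1^+(21^+)^+$, and that language is closed under $f^2\leftrightarrow f^3$. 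Second, your $\supseteq$ moves do not reach all of $\mathcal{L}_9$ as described: move (c) inserts a block of size $k+1$ only at a junction whose neighbouring runs both have length at least $k+1$, and move (d) cannot grow an interior block of size $1$, since its single $aab$ is flanked by $b$ on both sides and no square $(aab)^2$ is available there. Reaching an arbitrary word therefore requires an ordering argument --- for instance, pump the two outer runs (always possible, as the final run sits inside $(baa)^{m_0}$ with $m_0\geq2$) and insert or delete interior runs one at a time after first pumping the relevant neighbour --- which your sketch does not supply. Both gaps are repairable, but as written the case-9) argument is incomplete.
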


Note that for any word $V\in\mathcal{S}_1$ the equivalence class $[V]_{r_1}$ is %
a regular language. Thus the word problem is decidable for any pair $(U, V)$: %
one can build a finite automaton recognizing $[V]_{r_1}$ and try to accept $U$ by this automaton.

\subsection{From uniform words to $\varphi$-images: functions $\xi$ and $\eta$.}
\label{XiEta} %
Combining the functions $r_T$ and $r$, we can transform a given word $U$ to a uniform %
word. Here we introduce the functions $\xi$ and $\eta$, which transform any uniform word %
to a $\varphi$-image. By Proposition~\ref{RegRed}, any uniform word $U$ is a factor of a $\varphi$-image. %
That is, %
\[U=cQ_1\dotso Q_kd,\text{ where } Q_1,\dotsc,Q_k\in\{ab,ba\}, c,d\in\{a,b,\lambda\}.\] %
This representation is unique %
if $U$ is not letter-alternating. For letter-alternating words we additionally %
require $c=\lambda$ to get a unique representation as well. Now put %
\[\eta(U)=Q_1\dotso Q_k, \quad \xi(U)=\overline{\mathstrut c}U\overline{\mathstrut d}.\] %
Obviously, $\eta(U)$ is a $\varphi$-image of maximum length contained in $U$, while %
$\xi(U)$ is a $\varphi$-image of minimum length containing $U$. We denote $h_\eta(U)=c$, %
$t_\eta(U)=d$, $h_\xi(U)=\overline{c}$, and $t_\xi(U)=\overline{d}$. %
First, we establish some basic properties %
of $\xi$ and $\eta$.



\begin{lemma}
\label{UniqueXiEta}
\textup{1)} Let $eUf\in\varphi(\Sigma^*)$, where $e,f\in\{a,b,\lambda\}$ and $U$ is not %
letter-alternating. Then $\xi(U)=eUf$, $e=h_\xi(U)$, and $f=t_\xi(U)$.

\textup{2)} Let $U=cU'd$, where $c,d\in\{a,b,\lambda\}$ and $U'\in\varphi(\Sigma^*)$. %
Then $U$ is uniform. Moreover, if $U$ is not letter-alternating, then %
$\eta(U)=U'$, $h_\eta(U)=c$,  and $t_\eta(U)=d$;

\textup{3)} Let $U=cU'd$, where the word $U'$ is uniform and %
$c,d\in\{a,b,\lambda\}$ such that $c\neq U[1]$, $d\neq U[|U|]$. %
Then the word $U$ is uniform as well.
\end{lemma}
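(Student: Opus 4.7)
The plan is to use the definition of $\xi$ and $\eta$ via the (unique, for non-letter-alternating words) decomposition $W=cQ_1\cdots Q_k d$ with $Q_i\in\{ab,ba\}$ and $c,d\in\{a,b,\lambda\}$. In each part I would exhibit a concrete decomposition of $U$ that witnesses the claimed values of $\xi(U)$, $\eta(U)$, or the uniformity of $U$; invoking uniqueness of the decomposition then identifies it as \emph{the} decomposition in the definition of $\xi$ or $\eta$.

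For part 1, I would factor $eUf = P_1\cdots P_m$ with each $P_i\in\{ab,ba\}$, which is possible since $eUf\in\varphi(\Sigma^*)$. If $e\neq\lambda$ then $P_1$ begins with $e$, so $P_1=e\overline{e}$ and $U$ starts with $\overline{e}$; symmetrically $U$ ends with $\overline{f}$ when $f\neq\lambda$. Stripping these boundary letters off the factorization of $eUf$ yields a decomposition $U=cQ_1\cdots Q_k d$ with $c=\overline{e}$ (taking $c=\lambda$ if $e=\lambda$) and $d=\overline{f}$ (taking $d=\lambda$ if $f=\lambda$). Uniqueness forces this to be the decomposition used in the definition of $\xi$, hence $\xi(U)=\overline{c}\,U\,\overline{d}=eUf$, $h_\xi(U)=\overline{c}=e$, and $t_\xi(U)=\overline{d}=f$.

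For part 2, given $U'=Q_1\cdots Q_k\in\varphi(\Sigma^*)$, I would note that $\overline{c}\,U\,\overline{d} = (\overline{c}c)Q_1\cdots Q_k(d\overline{d})$ lies in $\varphi(\Sigma^*)$, since each of $\overline{c}c$ and $d\overline{d}$ is either empty or a member of $\{ab,ba\}$. Hence $U$ is a factor of a $\varphi$-image and therefore uniform by Proposition \ref{RegRed}. If in addition $U$ is not letter-alternating, the decomposition $U=cQ_1\cdots Q_k d$ just written is the unique one, so $\eta(U)=U'$, $h_\eta(U)=c$, $t_\eta(U)=d$. For part 3, I would read the boundary condition as forbidding the creation of a new $aa$ or $bb$ factor at the junction of $c$ with $U'$ or of $U'$ with $d$; under this reading the $aa$/$bb$ occurrences of $U$ are precisely those of $U'$ shifted by $|c|$, and since $U'$ is uniform they all start at positions of one parity, so $U$ is uniform.

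The main obstacle is keeping the parity bookkeeping around the boundary letters uniform across all subcases $c,d,e,f\in\{a,b,\lambda\}$; the cleanest route seems to be to lean on Observation \ref{EvenPhi} together with uniqueness of the block decomposition, rather than tracking the positions of individual $aa$ or $bb$ occurrences by hand.
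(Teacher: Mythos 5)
Your proposal is correct and follows essentially the same route as the paper: part 1 via the block factorization of $eUf$ and uniqueness of the decomposition $cQ_1\dotso Q_kd$ for non-letter-alternating words, part 2 via the embedding $U\leq\overline{c}cU'd\overline{d}\in\varphi(\Sigma^*)$ together with Proposition~\ref{RegRed}, and part 3 via the observation that all $aa$/$bb$ factors of $U$ lie inside $U'$ and are shifted uniformly by $|c|$. (You also correctly read the hypothesis of part 3 as $c\neq U'[1]$, $d\neq U'[|U'|]$, which is what the paper's own proof uses.)
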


\begin{proof}
To prove 2), note that $U=cU'd$ and $U'\in\varphi(\Sigma^*)$ imply %
$U\leq\overline {\mathstrut c}cU'd\overline {\mathstrut d}\in\varphi(\Sigma^*)$. Hence %
the word $U$ is uniform. If $U$ is not letter-alternating, then the representation $cU'd$ is %
unique, so we are done. Statement 3 of the lemma is evident. Indeed, from $c\neq U'[1]$ and %
$d\neq U'[|U'|]$ it follows that all factors $aa$ or $bb$ occur in $U$ inside the factor $U'$ %
only. Adding the symbol $c$ to the beginning of $U'$, we change (if $c\neq\lambda$) or save (if $c=\lambda$) %
parity of all positions in the word $U'$. Therefore, since $U'$ is uniform, the word $U$ is uniform as well. %
Finally, one can see that if $eUf=\varphi(X)$ for some $X\in\Sigma^*$, then %
$U=\overline {\mathstrut e}\varphi(X')\overline {\mathstrut f}$, where $X'\leq X$. In view of the uniqueness of %
such representation, we conclude that $\eta(U)=\varphi(X')$, $h_\eta(U)=\overline e$, and %
$t_\eta(U)=\overline f$. From  the definitions of $h_\xi$ and $t_\xi$ it immediately follows that %
$h_\xi(U)=e$ and $t_\xi(U)=f$. The proof is complete.
\end{proof}

The following observation describes the equivalence classes of all letter-alternating words.

\begin{observation}
\label{letteralt} %
The equivalence classes containing letter-alternating words are: %
$[a]=a$, $[ab]=ab$, $[aba]=aba$, $[abab]=(ab)^2(ab)^*$, $[ababa]=(ab)^2(ab)^*a$, %
and their negations.
\end{observation}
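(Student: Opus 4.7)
The plan is to proceed by a direct case analysis on the letter-alternating word, using that the negation automorphism of $\Sigma^*$ preserves $\sim$ to halve the number of cases. Every letter-alternating word in $\Sigma^+$ is, up to negation, one of $a$, $ab$, $aba$, $(ab)^k$ with $k\geq 2$, or $(ab)^k a$ with $k\geq 2$, so it suffices to compute the $\sim$-class of each of these five representatives.

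For the short words $a$, $ab$, $aba$ I would observe that none of them admits a factor of the form $Y^2$ with $Y\in\Sigma^+$: the length-$2$ squares $aa$, $bb$ are forbidden in a letter-alternating word, while the smallest admissible square $(ab)^2$ already exceeds the length of $aba$. Hence these three words have no $\pi$-neighbour at all, so $[a]=\{a\}$, $[ab]=\{ab\}$, and $[aba]=\{aba\}$.

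For the two infinite classes I would prove both inclusions. The containment $(ab)^2(ab)^*\subseteq[abab]$ follows by induction on $k$, via the single rewriting $(ab)^2\leftrightarrow(ab)^3$ applied to a prefix; the analogous argument yields $(ab)^2(ab)^*a\subseteq[ababa]$. For the reverse inclusion it suffices to show that the sets $\mathcal{L}_1=\{(ab)^k : k\geq 2\}$ and $\mathcal{L}_2=\{(ab)^k a : k\geq 2\}$ are closed under the relation $\pi$. So fix $W\in\mathcal{L}_1\cup\mathcal{L}_2$ with $W=XY^mZ$ and $W'=XY^{5-m}Z$, where $m\in\{2,3\}$ and $Y\in\Sigma^+$. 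Since $W$ contains neither $aa$ nor $bb$, the square $Y^2$ is letter-alternating, hence so is $Y$; moreover, the junction $Y[|Y|]\,Y[1]$ inside $Y^2$ would produce $aa$ or $bb$ if $|Y|$ were odd, so $|Y|$ is even and $Y\in\{(ab)^n,(ba)^n\}$ for some $n\geq 1$. A straightforward computation, using the identities $a(ba)^n b=(ab)^{n+1}$ and $a(ba)^n=(ab)^n a$ to handle the case $Y=(ba)^n$ (where the occurrence of $Y^m$ necessarily begins at an even position of $W$), shows that $W'$ has the same form as $W$, with the exponent of $ab$ shifted by $\pm n$.

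The remaining point is that the shifted exponent stays at least $2$ in the reducing direction, and this is where the main bookkeeping lives. The size constraint $|Y^3|\leq|W|$ gives $6n\leq|W|\leq 2k+1$, so $n\leq k/3$ whenever $k\geq 3$, whence the new exponent $k-n$ satisfies $k-n\geq 2k/3\geq 2$; and for $k=2$ no reducing move is possible at all, since $6n\geq 6>5\geq|W|$. I do not foresee any conceptual obstacle; the proof is a careful inventory of the possible $\pi$-moves inside a pure power of $ab$, plus the routine parity tracking of $|X|$ needed to identify the resulting word.
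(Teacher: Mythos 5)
Your proof is correct. The paper states this as an unproved observation, so there is nothing to compare against; your verification --- isolating $a$, $ab$, $aba$ as square-free words with no $\pi$-neighbours, and showing the two infinite families are closed under $\pi$ by forcing $|Y|$ even and bounding $n\leq k/3$ (which indeed follows from $6n\leq 2k+1$ since $6n$ is even) --- is the natural argument and is complete modulo the routine computations you flag.
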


Now we prove that the functions $h_\xi$, $t_\xi$, $h_\eta$, and $t_\eta$ %
are invariant under the congruence $\sim$.

\begin{proposition}
\label{HTXiEta} %
Let $U,V$ be uniform words and  $U\sim V$. Then

\textup{1)}\quad  $h_{\xi}(U) = h_{\xi}(V),\quad t_{\xi}(U)=t_{\xi}(V)$;

\textup{2)}\quad $h_{\eta}(U) = h_{\eta}(V),\quad t_{\eta}(U) = t_{\eta}(V)$.
\end{proposition}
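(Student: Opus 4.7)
The plan is to first observe that part 1) reduces to part 2), then split part 2) into the letter-alternating case (dispatched by Observation~\ref{letteralt}) and the non-letter-alternating case (dispatched by Proposition~\ref{Pi_r} and a parity argument on $|Y|$).

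Since the decomposition $U = cQ_1\dotso Q_k d$ gives by definition $h_\xi(U)=\overline{c}=\overline{h_\eta(U)}$ and $t_\xi(U)=\overline{d}=\overline{t_\eta(U)}$, statement~1) is an immediate consequence of statement~2), so I would concentrate on the latter. If $U$ is letter-alternating, Observation~\ref{letteralt} forces $V$ to lie in the same, explicitly listed equivalence class. Every word in such a class is letter-alternating, has a fixed first and last letter, and has a fixed length parity. Since for letter-alternating words the convention $c = \lambda$ makes $h_\eta\equiv\lambda$ and $t_\eta$ a function solely of the last letter and the length parity, both functions agree on $U$ and $V$.

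Suppose now that $U$ and hence $V$ are non-letter-alternating. By Proposition~\ref{Pi_r}, $\sim_r = \pi_r^+$, so by transitivity it suffices to settle the claim for a single uniform-neighbour step; I may thus assume $U = XY^2Z$ and $V = XY^3Z$ with both words uniform. The key point I would establish is that $|Y|$ is even. To see this, I would examine all possible $aa/bb$ factors occurring inside $Y^3$: they are either internal to a single copy of $Y$ or lie at a junction $Y[|Y|]Y[1]$. An internal factor at relative position $q$ inside $Y$ produces copies at positions $q,\,q+|Y|,\,q+2|Y|$ of $Y^3$, and uniformity of $V$ forces these three positions to share a parity, whence $|Y|$ is even. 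Junction factors appear at positions $|Y|$ and $2|Y|$ of $Y^3$, and the same parity argument again forces $|Y|$ even. Finally, if $Y^3$ contains no $aa/bb$ at all, then $Y$ is letter-alternating with $Y[1]\neq Y[|Y|]$, so $Y\in\{(ab)^k,(ba)^k\}$ and $|Y|=2k$; the degenerate case $|Y|=1$ is ruled out because then $Y^3\in\{aaa,bbb\}$ would already be non-uniform.

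With $|Y|$ even, the conclusion follows: $U[1]=V[1]$ and $U[|U|]=V[|V|]$ (either both inherited from $X$ and $Z$, or from $Y$ in the degenerate cases $X=\lambda$ or $Z=\lambda$); $|U|\equiv|V|\pmod 2$; and every $aa$ or $bb$ factor of $V$ is either already a factor of $U$ or occurs at a position shifted by a multiple of $|Y|$, hence of the same parity. These three facts together determine the same values $c,d\in\{a,b,\lambda\}$ in the decomposition $W=cQ_1\dotso Q_k d$ for $W\in\{U,V\}$, and therefore $h_\eta(U)=h_\eta(V)$ and $t_\eta(U)=t_\eta(V)$. The main obstacle is the case analysis establishing that $|Y|$ is even, which must cover internal, junction, and empty configurations without missing the degenerate short-period cases; once that is in hand, the remaining steps are routine bookkeeping with positions.
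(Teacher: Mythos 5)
Your proof is correct, but it follows a genuinely different route from the paper's. The paper proves statement 1) first and derives 2) from it: from $U\sim V$ it gets $\xi(U)=h_\xi(U)Ut_\xi(U)\sim h_\xi(U)Vt_\xi(U)$ by the congruence property, observes via Lemma~\ref{UniqueXiEta}, 3) that the right-hand word is uniform, and then invokes Proposition~\ref{MainProp} to conclude that this uniform word, being equivalent to a $\varphi$-image, is itself a $\varphi$-image; uniqueness of the decomposition (Lemma~\ref{UniqueXiEta}, 1)) finishes the argument in a few lines with no case analysis. You instead go the other way (deriving 1) from 2)), and for 2) you use Proposition~\ref{Pi_r} to reduce to a single uniform-neighbour step $U=XY^2Z$, $V=XY^3Z$ and then run a parity analysis to show $|Y|$ is even, from which the data determining $c$ and $d$ (first and last letters, length parity, parity of the positions of $aa/bb$ factors) coincide for $U$ and $V$. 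Your evenness argument is sound and in fact reproduces the first part of the paper's proof of Lemma~\ref{EtaNonSave}, which the paper establishes later for a different purpose. What each approach buys: the paper's proof is shorter and needs no reduction to neighbours, at the cost of leaning on the imported Proposition~\ref{MainProp}; yours is more self-contained combinatorially but depends on Proposition~\ref{Pi_r}, which is itself a nontrivial result (fortunately proved earlier and independently of Proposition~\ref{HTXiEta}, so there is no circularity). Two small points worth making explicit in your write-up: intermediate words of the $\pi_r$-chain are automatically non-letter-alternating because, by Observation~\ref{letteralt}, the classes of letter-alternating words contain only letter-alternating words; and $U[1]=V[1]$, $U[|U|]=V[|V|]$ hold for any pair of equivalent words since the relation $\pi$ preserves first and last letters.
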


\begin{proof}
If $U$ or $V$ is letter-alternating, then the proposition readily follows from %
Observation~\ref{letteralt}. Assume that they are not letter-alternating.

Let us prove the first statement. By definition of congruence, $U\sim V$ implies %
$h_\xi(U)Ut_\xi(U)\sim h_\xi(U)Vt_\xi(U)$. From $U\sim V$ and %
the definition of $\xi$ it follows that $h_\xi(U)\neq U[1]=V[1]$ and $t_\xi(U)\neq U[|U|]=V[|V|]$. %
Hence, by Lemma~\ref{UniqueXiEta}, 3), the word $h_\xi(U)Vt_\xi(U)$ is uniform.
So we have $\xi(U)\sim h_\xi(U)Vt_\xi(U)$, where $\xi(U)\in\varphi(\Sigma^*)$ and %
the word $h_\xi(U)Vt_\xi(U)$ is uniform. According to Proposition~\ref{MainProp}, %
we get $h_\xi(U)Vt_\xi(U)\in\varphi(\Sigma^*)$ as well. %
In view of Lemma~\ref{UniqueXiEta}, 1), we have $h_\xi(V)=h_\xi(U)$ %
and $t_\xi(V)=t_\xi(U)$, as desired.

Statement 2 is trivially follows from statement 1 and the definitions %
of $h_\eta$, $t_\eta$, $h_\xi$, and $t_\xi$.
\end{proof}

\medskip
According to Proposition~\ref{HTXiEta}, the function $\xi$ preserves the congruence $\sim$ %
whereas the function $\eta$ does not preserve $\sim$ in general case. However, as we will show %
in Sect.~\ref{Zloo}, under certain conditions $\eta$ preserves $\sim$ as well. %
The function $\xi$ is used in the proof of Theorem~\ref{Th1}, while %
the function $\eta$ plays a crucial role in the construction of Algorithm EqAOF.
\medskip

Finally, we note that the functions $\varphi^{-1}(\xi(V))$ and $\varphi^{-1}(\eta(V))$ preserve %
the property of a word to be almost overlap-free. For $\xi$ this was proved in~\cite{Shur}, %
for $\eta$ even a stronger assertion holds. %

\begin{observation}
\label{phi_minus_one} If $V$ is almost overlap-free, then the word $\varphi^{-1}(\eta(V))$ %
is overlap-free. %
\end{observation}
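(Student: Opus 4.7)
The plan is to argue by contradiction. Assume $V$ is almost overlap-free and suppose, for contradiction, that $W := \varphi^{-1}(\eta(V))$ contains a factor of the form $XYXYX$ with $X \in \Sigma^+$, $Y \in \Sigma^*$. (Implicit here is that $V$ is uniform, since $\eta(V)$ is only defined on uniform words.) The aim is to produce a \emph{proper} factor of $V$ of this same shape, which would contradict the almost overlap-freeness of $V$.

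First I would transport the overlap through the morphism $\varphi$. Applying $\varphi$ yields that $\eta(V) = \varphi(W)$ contains $\varphi(X)\varphi(Y)\varphi(X)\varphi(Y)\varphi(X) = X'Y'X'Y'X'$ as a factor, where $X' := \varphi(X)$ and $Y' := \varphi(Y)$; note the size bound $|X'| = 2|X| \geq 2$, which will be crucial below. By the very definition of $\eta$ we may write $V = h_\eta(V)\cdot\eta(V)\cdot t_\eta(V)$ with $h_\eta(V), t_\eta(V) \in \{a,b,\lambda\}$, and so $X'Y'X'Y'X'$ is also a factor of $V$. If it happens to be a \emph{proper} factor we are already done.

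The only real obstacle is the remaining case $V = X'Y'X'Y'X'$, in which the transported overlap fills all of $V$. Here I would manufacture a proper overlap by trimming one letter off the right: split $X' = uv$ with $v := X'[|X'|]$ and $u := X'[1\dotso|X'|-1]$, so that $u \in \Sigma^+$ thanks to $|X'|\geq 2$. Then
\[V[1\dotso|V|-1] \;=\; uv\,Y'\,uv\,Y'\,u \;=\; u\cdot(vY')\cdot u\cdot(vY')\cdot u,\]
which is itself of the form $X''Y''X''Y''X''$ with $X'' := u \in \Sigma^+$ and $Y'' := vY' \in \Sigma^*$, and it is a proper prefix (hence proper factor) of $V$. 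This is the desired contradiction. The key observation making the delicate case work is simply that a $\varphi$-image of a nonempty word has length at least two, leaving exactly enough room to shave one letter off the big overlap and absorb it into the central block $vY'$ of the shorter overlap.
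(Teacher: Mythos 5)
Your proof is correct. It reaches the same conclusion by a genuinely different (though related) route. The paper argues forward in two steps: an almost overlap-free word that is not overlap-free must itself be an overlap $cYcYc$ with $|c|=1$ (any longer repeated block would already yield a proper overlap), hence has odd length; since $|\eta(V)|$ is even, $\eta(V)$ is a proper factor of $V$, so every factor of $\eta(V)$ is a proper factor of $V$ and $\eta(V)$ is overlap-free; finally $\varphi^{-1}$ preserves overlap-freeness. You instead run a direct contradiction: you transport a hypothetical overlap $XYXYX$ of $\varphi^{-1}(\eta(V))$ forward through $\varphi$ to an overlap $X'Y'X'Y'X'$ in $V$ with $|X'|=2|X|\geq 2$, and dispose of the degenerate case $V=X'Y'X'Y'X'$ by shaving the last letter off and re-parsing, $uvY'uvY'u = u(vY')u(vY')u$, to produce a proper overlap. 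Both arguments ultimately exploit the evenness of $\varphi$-images, but in different places: the paper uses the parity of $|\eta(V)|$ against the odd length of a ``full-word'' overlap, while you use the bound $|\varphi(X)|\geq 2$ to guarantee room for the trimming trick. Your version is somewhat more self-contained, since it does not invoke the characterization of non-overlap-free almost overlap-free words as a separate fact, and the trimming step is exactly the (unstated) reason the paper may assume $|c|=1$ in that characterization. Your parenthetical remark that $V$ must be uniform for $\eta(V)$ to be defined is also a reasonable reading of the implicit hypotheses.
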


\begin{proof}
An almost overlap-free word that is not overlap-free has the form $cYcYc$ for %
some $c\in\Sigma$ and $Y\in\Sigma^*$, hence, it has  odd length. %
Since the length of $\eta(V)$ is even, $\eta(V)$ is a proper factor of $V$ and, hence, it is %
overlap-free. The function $\varphi^{-1}$ preserves overlap-freeness, so the observation follows.
\end{proof}

\section{Proof of Theorem~\ref{Th1}}
\label{ProofOne} %
We prove Theorem 1 in the form~$(*)$ and use the minimal counterexample method.
Suppose that a pair $(U,V)$ %
provides a counterexample, i. e.,  $U$ and $V$ are two nonequal $r_1$-reduced  %
almost overlap-free words such that $U\sim V$, %
and $l=\min\{|U|,|V|\}$ takes the minimum value among all such pairs.  %
We aim to get a contradiction by obtaining a ``shorter'' counterexample.

Clearly, $l>2$. By Lemma~\ref{A1}, any equivalence class %
$[W]_{r_1}$ for $W\in\mathcal{S}_1$ contains exactly one %
almost overlap-free word (namely, the word $W$ itself). %
Therefore $U,V\not\in\mathcal{S}_1$. Similarly, by %
Observation~\ref{letteralt}, the words $U$ and $V$ are not letter-alternating. %
It follows from Corollaries~\ref{ABUn} and~\ref{TailAOF}  that %
$r_T(U)\sim r_T(V)$ and $r_T(U)$, $r_T(V)$ are uniform almost overlap-free words. %
Hence, by Proposition~\ref{HTXiEta}, the words $U_1 =\varphi^{-1}(\xi(r_T(U)))$ and
$V_1=\varphi^{-1}(\xi(r_T(V)))$ are equivalent. %
Obviously, both words $U_1$ and $V_1$ are almost overlap-free and $U_1\neq V_1$. %

Since the words $U$ and $V$ are not letter-alternating, the %
words $\xi(r_T(U))$ and $\xi(r_T(V))$ are not letter-alternating as well. %
In particular, both words $\xi(r_T(U))$ and $\xi(r_T(V))$ are not equal %
to $ababab$ or $bababa$. Therefore $U_1, V_1\not\in\{aaa,bbb\}$. %

Finally, we have $l_1=\min\{|U_1|,|V_1|\}\leq (l+2)/2<l$ whenever $l>2$. %
So the pair $(U_1, V_1)$ provides a shorter counterexample. This completes the proof of Theorem~\ref{Th1}. %

\section{Procedure Ancestor}
\label{Ancestor} %
In this section we begin the construction of Algorithm AqEOF, which returns the almost %
overlap-free word that is equivalent to a given word or reports %
that no such almost overlap-free word exists. %
Algorithm AqEOF includes two main procedures. Let us define the first of them.

\noindent\textbf{Procedure Ancestor}. \\
\textit{Input}.\ A word $U\in\Sigma^+$. \\
\textit{Output}.\ A word $\mathrm{Anc}(U)\in\Sigma^+$, integer $k$, %
length $k$ arrays $L$, $R$, $h$, $t$ of letters. \\[3pt]
\textbf{Step\ 0.}\  Let $k:=0$. \\
\textbf{Step\ 1.}\  Let $U=r_1(U)$; $k=k+1$, $L[k]=R[k]=h[k]=t[k]:=\lambda$.\\
\textbf{Step\ 2.}\  If $|U|\leq2$ or $U\in[aabaabbaabaa]\cup[bbabbaabbabb]$,
then $\mathrm{Anc}:=U$; stop.\\
\textbf{Step\ 3.}\  If $U$ has a non-uniform left (right) tail, %
set $L[k]:=U[1]$ (resp., $R[k]=U[|U|]$); %
let $U':=r_T(U)$.\\
\textbf{Step\ 4.}\  If $U'$ is not  $\widetilde{AB}$-whole %
or $U'$ has a non-reducible tail, then $\mathrm{Anc}:=U$; stop.\\
\textbf{Step\ 5.}\  Let $U':=r(U')$.\\
\textbf{Step\ 6.}\  Let $h[k]:=h_\eta(U')$; $t[k]:=t_\eta(U')$;  %
$U:=\varphi^{-1}(\eta(U'))$; goto step 1. \\
\textbf{End.}\medskip

Starting with $U_1=r_1(U)$, procedure Ancestor constructs %
the sequence of words $U_1, U_2, \dotsc$ by the rule: %
\[U_{k+1} =r_1(\varphi^{-1}(\eta(r(r_T(U_{k-1})))))\]
until one of the stop conditions is fulfilled.  %
We call the sequence $\{U_k\}$ the \textit{primary $U$-series}. %
Since $|U_k|<|U_{k-1}|/2$ for any $k\geq2$, the primary $U$-series %
is finite, its \textit{length} (that is, the number of words in it) %
is denoted by $\ell(U)$. We say that the output word %
$\Anc(U)= U_{\ell(U)}$ is the \textit{ancestor of $U$}, and the arrays $L=L_U$, $R=R_U$, $h=h_U$, %
and $t=t_U$ returned by Procedure Ancestor are %
\textit{associated with~$U$}.  We omit the index $U$ if it is clear from context. %

The next two lemmas establish the basic properties of  primary series.  %
First, we examine primary series of almost overlap-free words.

\begin{lemma}
\label{Anc1} %
Let $V$ be an almost overlap-free word and  %
$\{V_k\}_{k=1}^{\ell(V)}$ be its primary $V$-series. Then

\textup{1)}\quad $\Anc(V)\in\mathcal{S}_1\cup\{a, b, ab, ba, aa, bb\}$; %

\textup{2)}\quad $V_k$ is overlap-free for each $k=2, \dotsc, \ell(V)$; %

\textup{3)}\quad $V_k=L[k]r(r_T(V_k))R[k]$ for each $k=1, \dotsc, \ell(V){-}1$.
\end{lemma}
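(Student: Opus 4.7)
My plan is to prove the three parts by simultaneous induction on $k$, with the invariant that each $V_k$ is almost overlap-free (part 2 upgrading this to overlap-free for $k\ge 2$). The base case $V_1=r_1(V)$ is immediate: if $V\in\{aaa,bbb\}$ then $V_1\in\{aa,bb\}$; otherwise $V$ contains no factor $aaa$ or $bbb$ (such a factor would be a proper cube, hence a proper overlap), so $V_1=V$.

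The core of the induction is the step $V_k\to V_{k+1}$ for $k<\ell(V)$. I would start with the combinatorial observation that an almost overlap-free word contains no factor $(aab)^3$: the proper factor $aabaaba=a(ab)a(ab)a$ of $(aab)^3$ is an overlap; the symmetric cases $(abb)^3$, $(bab)^3$, $(bba)^3$ are analogous. Consequently, any non-uniform tail in an almost overlap-free word has the minimal length $8$, so $r_T$ removes exactly one letter at each end carrying such a tail, and $r_T(V_k)$ is almost overlap-free (in fact overlap-free if at least one tail was reduced). Since the algorithm did not halt, $V_k\notin[aabaabbaabaa]\cup[bbabbaabbabb]$, and a direct check shows that every element of $\mathcal{S}_1$ contains a factor $a\widetilde{A}a$ (or its dual) placed at a position that prevents the required surrounding $ab\cdots ba$, so $\mathcal{S}_1$-membership would violate $\widetilde{AB}$-wholeness and halt the algorithm at step 4; hence $V_k\notin\mathcal{S}_1$. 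Now Corollary \ref{TailAOF} gives that $r_T(V_k)$ is $\widetilde{AB}$-whole, Corollary \ref{ABUn} upgrades this to uniform, and Proposition \ref{RegRed} yields $r(r_T(V_k))=r_T(V_k)$. Observation \ref{phi_minus_one} then shows $\varphi^{-1}(\eta(r_T(V_k)))$ is overlap-free, hence cube-free, hence fixed by $r_1$. Thus $V_{k+1}=\varphi^{-1}(\eta(r_T(V_k)))$ is overlap-free, which closes the induction and proves part 2.

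Part 3 falls out of this: step 3 sets $L[k]$ and $R[k]$ to be exactly the first and last letters removed by $r_T$ (or $\lambda$), so $V_k=L[k]\,r_T(V_k)\,R[k]=L[k]\,r(r_T(V_k))\,R[k]$. For part 1, $\Anc(V)=V_{\ell(V)}$ is almost overlap-free by the induction and halts at step 2 or step 4. Step 2 either yields $|V_{\ell(V)}|\le 2$, placing $V_{\ell(V)}$ in $\{a,b,ab,ba,aa,bb\}$, or $V_{\ell(V)}\in[aabaabbaabaa]\cup[bbabbaabbabb]$, in which case Theorem \ref{Th1} forces $V_{\ell(V)}$ to equal the unique almost overlap-free representative $aabaabbaabaa$ or $bbabbaabbabb$, both in $\mathcal{S}_1$. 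Step 4 can halt only when $V_{\ell(V)}\in\mathcal{S}_1$: for any almost overlap-free word outside $\mathcal{S}_1$, the previous paragraph supplies $\widetilde{AB}$-wholeness of $r_T(V_{\ell(V)})$, and non-reducible tails (which contain $ababa$ or $babab$) are absent both from $V_{\ell(V)}$ itself and from the reduced $7$-letter prefixes $abaabba$, $babbaab$, $\ldots$ produced by $r_T$.

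The main obstacle I expect is the bookkeeping around $r_T$: pinning down that non-uniform tails of almost overlap-free words are always of exactly length $8$, that $r_T$ therefore removes a single letter per affected end, that the modification preserves almost overlap-freeness when both ends are touched, and that the combined operation cannot manufacture a non-reducible tail. The exceptional classes $[aabaabbaabaa]$ and $[bbabbaabbabb]$, where the two opposite tails overlap too much for this scheme to work uniformly, are precisely what force the separate case-split in step 2 of the procedure.
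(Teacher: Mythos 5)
Your proof is correct and follows essentially the same route as the paper's: analyze one iteration of procedure Ancestor, rule out halting at Steps 2 and 4 when $V_k$ is almost overlap-free with $|V_k|>2$ and $V_k\not\in\mathcal{S}_1$, obtain overlap-freeness of $V_{k+1}$ from Observation~\ref{phi_minus_one}, and get part 3 from the uniformity of $r_T(V_k)$ together with the fact that non-uniform tails of almost overlap-free words have exactly $8$ symbols. The only differences are presentational: the paper gets the uniformity of $r_T(V_k)$ in one step from Proposition~\ref{NonReg} and treats the $8$-symbol tail fact as a standing remark after the definition of $r_T$, whereas you re-derive these via the $(aab)^3$ observation and route through Corollaries~\ref{TailAOF} and~\ref{ABUn}.
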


\begin{proof}
Instead of 1)--3) we prove the following statement:  %
\begin{itemize}%
\item[] \textit{if for some $k\geq1$ the word $V_k$ is almost overlap-free, $|V_k|>2$, and $V_k\not\in\mathcal{S}_1$, %
then $V_k=L[k]r(r_T(V_k))R[k]$, $k < \ell(V)$, and the word $V_{k+1}$ is overlap-free.} %
\end{itemize}
Consider the $k$th iteration of procedure Ancestor, i. e., the processing of the word $V_k$. %
By the conditions above, procedure Ancestor cannot stop on Step 2. By Proposition~\ref{NonReg}, %
$r_T(V_k)$ is a  uniform %
almost overlap-free word. Hence procedure Ancestor cannot stop on Step 4 as well. %
Thus procedure Ancestor does not stop on the $k$th iteration, so we have $k<\ell(V)$,
and the word $V_{k+1}=r_1(\varphi^{-1}(\eta(r_T(V_k))))$ is overlap-free %
in view of Observation~\ref{phi_minus_one}.  %
From the uniformity of $r_T(V_k)$ it follows that $r(r_T(V_k))=r_T(V_k)$. %
Since the function $r_T$ deletes at most one letter from the beginning and at most %
one letter from the end of any almost overlap-free word (see the remark after the definition %
of $r_T$), we get $V_k=L[k]r(r_T(V_k))R[k]$, as desired.
\end{proof}

Now we apply procedure Ancestor to a pair $(U, V)$ of equivalent words. %

\begin{lemma}
\label{Anc2} %
Suppose that $U\sim V$, $m= \min\{\ell(U),\ell(V)\}$,  $\{U_k\}_{k=1}^{\ell(U)}$, %
$\{V_k\}_{k=1}^{\ell(V)}$ are the primary $U$- and $V$-series  respectively, %
and  the following condition: %
\[r(r_T(U_k)))\sim r(r_T(V_k)))\Rightarrow \eta(r(r_T(U_k)))\sim\eta(r(r_T(V_k)))%
\text{ for all } k<m\leqno{(**)}\]
holds. Then %

\textup{1)}\quad $\ell(U)=\ell(V)$;

\textup{2)}\quad $U_k\sim V_k$ for each $k=1,\dotsc, m$; in particular, $\Anc(U)\sim\Anc(V)$;

\textup{3)}\quad $L_U=L_V$, $R_U=R_V$, $h_U=h_V$, and $t_U=t_V$.
\end{lemma}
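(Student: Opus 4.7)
The plan is to establish parts 2) and 3) simultaneously by induction on $k=1,2,\ldots,m$, with the inductive hypothesis asserting that $U_k\sim V_k$ and that, for every $j<k$, the two executions of procedure Ancestor have followed the same branches at the $j$th iteration and produced identical array entries at position~$j$. Part 1) will then follow by showing that the stop condition at iteration $m$ must trigger for $U$ if and only if it triggers for $V$.

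The base case $k=1$ is immediate: $\sim$ is a congruence, so $U_1=r_1(U)\sim r_1(V)=V_1$, and Step~1 initializes all four arrays at position $1$ to $\lambda$. For the inductive step I would fix $k\le m$ with $U_k\sim V_k$ and walk jointly through the $k$th iteration on $U_k$ and $V_k$. At Step~2 the two triggers coincide because each short class $[a]_{r_1},\ldots,[bb]_{r_1}$ is a singleton (so $U_k\sim V_k$ with $|U_k|\le 2$ forces $U_k=V_k$) and because $[aabaabbaabaa]$ and $[bbabbaabbabb]$ are $\sim$-closed. At Step~3, Proposition~\ref{TailReg}(1) ensures that $U_k$ and $V_k$ have non-uniform tails of the same kind; since the first and last letters of such a tail are determined by its kind, this immediately yields $L_U[k]=L_V[k]$ and $R_U[k]=R_V[k]$.

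Step~4 is where I expect the main obstacle. Proposition~\ref{TailReg}(2) is only ``one-sided'': it requires at least one of $r_T(U_k)$ or $r_T(V_k)$ to be $\widetilde{AB}$-whole before concluding $r_T(U_k)\sim r_T(V_k)$. The plan is to argue by contradiction: if $r_T(U_k)$ is $\widetilde{AB}$-whole but $r_T(V_k)$ is not, then Proposition~\ref{TailReg}(2) (applicable because Step~2 has already excluded the two exceptional classes) would give $r_T(U_k)\sim r_T(V_k)$, and Proposition~\ref{WholeEqv} would then force $r_T(V_k)$ to be $\widetilde{AB}$-whole, a contradiction. Hence wholeness of the $r_T$-images is two-sided. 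Under wholeness on both sides we have $r_T(U_k)\sim r_T(V_k)$, and Lemma~\ref{RedTails} synchronizes the non-reducible tail check. Thus Step~4 halts for $U_k$ iff it halts for $V_k$; if it halts at $k=m$, part 1) follows with $\ell(U)=\ell(V)=m$, and $U_m\sim V_m$ already establishes $\Anc(U)\sim\Anc(V)$.

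When Step~4 does not halt, I would use Proposition~\ref{RedSave} to promote $r_T(U_k)\sim r_T(V_k)$ to $r(r_T(U_k))\sim r(r_T(V_k))$, and then Proposition~\ref{HTXiEta}(2) applied to these uniform words gives $h_U[k]=h_V[k]$ and $t_U[k]=t_V[k]$. Hypothesis $(**)$ now supplies $\eta(r(r_T(U_k)))\sim\eta(r(r_T(V_k)))$; since both sides are $\varphi$-images, Proposition~\ref{MainProp} together with the injectivity of $\varphi$ lets me descend through $\varphi^{-1}$, and a final $r_1$-reduction preserves the congruence, yielding $U_{k+1}\sim V_{k+1}$ and closing the induction.
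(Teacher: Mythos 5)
Your proof is correct and follows essentially the same route as the paper's: the same induction on $k$, the same use of Proposition~\ref{TailReg} for synchronizing tails and the Step-4 halting test (including the contradiction via Proposition~\ref{WholeEqv} and Lemma~\ref{RedTails}), Proposition~\ref{RedSave} to pass to $r$, Proposition~\ref{HTXiEta} for the $h,t$ entries, and condition $(**)$ with Proposition~\ref{MainProp} for the descent to $U_{k+1}\sim V_{k+1}$. No substantive differences.
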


\begin{proof}
Instead of 1)--3) we prove the following statement: %
\begin{itemize}
\item[] \textit{if $U_k\sim V_k$ for some $k\leq m$, then %
$L_U[k]=L_V[k]$, $R_U[k]=R_V[k]$, $h_U[k]=h_V[k]$, $t_U[k]=t_V[k]$, %
and either $k=m=\ell(U)=\ell(V)$ or $k<m$ and $U_{k+1}\sim V_{k+1}$.} %
\end{itemize}
First, suppose that  $k=m=\ell(U)$. %
If procedure Ancestor stops to process the word $U_k$ on Step 2, then the %
processing of the word $V_k$ should stop on Step 2 as well, because $U_k\sim V_k$ and %
both words are $r_1$-reduced. %
In this case, we have $\ell(U)=\ell(V)$, %
and the $k$th elements of all arrays associated with $U$ and $V$ are equal to $\lambda$.

Now suppose that procedure Ancestor stops to process the word $U_k$ on Step 4. %
Assume additionally that the word $V_k$ is $\widetilde{AB}$-whole and  has no non-reducible tails. %
Then the word $r_T(V_k)$ is $\widetilde{AB}$-whole as well and we have $r_T(U_k)\sim r_T(V_k)$ by Proposition~\ref{TailReg}, 2). %
At the same time, from Lemma~\ref{RedTails} and Proposition~\ref{WholeEqv} it follows %
that the word $r_T(U_k)$ is $\widetilde{AB}$-whole and has no non-reducible tails %
as well as $r_T(V_k)$. So procedure Ancestor cannot stop to process %
the word $U_k$ on Step 4, a contradiction.  Hence either the word $V_k$ is not $\widetilde{AB}$-whole %
or it has a non-reducible tail. In both cases procedure Ancestor stops to process the word $V_k$ %
on Step 4, and we get $k=\ell(U)=\ell(V)$. Also,  $L_U[k]=L_V[k]$, $R_U[k]=R_V[k]$ by %
Proposition~\ref{TailReg}, 1) and $h_U[k]=t_U[k]=h_V[k]=t_V[k]=\lambda$. %

The case $k=m=\ell(V)$ is symmetrical to above one. Finally, suppose that $k<m$. %
We have $r(r_T(U_k))\sim r(r_T(V_k))$ by Propositions~\ref{RedSave} and~\ref{TailReg}. %
From condition~$(**)$
it now follows %
that $\eta(r(r_T(U_k)))\sim\eta(r(r_T(V_k)))$ whence 
\[U_{k+1} = r_1(\varphi^{-1}(\eta(r(r_T(U_k)))))\sim r_1(\varphi^{-1}(\eta(r(r_T(V_k)))))=V_{k+1}. \]
In additional, we have $L_U[k]=L_V[k]$, $R_U[k]=R_V[k]$ by Proposition~\ref{TailReg}, 1) %
and $h_U[k]=h_V[k]$, $t_U[k]=t_V[k]$ by Proposition~\ref{HTXiEta}. This completes the proof of %
the statement.
\end{proof}\medskip

As we said above, under certain conditions, the function $\eta$ preserves the congruence $\sim$.
In the sequel, we  call a pair $(P, Q)$ of equivalent uniform words \textit{good} %
if $\eta(P)\sim\eta(Q)$, and \textit{bad} otherwise. %
So, condition~$(**)$ provides that all pairs %
$(r(r_T(U_k)), r(r_T(V_k)))$ are good. We set the study of bad pairs aside %
for Sect.~\ref{Zloo}.

\section{Normal series}
\label{Normal}
The notion in the headline plays a crucial role in the second main procedure %
of Algorithm EqAOF (this procedure will be presented in Sect.~\ref{AlgEqAOF}). %
This notion is defined as follows.

Take a word $U$ and a cube-free word $W\in[Anc(U)]$.
\textit{The normal $U_W$-series} or \textit{the $W$-normal series of $U$} is the sequence of words
$\widetilde{U}_{\ell(U)}, \widetilde{U}_{\ell(U)-1}, \dotsc, \widetilde{U}_1$ %
defined by
\begin{equation*}
\widetilde{U}_{\ell(U)}=W,\quad %
\widetilde{U}_k = L_U[k]h_U[k]\varphi(\widetilde{U}_{k+1})t_U[k]R_U[k],\quad k=\ell(U){-}1, \dotsc, 1.
\end{equation*}
The word $\widetilde{U}_1$ is called the \textit{$W$-normal} (or simply a \textit{normal}) %
\textit{form of the word $U$} and denoted by $N_W(U)$. Clearly, any word $U$ has at least one normal %
series, since the class $[\Anc(U)]$ contains cube-free words. %
However, the class $[\Anc(U)]$ can contain several cube-free words, therefore $U$ %
can have several normal series. %
If $W$ is almost overlap-free, then the normal $U_W$-series %
is the \textit{main normal $U$-series}, the $W$-normal form of $U$ %
is called the \textit{main normal form of $U$} and denoted by $N(U)$. %
By Theorem~\ref{Th1}, each word $U$ has at most one main normal series. %
A normal series or main normal series is called \textit{direct} if $W=\Anc(U)$. %
In this case, the normal form $N_W(U)$ is called \textit{direct} as well %
and denoted by $N^D(U)$.

First, we prove the  main property of normal forms.

\begin{lemma}
\label{Norm1} %
Suppose that $W\sim \Anc(U)$ for a word $U$ and a cube-free word $W$. %
Let  $\{U_k\}_{k=1}^{\ell(U)}$ and $\{\widetilde{U}_k\}_{k=1}^{\ell(U)}$ be the primary %
$U$-series and the normal $U_W$-series respectively. %
Then  $\widetilde{U}_k\sim U_k$ for each $k=1, \dotsc, \ell(U)$; in particular, $N_W(U)\sim U$.
\end{lemma}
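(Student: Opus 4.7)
The plan is to argue by downward induction on $k$ from $k=\ell(U)$ to $k=1$. The base case $k=\ell(U)$ is immediate: $\widetilde{U}_{\ell(U)}=W$ by definition, and $W\sim\Anc(U)=U_{\ell(U)}$ by hypothesis.

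For the inductive step, assume $\widetilde{U}_{k+1}\sim U_{k+1}$ with $k<\ell(U)$. Since procedure Ancestor did not stop at iteration $k$, it proceeded through Steps~5 and~6, so $U'_k:=r_T(U_k)$ is $\widetilde{AB}$-whole with no non-reducible tail, $r(U'_k)$ is uniform, $h_U[k]=h_\eta(r(U'_k))$, $t_U[k]=t_\eta(r(U'_k))$, and $U_{k+1}=r_1(\varphi^{-1}(\eta(r(U'_k))))$. Because $r_1$ and $\varphi$ each preserve $\sim$, applying $\varphi$ to the inductive hypothesis and unwinding the definition of $U_{k+1}$ yields
\[\varphi(\widetilde{U}_{k+1})\sim\varphi(U_{k+1})\sim\eta(r(U'_k)).\]
The standard decomposition $r(U'_k)=h_\eta(r(U'_k))\,\eta(r(U'_k))\,t_\eta(r(U'_k))$, valid for every uniform word, lets us frame this equivalence by $h_U[k]$ and $t_U[k]$ to obtain $h_U[k]\,\varphi(\widetilde{U}_{k+1})\,t_U[k]\sim r(U'_k)$. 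Proposition~\ref{RedSave} then supplies $r(U'_k)\sim r_T(U_k)$, and framing once more by $L_U[k]$ and $R_U[k]$ gives
\[\widetilde{U}_k=L_U[k]\,h_U[k]\,\varphi(\widetilde{U}_{k+1})\,t_U[k]\,R_U[k]\sim L_U[k]\,r_T(U_k)\,R_U[k].\]

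What remains is to check $L_U[k]\,r_T(U_k)\,R_U[k]\sim U_k$. By Step~3 of Ancestor, $L_U[k]$ is nonempty exactly when $U_k$ has a left non-uniform tail, and in that case $L_U[k]=U_k[1]$; the analogous statement holds for $R_U[k]$. Whenever $L_U[k]\neq\lambda$, the length-$7$ prefix of $r_T(U_k)$ is the reduced left tail of the corresponding type, and prepending $L_U[k]$ restores the eight-letter form of the same tail (for example, $a\cdot abaabba=(aab)^2ba$ is again a left $A$-tail); the symmetric restoration occurs on the right. Hence $r_T\bigl(L_U[k]\,r_T(U_k)\,R_U[k]\bigr)=r_T(U_k)$, both words are $r_1$-reduced, and Observation~\ref{TailRedInv} delivers $L_U[k]\,r_T(U_k)\,R_U[k]\sim U_k$. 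This closes the induction, and specialising to $k=1$ gives $N_W(U)=\widetilde{U}_1\sim U_1\sim U$.

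The main obstacle is this final step: because $r_T$ is not itself $\sim$-preserving in general, the equivalence between $L_U[k]\,r_T(U_k)\,R_U[k]$ and $U_k$ cannot be obtained by a local factor substitution, and one is forced to route through Observation~\ref{TailRedInv}. This in turn requires the bookkeeping check that reinstating the single letters $L_U[k]$ and $R_U[k]$ indeed rebuilds eight-letter tails of the correct type, so that a second application of $r_T$ collapses them back to $r_T(U_k)$. The remaining equivalences are routine consequences of the $\sim$-invariance of $\varphi$ and $r_1$ together with Propositions~\ref{RegRed} and~\ref{RedSave}.
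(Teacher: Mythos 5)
Your proof is correct and follows essentially the same induction as the paper: base case at $k=\ell(U)$, then framing $\varphi(\widetilde{U}_{k+1})\sim\eta(r(r_T(U_k)))$ by $h[k],t[k]$, invoking Proposition~\ref{RedSave}, and finally framing by $L[k],R[k]$. The only cosmetic difference is in the last step, where you route through Observation~\ref{TailRedInv} via the identity $r_T(L[k]\,r_T(U_k)\,R[k])=r_T(U_k)$, whereas the paper argues directly that $L[k]\,r_T(U_k)\,R[k]$ and $U_k$ carry non-uniform tails of the same kind and that all such tails are mutually equivalent; these are the same underlying fact.
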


\begin{proof}
For $k=\ell(U)$, we have $\widetilde{U}_{\ell(U)}=W$ and $U_{\ell(U)}=\Anc(U)$ %
from definitions. Now suppose that  $\widetilde{U}_{k+1}\sim U_{k+1}$ for some $k < \ell(U)$. %
If we show that $\widetilde{U}_k\sim U_k$, the required statement will follow by induction. %
We have %
\[\widetilde{U}_{k+1}\sim U_{k+1}= r_1(\varphi^{-1}(\eta(r(r_T(U_k)))))\] %
whence we get  $\widetilde{U}_{k+1}\sim \varphi^{-1}(\eta(r(r_T(U_k))))$ %
and $h[k]\varphi(\widetilde{U}_{k+1})t[k]\sim r(r_T(U_k))$. %
Since procedure Ancestor does not stop on $k$th iteration, the word $r_T(U_k)$ %
is $\widetilde{AB}$-whole and has no non-reducible tails. %
From Proposition~\ref{RedSave} it now follows that $r(r_T(U_k))\sim r_T(U_k)$. Hence,  %
$h[k]\varphi(\widetilde{U}_{k+1})t[k]\sim r_T(U_k)$. %
Note that if $U_k$ has non-uniform tails, then the word $L[k]r_T(U_k)R[k]$ has %
non-uniform tails of the same kind as the word $U_k$.
Since all non-uniform tails of the same kind are equivalent, %
we conclude that $L_[k]r_T(U_k)R[k]\sim U_k$. %
Thus, we have %
\[\widetilde{U}_k=L[k]h[k]\varphi(\widetilde{U}_{k+1})t[k]R[k]\sim L[k]r_T(U_k)R[k]\sim U_k,\] %
as desired.
\end{proof}

So, normal series are ``inverted''  to primary series in a sense. %
Namely, by~Lemma~\ref{Norm1}, if a cube-free word $W$ is equivalent to $\Anc(U)$, %
then the $W$-normal series of $U$ allows one to restore the primary series %
$\{U_k\}_{k=1}^{\ell(U)}$ up to equivalent words. %
Actually, the $U_W$-normal series consists of the words $\widetilde{U}_k$ with %
more simple structure than $U_k$ in the general case. The following lemma describes %
the structure of the words $\widetilde{U}_k$.

\begin{lemma}
\label{Norm2} %
Suppose that $\Anc(U)\sim W$ for a word $U$ and a cube-free word $W$.  %
Let $\{\widetilde{U}_k\}_{k=0}^{\ell(U)}$ be %
the $W$-normal series of $U$. Then

\textup{1)}\  $\widetilde{U}_k$ is an $r_1$-reduced word for each $k=1,\dotsc, \ell(U)$; %

\textup{2)}\  $r_T(\widetilde{U}_k)$ is uniform for each $k=1,\dotsc, \ell(U){-}1$;

\textup{3)}\ $r_T(\widetilde{U}_k)=h[k]\varphi(\widetilde{U}_{k+1})t[k]$ %
for each $k=1, \dotsc, \ell(U){-}1$; %

\textup{4)} $h(r_T(\widetilde{U}_k))=h[k]$, $t(r_T(\widetilde{U}_k))=t[k]$, %
and $\eta(r_T(\widetilde{U}_k))= \varphi(\widetilde{U}_{k+1})$ %
for each $k=1, \dotsc, \ell(U){-}1$.
\end{lemma}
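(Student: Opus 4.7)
The plan is to prove all four claims simultaneously by downward induction on $k$ from $\ell(U)$ to $1$. In the base case $k = \ell(U)$, only (1) is required: $\widetilde{U}_{\ell(U)} = W$ is cube-free by hypothesis, hence $r_1$-reduced. For the inductive step, assume (1)--(4) hold for $k+1$. By the induction hypothesis $\widetilde{U}_{k+1}$ is $r_1$-reduced, so $\varphi(\widetilde{U}_{k+1}) \in \varphi(\Sigma^*)$. Setting $Z := h[k]\varphi(\widetilde{U}_{k+1})t[k]$, Lemma~\ref{UniqueXiEta}(2) shows $Z$ is uniform. Uniform words are cube-free (a factor $ccc$ would place two copies of $cc$ at consecutive positions, one even and one odd), so $Z$ is $r_1$-reduced; and every non-uniform tail is itself non-uniform (e.g.\ $aabaabba$ has $aa$ at positions $1$ and $4$), so a uniform word has no non-uniform tails and $r_T(Z) = Z$.

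If $L[k] = R[k] = \lambda$, then $\widetilde{U}_k = Z$ is uniform; claims (1)--(3) are immediate and (4) follows from Lemma~\ref{UniqueXiEta}(2). Otherwise, WLOG $L[k] \neq \lambda$ (the right-tail case is symmetric and $B$-tails follow by negation). By procedure Ancestor, $L[k] = U_k[1]$ and $U_k$ begins with $(aab)^n ba$ for some $n \geq 2$ (a left $A$-tail), so $L[k] = a$. Tracing Ancestor through $r_T$, $r$, $\eta$, $\varphi^{-1}$, $r_1$ yields $h[k] = a$ and $U_{k+1}$ begins with $\varphi^{-1}(baabba) = bab$. By Lemma~\ref{Norm1}, $\widetilde{U}_{k+1} \sim U_{k+1}$; since any $r_1$-reduced neighbour pair $XY^2Z, XY^3Z$ has $|Y|\geq 2$ (otherwise $Y^3$ would be a cube), each step of an $r_1$-linking chain preserves the first $|X| + 2|Y| \geq 4$ letters, and via $\sim_{r_1} = \pi_{r_1}^+$ the first three letters of $\widetilde{U}_{k+1}$ must also equal $bab$.

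Therefore $\varphi(\widetilde{U}_{k+1})$ begins with $\varphi(bab) = baabba$, and
\[
\widetilde{U}_k \;=\; L[k]\cdot h[k]\cdot\varphi(\widetilde{U}_{k+1})\cdot t[k]\cdot R[k] \;=\; a\cdot a\cdot baabba\cdots \;=\; aabaabba\cdots,
\]
exhibiting a left $A$-tail of length at least $8$. This tail does \emph{not} extend to $(aab)^3 ba = aabaabaabba$, because that would require $\widetilde{U}_k[7] = a$, whereas $\widetilde{U}_k[7] = b$ (the fifth letter of $baabba$). So the left tail has length exactly $8$, and $r_T^l$ strips precisely the single letter $L[k]$; the symmetric argument handles $r_T^r$ and $R[k]$. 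This proves (3). For (1), the junction $L[k]\cdot Z$ begins $a\cdot ab\cdots = aab\cdots$, creating no cube, so together with $Z$ being $r_1$-reduced we obtain $\widetilde{U}_k$ $r_1$-reduced. Claim (2) is the uniformity of $r_T(\widetilde{U}_k) = Z$, and (4) follows from Lemma~\ref{UniqueXiEta}(2) applied to $Z$.

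The principal obstacle is the exact-length-$8$ claim for the non-uniform tails of $\widetilde{U}_k$, which is what guarantees that $r_T$ strips precisely one letter on each affected side. This rests on the first-four-letters invariance of $\sim$ for $r_1$-reduced words (which pins down the three-letter prefix $bab$ of $\widetilde{U}_{k+1}$) combined with a concrete bookkeeping of how $\varphi$ lifts this prefix into a $\widetilde{U}_k$ prefix matching $(aab)^2 ba$ but not $(aab)^3 ba$. Secondary edge cases --- very short $\widetilde{U}_{k+1}$, overlap between left and right tails in short $\widetilde{U}_k$ (the paper notes tails may share up to $6$ symbols), and letter-alternating $Z$ --- require direct verification via Lemma~\ref{A1} and Observation~\ref{letteralt}, but introduce no new ideas.
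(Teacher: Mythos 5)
Your proof is correct and follows the same skeleton as the paper's: both set $Z=h[k]\varphi(\widetilde{U}_{k+1})t[k]$, get uniformity of $Z$ from Lemma~\ref{UniqueXiEta}, 2), observe that a uniform word has no non-uniform tails, and reduce everything to showing that when $L[k]\neq\lambda$ the word $L[k]Z$ carries a non-uniform tail of length exactly $8$, so that $r_T$ strips back precisely $L[k]$ and $R[k]$. The genuine difference is in how the prefix $abaabba$ of $Z$ is pinned down. The paper obtains $Z\sim r(r_T(U_k))$ from Lemma~\ref{Norm1} and simply asserts that this equivalence forces $Z$ to begin with $abaabba$; it then gets statement 4) from Proposition~\ref{HTXiEta}, which automatically covers letter-alternating words. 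You instead work one level down: you extract the three-letter prefix $bab$ of $U_{k+1}$ by tracing Ancestor, transfer it to $\widetilde{U}_{k+1}$ via the elementary observation that $r_1$-reduced neighbours share their first four letters (since $|Y|\geq2$ in any $r_1$-reduced pair $XY^2Z$, $XY^3Z$), and lift it through $\varphi$. This buys a fully explicit justification of the seven-letter prefix that the paper leaves implicit, at the cost of an extra induction hypothesis (you need statement 1) at level $k+1$ to invoke $\sim_{r_1}=\pi_{r_1}^+$, which the paper's version does not need). Two small debts remain on your side: (i) like the paper, you assert without proof that the complete reduction $r$ does not disturb the prefix $abaabba$ of $r_T(U_k)$ (this needs the $\widetilde{AB}$-wholeness guaranteed by Step~4 of Ancestor); (ii) your derivation of statement 4) from Lemma~\ref{UniqueXiEta}, 2) requires $Z$ not letter-alternating, and the deferred edge case is real --- one must check that $h[k]\neq\lambda$ actually forces $Z$ out of the letter-alternating class (it does: $h[k]\neq\lambda$ with $\eta(r(r_T(U_k)))$ letter-alternating forces $t[k]$ to coincide with the last letter of $\varphi(\widetilde{U}_{k+1})$, creating a square of a letter in $Z$) --- whereas the paper's detour through Proposition~\ref{HTXiEta} absorbs this case for free.
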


\begin{proof}
For $k=\ell(U)$, there is nothing to prove. Suppose  $k < \ell(U)$. %
Let us denote the word $h[k]\varphi(\widetilde{U}_{k+1})t[k]$ by $\widetilde{U}'_k$, %
and let $\{U_k\}_{k=1}^{\ell(U)}$ be the primary $U$-series.
The word $\widetilde{U}'_k$ is uniform by Lemma~\ref{UniqueXiEta}, 2). %
Since  $\widetilde{U}_{k+1}\sim U_{k+1}\sim \varphi^{-1}(\eta(r(r_T(U_k))))$ by Lemma~\ref{Norm1}, %
we get $\widetilde{U}'_k\sim r(r_T(U_k))$. From Proposition~\ref{HTXiEta} it now follows that %
$h(\widetilde{U}'_k)=h[k]$, $t(\widetilde{U}'_k)=t[k]$, and %
$\eta(\widetilde{U}'_k)=\varphi(\widetilde{U}_{k+1})$.

It remains to prove that $\widetilde{U}_k$ is an $r_1$-reduced word and %
$r_T(\widetilde{U}_k)=\widetilde{U}'_k$. %
We have $\widetilde{U}_k=L[k]\widetilde{U}'_kR[k]$. %
First, we apply the function $r^l_T$ to the word $L[k]\widetilde{U}'_k$. %
Clearly, if $L[k]=\lambda$, then $r^l_T(L[k]\widetilde{U}'_k) = \widetilde{U}'_k$. %
Now suppose that $L[k]\neq\lambda$. Without loss of generality we assume that $L[k]=a$.  %
This means that $U_k$ has the non-uniform left tail $(aab)^k(aab)^2ba$ for some $k\geq 0$. %
Hence the word $r(r_T(U_k))$ has the prefix $abaabba$.  %
Therefore the word $\widetilde{U}'_k$, which is equivalent to $r(r_T(U_k))$, %
begins with $abaabba$ as well. Since the word $L[k]\widetilde{U}'_k$ does not begin with $aaa$, %
it is $r_1$-reduced. So we obtain $r^l_T(L[k]\widetilde{U}'_k) = \widetilde{U}'_k$. %

In the symmetric way one can prove that $\widetilde{U}'_kR[k]$ is an $r_1$-reduced word %
and $r^r_T(\widetilde{U}'_kR[k]) = \widetilde{U}'_k$. %
So, the word $\widetilde{U}_k = L[k]\widetilde{U}'_kR[k]$ is $r_1$-reduced %
and we have %
\[r_T(\widetilde{U}_k) = r^r_T(r^l_T(L[k]\widetilde{U}'_kR[k])) = r^r_T(\widetilde{U}'_kR[k]) = %
\widetilde{U}'_k. \] %
This completes the proof.
\end{proof}

Next we investigate primary series and normal series for normal forms. %

\begin{lemma}
\label{Norm3} %
Let $\Anc(U)\sim W$ for a word $U$ and a cube-free word $W$. %
Then both the primary $N_W(U)$-series and the $W$-normal series of $N_W(U)$ coincide with %
the $U_W$-normal series.  %
\end{lemma}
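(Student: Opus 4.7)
The plan is to prove by forward induction on $k$ that the primary $N$-series of $N := N_W(U) = \widetilde{U}_1$ coincides word-by-word with the $W$-normal series $\widetilde{U}_1,\dotsc,\widetilde{U}_{\ell(U)}$, and that the arrays $L_N,R_N,h_N,t_N$ agree with $L_U,R_U,h_U,t_U$ at every index. Once this is established, the $W$-normal series of $N$ is defined by the same recursion $\widetilde{U}_{\ell(U)}=W$, $\widetilde{U}_k = L_N[k]h_N[k]\varphi(\widetilde{U}_{k+1})t_N[k]R_N[k]$ as that of $U$, with identical data; hence the two $W$-normal series coincide, and both halves of the lemma follow.

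For the base case $k=1$, I would use that $\widetilde{U}_1$ is already $r_1$-reduced (Lemma~\ref{Norm2}(1)), so Step~1 on input $N$ returns $N_1=\widetilde{U}_1$. For the inductive step $k\to k{+}1$ with $k<\ell(U)$, the key observation from Lemma~\ref{Norm2}(2)-(3) is that $r_T(\widetilde{U}_k) = h_U[k]\varphi(\widetilde{U}_{k+1})t_U[k]$ is uniform, hence $\widetilde{AB}$-whole by Proposition~\ref{RegRed} and fixed by $r$. Lemma~\ref{Norm2}(4) then yields $h_N[k]=h_U[k]$, $t_N[k]=t_U[k]$, and $\eta(r(r_T(\widetilde{U}_k)))=\varphi(\widetilde{U}_{k+1})$, so $N_{k+1} = r_1(\varphi^{-1}(\varphi(\widetilde{U}_{k+1}))) = \widetilde{U}_{k+1}$ (again by Lemma~\ref{Norm2}(1)). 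Agreement $L_N[k]=L_U[k]$, $R_N[k]=R_U[k]$ comes from Proposition~\ref{TailReg}(1) applied to $\widetilde{U}_k \sim U_k$ (Lemma~\ref{Norm1}): equivalent $r_1$-reduced words carry non-uniform tails of the same kind, and the kind uniquely determines the first (last) letter.

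I still need to check that procedure Ancestor does not halt prematurely on input $\widetilde{U}_k$ for $k<\ell(U)$. Step~2 does not fire because the equivalence classes of words of length $\leq 2$ contain only short words, and the classes $[aabaabbaabaa]$, $[bbabbaabbabb]$ are preserved under $\sim$, whereas the $U$-procedure does not halt at step $k$. Step~4 does not fire because $r_T(\widetilde{U}_k)$ is uniform, and Proposition~\ref{TailReg}(2) gives $r_T(\widetilde{U}_k)\sim r_T(U_k)$, after which Lemma~\ref{RedTails} transfers the absence of non-reducible tails from $r_T(U_k)$ to $r_T(\widetilde{U}_k)$.

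The hard part will be verifying that the procedure actually \emph{does} halt at iteration $\ell(U)$ on $\widetilde{U}_{\ell(U)}=W$, thus giving $\ell(N)=\ell(U)$ and $\Anc(N)=W$. When the $U$-procedure halted at Step~2 this is straightforward: either $U_{\ell(U)}$ is equivalent to a word of length $\leq 2$, forcing the cube-free representative $W$ to be short, or $U_{\ell(U)}\in[aabaabbaabaa]\cup[bbabbaabbabb]$, a class shared by $W$. The delicate case is when the $U$-procedure halted at Step~4, since Proposition~\ref{TailReg}(2) then cannot be invoked to transfer the defect from $r_T(U_{\ell(U)})$ to $r_T(W)$. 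I expect to handle this by a direct structural analysis of the possible shapes of $U_{\ell(U)}$ obstructing continuation: using that $W$ is cube-free, that $W$ and $U_{\ell(U)}$ share the non-uniform tail structure (Proposition~\ref{TailReg}(1)), and that $\widetilde{AB}$-wholeness is preserved under $\sim$ (Proposition~\ref{WholeEqv}), one forces the same defect on $r_T(W)$ and concludes that Step~4 halts on $W$ as well.
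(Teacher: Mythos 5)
Your proposal follows essentially the same route as the paper's proof: a forward induction showing that the primary series of $N_W(U)$ reproduces $\widetilde{U}_k$ term by term via Lemma~\ref{Norm2}, with the arrays transferred through Proposition~\ref{TailReg}, 1) and Proposition~\ref{HTXiEta}. The one step you flag as ``the hard part'' --- that procedure Ancestor halts on $W=\widetilde{U}_{\ell(U)}$ at the last iteration --- is precisely the step the paper dismisses with ``one can easily check'', and your sketched contrapositive (if $r_T(W)$ were $\widetilde{AB}$-whole with no non-reducible tail, then Proposition~\ref{TailReg}, 2) together with Proposition~\ref{WholeEqv} and Lemma~\ref{RedTails} would force the same properties on $r_T(\Anc(U))$, contradicting the Step-4 halt) does close it.
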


\begin{proof}
Let us denote $P=N_W(U)$, $m=\ell(U)$, $l=\ell(P)$, %
and let $\{U_k\}_{k=1}^m$, $\{\widetilde{U}_k\}_{k=1}^m$, and $\{P_k\}_{k=1}^l$ be %
the primary series of $U$, the $U_W$-normal series, and the primary series of $P$ respectively. %
First, we prove that the primary $P$-series coincides with  the normal $U_W$-series.  %

By Lemma~\ref{Norm2}, the word $P$ is $r_1$-reduced. Hence, $P_1=P=N_W(U)=\widetilde{U}_1$. %
Now suppose that $P_k=\widetilde{U}_k$ for some $k<m$. %
In view of $\widetilde{U}_k\sim U_k$ and $k<\ell(U)$, we conclude that %
$|P_k|>2$ and  $P_k\not\in[aabaabbaabaa]\cup[bbabbaabbabb]$. %
In addition, the word $r_T(P_k)$ is uniform by Lemma~\ref{Norm2}. %
Hence procedure Ancestor does not stop while processing the word  $P_k$, that is, $k<l$, and we have %
$r(r_T(P_k))=r_T(\widetilde{U}_k)$. Finally, by Lemma~\ref{Norm2} we get %
\begin{multline*}
P_{k+1} = r_1(\varphi^{-1}(\eta(r(r_T(P_k))))) = r_1(\varphi^{-1}(\eta(r_T(\widetilde{U}_k))))=\\%
r_1(\varphi^{-1}(\varphi(\widetilde{U}_{k+1}))) =r_1(\widetilde{U}_{k+1})= \widetilde{U}_{k+1},
\end{multline*}
as desired.

Now consider the word $P_m=\widetilde{U}_m=W\sim\Anc(U)$. %
One can easily check that if procedure Ancestor stops processing the word $\Anc(U)$ on %
Step 2 or Step 4, then it stops processing the word $P_m$ on Step 2 or respectively %
Step 4 as well. Thus, $l=m$, and the primary $P$-series coincides with the normal $U_W$-series.

Clearly, since $P_k=\widetilde{U}_k\sim U_k$, we have $L_U[k]=L_P[k]$, $R_U[k]=R_P[k]$, %
$h_U[k]=h_P[k]$, and $t_U[k]=t_P[k]$ for each $k=1, \dotsc, m$. %
This implies that the $P_W$-normal series coincides with the $U_W$-normal %
series. The proof is complete.
\end{proof}\bigskip

In particular, it follows from Lemma~\ref{Norm3} that $N_W(N_W(U))=N_W(U)$ for any  %
word $U$ and any cube-free word $W$ such that $W\sim\Anc(U)$. So, the repeated use %
of the primary and normal $U$-series has no effect.\medskip

Finally, we study the main normal series for almost overlap-free words.

\begin{lemma}
\label{Norm4} %
If $U$ is an almost overlap-free word and $U\not\in\{aaa, bbb\}$, then %
there exists the main direct normal $U$-series and it coincides with the primary $U$-series; %
in particular, $N(U)=U$.
\end{lemma}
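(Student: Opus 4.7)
The plan is to verify three things in sequence: that the main direct normal $U$-series exists, that it coincides term-by-term with the primary $U$-series $\{U_k\}_{k=1}^{\ell(U)}$, and hence that $N(U)=U$.

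For existence, Lemma~\ref{Anc1}(1) gives $\Anc(U)\in\mathcal{S}_1\cup\{a,b,ab,ba,aa,bb\}$. Every word in this set is almost overlap-free (by the definition of $\mathcal{S}_1$, and by inspection for the six short words), and since none equals $aaa$ or $bbb$, each is $r_1$-reduced and in particular cube-free. Hence $W=\Anc(U)$ is a legal choice of a cube-free almost overlap-free word equivalent to $\Anc(U)$, which both makes the normal $U_W$-series direct and makes it a main normal series. Its last term is $\widetilde{U}_{\ell(U)}=\Anc(U)=U_{\ell(U)}$.

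The core of the argument is a downward induction on $k$ proving $\widetilde{U}_k=U_k$. The base case $k=\ell(U)$ was just handled. For the inductive step at $k<\ell(U)$, the fact that Procedure Ancestor did not halt at iteration $k$ together with Proposition~\ref{NonReg} (applied exactly as in the proof of Lemma~\ref{Anc1}) forces $r_T(U_k)$ to be uniform, whence Proposition~\ref{RegRed} gives $r(r_T(U_k))=r_T(U_k)$. Step~6 of the procedure then provides the decomposition $r_T(U_k)=h[k]\,\eta(r_T(U_k))\,t[k]$, and Lemma~\ref{Anc1}(3) gives $U_k=L[k]\,r_T(U_k)\,R[k]$. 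Next, since $r_T(U_k)$ is almost overlap-free, Observation~\ref{phi_minus_one} shows that $\varphi^{-1}(\eta(r_T(U_k)))$ is overlap-free and hence already $r_1$-reduced, so the outer $r_1$ in the primary-series recurrence is vacuous and $\varphi(U_{k+1})=\eta(r_T(U_k))$. Substituting and invoking the induction hypothesis $\widetilde{U}_{k+1}=U_{k+1}$ yields exactly the recurrence defining $\widetilde{U}_k$:
\[U_k=L[k]\,h[k]\,\varphi(U_{k+1})\,t[k]\,R[k]=L[k]\,h[k]\,\varphi(\widetilde{U}_{k+1})\,t[k]\,R[k]=\widetilde{U}_k.\]

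Since $U\notin\{aaa,bbb\}$ is almost overlap-free, $U$ is $r_1$-reduced, so $U_1=r_1(U)=U$ and therefore $N(U)=\widetilde{U}_1=U_1=U$. The only step that requires genuine attention is verifying that the outer $r_1$ in the formula for $U_{k+1}$ acts trivially, because the primary-series recurrence wraps $\varphi^{-1}$ inside an $r_1$ while the normal-series recurrence applies $\varphi$ directly; this is exactly what Observation~\ref{phi_minus_one} provides, once one notes that $r_T(U_k)$ inherits almost overlap-freeness from $U_k$.
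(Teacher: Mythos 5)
Your proof is correct and follows essentially the same route as the paper's: downward induction from $\widetilde{U}_{\ell(U)}=\Anc(U)=U_{\ell(U)}$, using Lemma~\ref{Anc1} to write $U_k=L[k]\,r(r_T(U_k))\,R[k]$ with $r_T(U_k)$ uniform, and checking that the outer $r_1$ in the primary-series recurrence is vacuous so that $\varphi(U_{k+1})=\eta(r_T(U_k))$. The only cosmetic difference is that you justify the $r_1$-reducedness of $\varphi^{-1}(\eta(r_T(U_k)))$ via Observation~\ref{phi_minus_one} (overlap-freeness), while the paper argues directly that $ababab$ cannot be a factor of the almost overlap-free word $U_k$; both are equivalent.
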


\begin{proof}
Let $\{U_k\}_{k=1}^{\ell(U)}$ and $\{\widetilde{U}_k\}_{k=1}^{\ell(U)}$ %
be the primary and the direct normal $U$-series respectively.
We prove the lemma by induction on $k=\ell(U), \dotsc, 2, 1$. %
For $k=\ell(U)$, we have $\widetilde{U}_{\ell(U)}=\Anc(U)=U_{\ell(U)}$ by the definition of %
direct series. Note that since the word $\Anc(U)$ is almost overlap-free by Lemma~\ref{Anc1}, %
the direct normal series $\{\widetilde{U}_k\}_{k=1}^{\ell(U)}$ is %
the main direct normal series of $U$.

Now suppose that  $\widetilde{U}_k=U_k$, where $1<k\leq\ell(U)$, and prove %
that $\widetilde{U}_{k-1}=U_{k-1}$. %
Let $U'_k=\eta(r(r_T(U_{k-1})))$.  %
First, we prove that the word $\varphi^{-1}(U'_k)$ is $r_1$-reduced. Indeed, %
if $aaa\leq\varphi^{-1}(U'_k)$ (or $bbb\leq\varphi^{-1}(U'_k)$), %
then $ababab\leq U'_k$ (resp., $bababa\leq U'_k$). %
However, it is impossible, since $U'_k\leq r(r_T(U_{k-1}))\leq U_{k-1}$ by Lemma~\ref{Anc1} %
and $U_{k-1}$ is almost overlap-free. %
Hence the word $\varphi^{-1}(U'_k)$ is $r_1$-reduced and %
$U_k = r_1(\varphi^{-1}(U'_k))=\varphi^{-1}(U'_k)$.  %
Therefore we obtain %
\begin{equation*}
\begin{split}
U_{k-1} &= L[k{-}1]r(r_T(U_{k-1}))R[k{-}1] \quad \text{(by Lemma~\ref{Anc1})} \\
 &=L[k{-}1]\,h[k{-}1]U'_k\,t[k{-}1]R[k{-}1] = L[k{-}1]\,h[k{-}1]\varphi(U_k)\,t[k{-}1]R[k{-}1] \\
 &= L[k{-}1]\,h[k{-}1]\varphi(\widetilde{U}_k)\,t[k{-}1]R[k{-}1] = \widetilde{U}_{k-1},  %
\end{split}
\end{equation*}
as desired.
\end{proof}

Lemmas~\ref{Anc1}, \ref{Anc2}, and~\ref{Norm4} together provide the following assertion: %
\begin{itemize}
\item[]\textit{if a word $U$ is equivalent to an almost overlap-free word $V\not\in\{aaa, bbb\}$ and condition $(**)$ holds, %
then $V=N(U)$}.%
\end{itemize}

Indeed, from Lemma~\ref{Anc2} it follows that  $\Anc(U)\sim\Anc(V)$, $L_U=L_V$, $R_U=R_V$, %
$h_U=h_V$, and $t_U=t_V$. By Lemma~\ref{Anc1}, the word $\Anc(V)$ is almost overlap-free, in particular, %
it is cube-free.  Thus the $\Anc(V)$-normal series of the words $U$ and $V$ coincide.  %
Since the word $\Anc(V)$ is almost overlap-free, these normal series are the main normal series %
of the words $U$ and $V$. Hence, $N(U)=N(V)$. By Lemma~\ref{Norm4}, we get $V=N(V)=N(U)$, as desired.\medskip

We conclude this section explaining why do we use the function $\eta$, not $\xi$, %
for the construction of Algorithm AqEOF. Replacing  $\eta$ by $\xi$ %
in procedure Ancestor, we simplify the formulation of Lemma~\ref{Anc2}, since
condition~$(**)$ can be replaced by the following condition, which is obviously true: %
\[r(r_T(U_k))\sim r(r_T(V_k))\Rightarrow \xi(r(r_T(U_k)))\sim \xi(r(r_T(V_k)))\text{ for all } k < m,\]%
where $m=\min\{\ell(U),\ell(V)\}$. %
Clearly, Lemmas~\ref{Anc1}, \ref{Anc2}, and~\ref{Norm4} hold true after replacing %
$\eta$ by $\xi$. So, if we retain the same notion of normal series, %
we will get the following assertion: %
\begin{itemize}%
\item[] \textit{if a word $U$ is equivalent to an almost %
overlap-free word $V\not\in\{aaa, bbb\}$, then $V=N(U)$}.%
\end{itemize}

Unfortunately, Lemma~\ref{Norm1} fails if we use $\xi$ instead of $\eta$. %
Actually, in this case the word $r_T(\widetilde{U}_k)$ will be obtained from $\varphi(\widetilde{U}_{k+1})$ %
by deleting the symbols $h[k] = h_\xi(r(r_T(U_k)))$ and $t[k]=t_\xi(r(r_T(U_k)))$ added to %
the word $r(r_T(U_k))$ by procedure Ancestor. However, deleting  does not preserve %
the congruence $\sim$ in general case.
For example, consider the word $U=bababb$.  Using $\xi$ instead of $\eta$, %
we get $N(bababb) = babb$. The word $N(U)$ is almost overlap-free, but %
$babb\not\sim bababb$. In fact, the equivalence class $[bababb]$ contains no %
almost overlap-free words. So, we simplify the necessary condition for a given word $U$ to be %
equivalent to an almost overlap-free word, but we lose the sufficient condition provided %
by Lemma~\ref{Norm1}. On the other hand, the use of $\eta$ makes Lemma~\ref{Norm1} true. %
This is the main reason why the function $\eta$, not $\xi$, is used in the construction of Algorithm EqAOF. %
Surprisingly, as we will prove in the next section, under certain restrictions, %
the function $\eta$ preserves the congruence $\sim$.  Moreover, even if %
$U\sim V$ and $\eta(U)\not\sim\eta(V)$ for some pair $(U, V)$ (that is, the pair $(U, V)$ is bad), %
then an analogue of Lemma~\ref{Anc2} holds (see Lemma~\ref{Zlo} below).
\medskip %

\section{Bad pairs of equivalent uniform words}
\label{Zloo} %
This section is devoted to the study of bad pairs, for which $U\sim V$ and %
$\eta(U)\not\sim\eta(V)$.  In the sequel, we write $h(U)$ and $t(U)$ instead of $h_\eta(U)$ and $t_\eta(U)$.

First, we study bad pairs of neighbours. %
More precisely, we consider pairs $(U, V)$ such that $(U, V)\in\pi$ %
and $(\eta(U), \eta(V))\not\in\pi$.

\begin{lemma}
\label{EtaNonSave} %
Suppose $U$ and $V$ are uniform, $|U|<|V|$, $(U, V)\in\pi$, and %
$(\eta(U), \eta(V))\not\in\pi$. Then $\{h(U), t(U)\}=\{a,b\}$ %
and there exists an even-length word $Y$ such that $|Y|>2$, $U=YY$, $V=YYY$, %
$h(Y)=h(U)=h(V)$, and $t(Y)=t(U)=t(V)$.
\end{lemma}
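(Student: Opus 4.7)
Fix a witnessing decomposition $U=XY^2Z$, $V=XY^3Z$.  The strategy is to show that unless $X=Z=\lambda$, an alternative $\pi$-decomposition can be chosen so that head/tail stripping immediately yields $(\eta(U),\eta(V))\in\pi$, contradicting the hypothesis; the required structural properties of $Y$ then follow from uniformity.

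First, I would show $|Y|$ is even.  If $|Y|$ were odd, uniformity of $U$ would force $Y$ to be letter-alternating, since any internal $aa$ or $bb$ factor of $Y$ appears in the two copies of $Y$ in $U$ at positions of opposite parity; but a letter-alternating $Y$ of odd length satisfies $Y[1]=Y[|Y|]$, so the boundary factor $Y[|Y|]Y[1]$ is $aa$ or $bb$ and appears in $V=XY^3Z$ at the two positions $|X|+|Y|$ and $|X|+2|Y|$, which have opposite parities---contradicting uniformity of $V$.  With $|Y|$ even, one verifies $h(V)=h(U)$ and $t(V)=t(U)$.  Now the case $|X|\ge|h(U)|$ and $|Z|\ge|t(U)|$ is immediate: $\eta(U)=X''Y^2Z'$ and $\eta(V)=X''Y^3Z'$ for $X''=X[|h(U)|+1\dotso|X|]$ and $Z'=Z[1\dotso|Z|-|t(U)|]$, so $(\eta(U),\eta(V))\in\pi$---a contradiction.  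Thus either $X=\lambda$ with $h(U)\neq\lambda$, or $Z=\lambda$ with $t(U)\neq\lambda$.

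The main obstacle is to exclude the ``partial'' cases via a shift argument.  Suppose $X=\lambda$, $h(U)\neq\lambda$, $Z\neq\lambda$: then every $aa/bb$ factor in $U$ starts at an odd position.  Applying this to the even positions $|Y|$ (the boundary $Y[|Y|]Y[1]$) and $2|Y|$ (the boundary $Y[|Y|]Z[1]$) rules out $aa$ or $bb$ there, giving $Y[1]\neq Y[|Y|]$ and $Y[|Y|]\neq Z[1]$, whence $Z[1]=Y[1]$.  Setting $X'=Y[1]$, $Y'=Y[2\dotso|Y|]Y[1]$ and $Z''=Z[2\dotso|Z|]$ yields a valid $\pi$-decomposition $U=X'(Y')^2Z''$, $V=X'(Y')^3Z''$ (verified using $Z[1]=Y[1]$) with $|X'|=1\ge|h(U)|$ and $|Z''|=|Z|-1\ge|t(U)|$ (by a short parity case split on $|Z|$), which reduces to the previous case---contradiction.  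The symmetric scenario $Z=\lambda$, $t(U)\neq\lambda$, $X\neq\lambda$ is handled by the mirrored shift using $X[|X|]=Y[|Y|]$.

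This forces $X=Z=\lambda$, so $U=Y^2$ and $V=Y^3$.  Since $|U|=2|Y|$ is even, $|h(U)|=|t(U)|$, and the hypothesis $h(U)\neq\lambda$ (or $t(U)\neq\lambda$) makes both nonempty; the boundary argument gives $Y[1]\neq Y[|Y|]$, so $\{h(U),t(U)\}=\{Y[1],Y[|Y|]\}=\{a,b\}$.  The case $|Y|=2$ would make $U$ letter-alternating and force $h(U)=\lambda$ by convention---a contradiction---so $|Y|>2$.  Finally, since $Y$, $U$, and $V$ are all uniform words of even length whose $aa/bb$ factors start at odd positions, one reads off $h(Y)=Y[1]=h(U)=h(V)$ and $t(Y)=Y[|Y|]=t(U)=t(V)$, completing the plan.
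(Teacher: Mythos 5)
Your proof is correct and follows essentially the same route as the paper's: the same parity arguments force $|Y|$ to be even and dispose of the case $|X|\ge|h(U)|$, $|Z|\ge|t(U)|$, and your cyclic shift $Y\mapsto Y[2\dotso|Y|]Y[1]$ applied to $U$ and $V$ produces exactly the conjugate factorization that the paper exhibits directly inside $\eta(U)$ and $\eta(V)$ (the paper strips first and then shifts; you shift first and then strip). One small point to patch: at the end you read off $h(Y)=Y[1]$ from the positions of the $aa/bb$ factors of $Y$, which silently assumes $Y$ is not letter-alternating (for a letter-alternating $Y$ that premise is vacuous and the convention gives $h(Y)=\lambda\ne h(U)$); this is excluded by the same one-line argument you already use for $|Y|=2$, since a letter-alternating $Y$ with $Y[1]\ne Y[|Y|]$ would make $U=Y^2$ letter-alternating and hence $h(U)=\lambda$, a contradiction.
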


\begin{proof}
Since $|U|<|V|$, we can write  $U=XYYZ$ and $V=XYYYZ$, where %
$X,Z \in\Sigma^*$ and $Y\in\Sigma^+$. %
We show that the word $Y$ is the required one.

First, we prove that the length of $Y$ is even. Assume the converse. %
If the word $Y$ is letter-alternating, then $V$ has the factor $Y[|Y|]YY[1]$ of the form $a\widetilde{A}a$ or %
$b\widetilde{B}b$, a contradiction with the uniformity of $V$. %
Now let $Y=PccQ$ for some $P, Q\in\Sigma^*$, and $c\in\Sigma$.  %
Then we have $ccQPcc\leq YY\leq U$ and $ccQPcc\leq YY\leq V$. %
This contradicts the uniformity of $U$ and $V$ again, since the length of $PQ$ is odd. %
Hence, the length of $Y$ is even.

By Proposition~\ref{HTXiEta}, we have $h(U) = h(V) = c$ and $t(U)=t(V) = d$ for some
$c,d\in\{a,b,\lambda\}$. If $|c|\leq |X|$ and $|d|\leq |Z|$, then %
the words $\eta(U)$ and $\eta(V)$ are neighbours by definition. Hence, at least one of these %
inequalities should fail.%

Suppose that $|d|\leq|Z|$ and $|c| > |X|$, that is, $c\neq\lambda$ and $X=\lambda$. %
Then  $Y=cY'e$ for some $Y'\in\Sigma^*$ and $e\in\Sigma$. %
So, we get $\eta(U) =Y'ecY'eZ'$ and $\eta(V) = Y'ecY'ecY'eZ'$, where $Z'\in\Sigma^*$. %
By Observation~\ref{EvenPhi}, the length of $Z'$ is odd, in particular, $Z'\neq\lambda$. %
Also, we have $Z'[1]=\overline{e}=c$, since the factors $ec$ and %
$eZ'[1]$ start in $\eta(U)$ %
at odd positions. Thus the words $\eta(U) = (Y'ec)^2Z''$ and %
$\eta(V) = (Y'ec)^3Z''$, where $Z'' = Z'[2\dotso|Z'|]$, are neighbours, a contradiction.

The case $|c|\leq|X|$ and $|d| > |Z|$ is symmetric to the above one. %
Thus, $X=Z=\lambda$ and $c, d\neq\lambda$. Then $U=YY$, $V=YYY$,  $Y[1]=c$, and $Y[|Y|]=d$.  %
Let  $Y = cY'd$, where $Y'\in\Sigma^*$. %
Then  $\eta(U) = Y'dcY'$ and $\eta(V) = Y'dcY'dcY'$. By Observation~\ref{EvenPhi}, %
we have $Y'\in\varphi(\Sigma^*)$ and $c=\overline d$. If the word $Y$ is letter-alternating, %
then the words $U$ and $V$ appear to be $\varphi$-images, since the length of $Y$ is even. %
However, this contradicts the assumption $h(U), t(U)\neq\lambda$. %
Hence, $Y$ is not letter-alternating. From~Lemma~\ref{UniqueXiEta}, 2) it now follows %
that $h(Y)=c$ and $t(Y)=d$.

Note that $Y\neq cd=c\overline c$, since the word $Y$ is not letter-alternating. %
Therefore,  $|Y|>2$. This completes the proof.
\end{proof}

As a consequence we get the next proposition.

\begin{proposition}
\label{EtaBad}%
If a pair $(U, V)$ is bad, then \[\{h(U), t(U)\}=\{h(V), t(V)\}=\{a,b\}.\]
\end{proposition}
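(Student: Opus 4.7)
First observe that by Proposition~\ref{HTXiEta}, the relation $U\sim V$ together with the uniformity of both words yields $h(U)=h(V)$ and $t(U)=t(V)$. Hence it suffices to show that $\{h(U),t(U)\}=\{a,b\}$, i.e., that $h(U)$ and $t(U)$ are distinct letters of $\Sigma$ (neither being equal to $\lambda$).

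The plan is to reduce the global statement about the bad pair $(U,V)$ to a single link in a neighbouring chain and then apply Lemma~\ref{EtaNonSave}. By Proposition~\ref{Pi_r}, the relation $U\sim V$ on uniform words provides an $r$-linking sequence $U=R_0,R_1,\dots,R_n=V$ in which every $R_i$ is uniform and $(R_{i-1},R_i)\in\pi$ for every $i$. If $\eta(R_{i-1})\sim\eta(R_i)$ held for every $i$, then transitivity would give $\eta(U)\sim\eta(V)$, contradicting the assumption that the pair $(U,V)$ is bad. Therefore some index $i$ must satisfy $\eta(R_{i-1})\not\sim\eta(R_i)$; in particular $(\eta(R_{i-1}),\eta(R_i))\notin\pi$, since $\pi$ is contained in $\sim$. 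Moreover the relation $\pi$ replaces $Y^2$ by $Y^3$ for some non-empty $Y$, so $|R_{i-1}|\neq|R_i|$, and we may assume without loss of generality that $|R_{i-1}|<|R_i|$.

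At this point all hypotheses of Lemma~\ref{EtaNonSave} are satisfied by the pair $(R_{i-1},R_i)$, so the lemma yields $\{h(R_{i-1}),t(R_{i-1})\}=\{a,b\}$. Since each $R_j$ in the chain is uniform and equivalent to $U$, Proposition~\ref{HTXiEta} gives $h(R_{i-1})=h(U)$ and $t(R_{i-1})=t(U)$; combined with the reduction of the first paragraph, this settles the claim. The main technical content has already been absorbed into Lemma~\ref{EtaNonSave}; the only point requiring a moment's thought is the observation that no structural analysis of $U$ or $V$ themselves is needed, since the equivalence-invariance of $h$ and $t$ is enough to transport the conclusion of the lemma from an arbitrary link of the chain back to the original pair.
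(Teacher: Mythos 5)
Your proof is correct and follows essentially the same route as the paper: take an $r$-linking sequence via Proposition~\ref{Pi_r}, locate a bad link by transitivity, apply Lemma~\ref{EtaNonSave} to that link, and transport the conclusion back with Proposition~\ref{HTXiEta}. Your explicit remark that $\eta(R_{i-1})\not\sim\eta(R_i)$ implies $(\eta(R_{i-1}),\eta(R_i))\notin\pi$ (since $\pi\subseteq\sim$) is a small point the paper leaves implicit, and it is needed to invoke the lemma as stated.
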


\begin{proof}
Let $\{W_k\}_{k=0}^n$ be an $r$-linking $(U, V)$-series. %
Clearly, if all the pairs $(W_{k-1}, W_k)$ are good, then the pair $(U, V)$ is good as well. %
Hence, there exists an integer $k'\geq1$ such that the pair $(W_{k'{-}1}, W_{k'})$ is bad. %
By Lemma~\ref{EtaNonSave}, we have $\{h(W_{k'}), t(W_{k'})\}=\{a,b\}$. %
The required statement now follows from Proposition~\ref{HTXiEta}. %
\end{proof}

Now consider a bad pair $(YY, YYY)$. %
By Lemma~\ref{EtaNonSave}, we have %
\begin{equation*}
\varphi^{-1}(\eta(YY)) = \varphi^{-1}(\eta(Y)t(Y)h(Y)%
\eta(Y))= \varphi^{-1}(\eta(Y))t(Y)\varphi^{-1}(\eta(Y))  %
\end{equation*}
and
\begin{multline*}
\varphi^{-1}(\eta(YYY)) = \varphi^{-1}(\eta(Y)t(Y)h(Y)%
\eta(Y)t(Y)h(Y)\eta(Y))\\=
\varphi^{-1}(\eta(Y))t(Y)\varphi^{-1}(\eta(Y))t(Y)\varphi^{-1}(\eta(Y)).  %
\end{multline*}
So, we get a pair of the form $(XcX, XcXcX)$, where $X\in\Sigma^+$, $c\in\Sigma$, %
and $XcX\not\sim XcXcX$. %
For the sequel, we need to establish some useful properties of such pairs.

\begin{lemma}
\label{XaXaX} %
Let $XcX\not\sim XcXcX$ for some $X\in\Sigma^+$, $c\in\Sigma$. Then

\textup{1)}\quad $r_1(XcX)= r_1(X)cr_1(X),\quad r_1(XcXcX)=r_1(X)cr_1(X)cr_1(X)$.

\noindent Suppose additionally that the words $XcX$ and $XcXcX$ are $r_1$-reduced. %
Then

\textup{2)}\quad If $X\neq \overline{cc}$, then $XcX, XcXcX\not\in[W]_{r_1}$ %
for any $W\in\mathcal{S}_1$.

\textup{3)}\quad If one of the words $XcX$ and $XcXcX$ is letter-alternating, %
then all the words $XcX$, $XcXcX$, and $X$ are letter-alternating and have odd length.

\textup{4)}\quad If one of the words $XcX$ and $XcXcX$ has a non-uniform tail, %
then the other has the same non-uniform tail. %

\textup{5)}\quad If one of the words $XcX$ and $XcXcX$ has a non-reducible tail, %
then the other has the same non-reducible tail. %

\textup{6)}\quad If the word $XcXcX$ is $\widetilde{A}$-whole ($\widetilde{B}$-whole), %
then both $X$ and $XcX$ are $\widetilde{A}$-whole (resp., $\widetilde{B}$-whole). %
If $X\not\in\{ab,ba\}$ and the word $XcX$ is $\widetilde{A}$-whole ($\widetilde{B}$-whole), %
then  both $X$ and $XcXcX$ are $\widetilde{A}$-whole (resp., $\widetilde{B}$-whole). %

\textup{7)}\quad If the words $XcX$ and $XcXcX$ are $\widetilde{AB}$-whole and have %
no non-reducible tails, then $r(XcX) = r(X)cr(X)$ and $r(XcXcX) = r(X)cr(X)cr(X)$.

\textup{8)}\quad If the words $XcX$ and $XcXcX$ are uniform, %
then the length of $X$ is odd, $h(XcX)=h(XcXcX)=h(X)$, and $t(XcX)=t(XcXcX)=t(X)$.
\end{lemma}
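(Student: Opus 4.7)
My plan is to prove the eight parts in order, since later parts build on earlier ones, exploiting both the standing hypothesis $XcX \not\sim XcXcX$ and (from part 2 on) the assumption that $XcX$ and $XcXcX$ are $r_1$-reduced.

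For part 1, I would argue by contradiction. If $r_1(XcX) \neq r_1(X)\,c\,r_1(X)$ (or the analogous equality for $XcXcX$), a run of $c$'s must cross the boundary, forcing $X$ to both start and end with $c$. Write $X = c^m X' c^k$ with $m,k \geq 1$ and $X'$ neither starting nor ending with $c$. Then $XcX = c^m X' c^{k+m+1} X' c^k$ and $XcXcX = c^m X' c^{k+m+1} X' c^{k+m+1} X' c^k$. Applying the identities $c^n \sim c^2$ for $n \geq 2$ to collapse the inner $c$-blocks and the substitution $Y^2 \sim Y^3$ with $Y = X'c^2$ (embedded in the appropriate context), I would derive $XcX \sim XcXcX$, contradicting the hypothesis. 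The symmetric derivation works for $XcXcX$. For part 2, I would enumerate the classes $[W]_{r_1}$ listed in Lemma~\ref{A1} and show that a word of the form $XcX$ or $XcXcX$ fits into one of these regular languages only if $X = \overline{cc}$, in which case it falls into the exceptional classes $[aabaabbaabaa]_{r_1}$ or $[bbabbaabbabb]_{r_1}$. Part 3 is a direct check: if $XcX$ is letter-alternating then so is $X$, and the junctions $X[|X|]\,c$ and $c\,X[1]$ being letter-alternating forces $X[1] = X[|X|] = \overline{c}$, whence $X$ has odd length, and then $XcXcX$ is itself letter-alternating of odd length; the symmetric direction is analogous.

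For parts 4 and 5, I would exploit that $XcX$ is a prefix of $XcXcX$ and $X$ is a common suffix. A left non-uniform or non-reducible tail of $XcX$ is immediately a prefix of $XcXcX$ and vice versa whenever the tail fits inside $XcX$; when a tail of $XcXcX$ extends past position $|XcX|$, I would argue that either $XcX$ admits a shorter tail of the same kind (using that every prefix of a left $A$-tail of the form $(aba)^{n'}(ab)^{m'}aa$ with $n'\geq 1, m'\geq 2$ is again a left $A$-tail), or one can use the relation $Y^2\sim Y^3$ applied to the tail-pattern to derive $XcX \sim XcXcX$, again a contradiction. Right tails are handled symmetrically via the shared suffix $X$. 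For part 6, I would use the definition of $\widetilde{A}$-whole directly: any factor $a\widetilde{A}a$ occurring inside a copy of $X$ must have its framing $ab\cdots ba$ in $XcXcX$, and the position constraints (together with the presence of $c$ at the boundaries of $X$) force the framing to stay inside that same copy of $X$, proving $X$ is $\widetilde{A}$-whole; the hypothesis $X \notin \{ab,ba\}$ excludes the degenerate case where the frame would require two copies of $X$ to interact. The $\widetilde{B}$-whole case is symmetric.

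Part 7 then follows by combining parts 1, 4, 5, 6: since $r = r_B \circ r_A \circ r_1$, it suffices to show that each of $r_1$, $r_A$, $r_B$ commutes with the $X \mapsto XcX$ (resp.\ $XcXcX$) construction under our assumptions. Part 1 gives this for $r_1$; for $r_A$ and $r_B$, the same splitting argument works, since $\widetilde{AB}$-wholeness plus the absence of non-reducible tails rules out any reduction crossing the middle letter $c$. Finally, for part 8, uniformity of both $XcX$ and $XcXcX$ means all factors $aa$ or $bb$ in them start at positions of one fixed parity; comparing the positions of corresponding factors in the first and second copies of $X$ (shifted by $|X|+1$) forces $|X|+1$ to be even, i.e., $|X|$ odd. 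The identities $h(XcX) = h(X)$, $t(XcX) = t(X)$, and their analogues for $XcXcX$, follow from Lemma~\ref{UniqueXiEta} applied to $X$ as a prefix/suffix of $XcX$ and $XcXcX$ (both of which are uniform and extend $X$ by the same letter $c$ on the side already matching $X$'s first or last letter), together with Proposition~\ref{HTXiEta}. The main obstacles I anticipate are the case enumeration in part 2 and the careful bookkeeping in parts 4--5 when a tail in $XcXcX$ overflows $XcX$; everything else reduces either to the relation $Y^2 \sim Y^3$ used in part 1 or to direct structural inspection.
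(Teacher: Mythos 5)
Your overall strategy (turn every structural obstruction into a derivation of $XcX\sim XcXcX$, contradicting the hypothesis) is the same as the paper's, but two of your key sub-claims do not hold as stated. The most serious problem is part 2: your assertion that a word of the form $XcX$ or $XcXcX$ matches one of the regular expressions of Lemma~\ref{A1} only if $X=\overline{cc}$ is false as a structural statement. For $X=(bba)^2(abb)^2$ and $c=a$ one checks that $XaX=(bba)^2\bigl(a(bba)^4\bigr)(abb)^2$, which matches the expression for $[bbabbaabbabb]_{r_1}$ in Lemma~\ref{A1}, 9). The class $[bbabbaabbabb]_{r_1}$ (and its negation) must therefore be treated separately: the paper first observes that $Vc$ is a square or a cube to eliminate all the other classes, and then, for this remaining class, shows that whenever $XcX$ or $XcXcX$ matches the expression, $X$ itself matches it too, whence $X\sim XcX\sim XcXcX$ and the standing hypothesis is violated. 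Without this extra equivalence argument your enumeration cannot close. The second serious problem is part 7: you assert that $\widetilde{AB}$-wholeness plus the absence of non-reducible tails ``rules out any reduction crossing the middle letter $c$'', but that is precisely the content of the claim and it is not a structural fact. The paper's proof is that a crossing reduction, applied simultaneously to $XcX$ and $XcXcX$, produces a pair of \emph{neighbours} $RTTQ$ and $RTTTQ$ which, by Proposition~\ref{RedSave}, are equivalent to $XcX$ and $XcXcX$ respectively --- so the hypothesis $XcX\not\sim XcXcX$ is what excludes the crossing reduction, not wholeness. Your plan omits this mechanism entirely.

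There are further, smaller gaps. In part 1 a run of $c$'s crossing the boundary does \emph{not} force $X$ to both begin and end with $c$ (take $X$ ending in $cc$ and beginning with $\overline c$); the same $Y^2\sim Y^3$ trick refutes those cases, but your decomposition $X=c^mX'c^k$, $m,k\ge1$, misses them. In parts 4--5 the conclusion to be proved is that the other word has the \emph{same} tail, so when a tail of $XcXcX$ overflows $XcX$ you must show that case is impossible (the paper derives that $X$ would then have a periodic form making both words tail-free, or be simultaneously letter-alternating and not); producing ``a shorter tail of the same kind'' would not establish the statement, and your auxiliary claim that every prefix of a left $A$-tail is again a left $A$-tail is false. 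In part 6 the framing of a factor occurring in the middle copy of $X$ need not stay inside that copy; the correct argument (used in the paper) is that $X$ and $XcX$ are each both a prefix and a suffix of $XcXcX$, so the prefix occurrence supplies the left frame and the suffix occurrence supplies the right frame. In part 8 your parity argument breaks down when $X$ is letter-alternating (there are then no factors $aa$ or $bb$ inside $X$ to compare; one must instead inspect the junction factors of $(ab)^ka(ab)^ka(ab)^k$), and Proposition~\ref{HTXiEta} cannot be invoked for $X$ versus $XcX$ since these words are not equivalent.
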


\begin{proof}
Prove the first statement of the lemma. %
Clearly, if \[aaa\in\{X[|X|{-}1]\,X[|X|]a, X[|X|]aX[1],aX[1]\,X[2]\},\] %
then $XaXaX\sim XXX\sim XX\sim XaX$, in contradiction with the lemma's %
condition. Hence, the factors $aaa$ and $bbb$ can occur in the words $XaX$ and $XaXaX$ %
inside the factor $X$ only. This proves the first statement. \medskip

Now suppose that both words $XaX$ and $XaXaX$ are already $r_1$-reduced. Prove the second %
statement. Assume the converse. Then $V\in [W]_{r_1}$  for some words $V\in\{XaX, XaXaX\}$ %
and $W\in\mathcal{S}_1$.  Note that the word $Va$ is either a square or a cube. %
A straightforward check shows that no words $V\not\in[bbabbaabbabb]_{r_1}$
with this property satisfy the regular expressions listed in Lemma~\ref{A1}. %
Hence, $W=bbabbaabbabb$. By Lemma~\ref{A1}, the word $V$ has the following form: %
\[(bba)^2(bba)^*(a(bba)^+)^*(abb)^*(abb)^2.\]
In particular, we get $X[1]=X[|X|]=b$. %
Therefore the factor  $aa$ occurs in $V$ inside the factor  $X$ only. %
It is easy to check that the word $X$  has %
the form $(bba)^2(bba)^*(a(bba)^+)^*(abb)^+$ as a prefix of $V$ and %
has the form $(bba)^+(a(bba)^+)^*(abb)^*(abb)^2$ as a suffix of $V$. %
Combining these forms, we conclude that $X$ has the form above.  %
Obviously, the words $XaX$ and $XaXaX$ has the same form as well. %
Thus we get $X\sim XaX\sim XaXaX$, in contradiction with the lemma's condition. %
So, $V$ is not equivalent to a word from $\mathcal{S}_1$.\medskip

If the word $XaXaX$ ($XaX$) is letter-alternating, then the words $XaX$ and $X$ (resp., $X$) %
are letter-alternating as well. On the other hand, since the factors $aa$ and $bb$ occur in %
$XaXaX$ inside the factor $XaX$ only, the word $XaXaX$ is letter-alternating whenever $XaX$ is. %
As it will be shown in the proof of 8), the length of $X$ is odd whenever the words $XaX$ and %
$XaXaX$ are uniform (in particular, letter-alternating).  Hence, the lengths of both %
words $XaXaX$ and $XaX$ are odd as well. The proof of statement 3 is complete.\medskip %

Prove statements 4--5. Obviously, %
any tail of the word $XaX$ (non-reducible or non-uniform) is the tail of $XaXaX$ as well. %
Conversely, let $T$ be a tail of $XaXaX$.  %
If $|T|\leq|XaX|$, then $T$ is a tail of $XaX$ as well, and there is nothing to prove. %
Suppose that $|T| > |XaX|$. Without loss of generality we may assume that %
$T$ is a left tail. So, $XaX$ is a proper prefix of $T$. %
Consider all possible cases depending on kind of the tail $T$.

Clearly, that $T$ is not a non-uniform $A$-tail. Actually, since the words $XaX$ and $XaXaX$ %
are $r_1$-reduced, the word $X$ cannot begin with $aa$. %
Now suppose that $T$ is a non-uniform $B$-tail, that is, $T=(bba)^{2+k}ab$ for some $k\geq0$. %
Since the word $XaX$ is a proper prefix of $T$,  the word $Xa$ is a prefix %
of the word $(bba)^{2+k}$. Therefore $X$ has the form $(bba)^*bb$. Hence,  the words $XaX$ and %
$XaXaX$ have the form $(bba)^*bb$ as well and have no tails, which is impossible. %

If $T$ is a non-reducible $B$-tail, that is, $T= (bab)^kbaba(ba)^lbb$ for some $k>0$ %
and $l\geq0$, then  $X$ begins with $babb$, in particular, the word $X$ is not letter-alternating. %
On the other hand, the word $T$ does not contain the factor $aa$, therefore $X[|X|]=b$. %
Thus, we have $baba=X[|X|]aX[1..2]\leq XaX$. Since  $XaX$ is %
a prefix of $T$, we conclude that $|Xa|>|(bab)^k|$ and $X\leq (ba)^{l+2}b$.  %
So, the word $X$ appears to be letter-alternating, a contradiction.

Finally, let $T=(aba)^kabab(ab)^laa$ for some $k>0$ and $l\geq0$ ($T$ is a non-reducible $A$-tail). %
Obviously, $X[1]=a$. Since the words $XaX$ and $XaXaX$ are %
$r_1$-reduced, we have $X[|X|]=b$. Also, from $aa=aX[1]\leq aX$ it follows that $|X|< |(aba)^k|$. %
Thus, the word $X$ has the form $(aba)^*ab$. %
It is easy to check that the words $XaX$ and $XaXaX$ have the form $(aba)^*ab$ as well %
and have no tails. This contradiction completes the proof of statements 4--5.
\medskip

Now prove statement 6. First, we prove the following assertion: %
\begin{itemize}%
\item[] \textit{if a word $P$ is a prefix and %
a suffix of a word $Q$ and $Q$ is $\widetilde{A}$- or $\widetilde{B}$-whole, then %
the word $P$ is $\widetilde{A}$- or $\widetilde{B}$-whole respectively as well.} %
\end{itemize}
Indeed, let $Q$ be $\widetilde{A}$-whole and $P=Sa(ab)^kaaT$ for some %
words $S, T\in\Sigma^*$, and integer $k>0$. %
Since the word $P=Sa(ab)^kaaT$ is a prefix of the word $Q$ and $Q$ is $\widetilde{A}$-whole, %
we have $S=S'ab$ for some $S'\in\Sigma^*$. On the other hand, the word $Sa(ab)^kaaT$ is %
a suffix of $Q$, therefore $T=baT'$ for some $T'\in\Sigma^*$. Thus, %
we get $P = S'ab\,a(ab)^kaa\,baT'$, that is, the factor $a(ab)^kaa$ occurs in $P$ inside %
the factor $(aba)(ab^k)a(aba)$. This means that the word $P$ is $\widetilde{A}$-whole, as desired. %
The same argument works for a $\widetilde{B}$-whole word $Q$. So, the first part of %
statement 6 is proved. %

To complete the proof of statement 6, suppose that  the word $XaX$ is $\widetilde{A}$-whole %
(or $\widetilde{B}$-whole) and show that the word $XaXaX$ is $\widetilde{A}$ (resp., or %
$\widetilde{B}$-whole) as well. %
If $XaXaX$ contains no factor of the form $a\widetilde{A}a$ (resp., $b\widetilde{B}b$), %
there is nothing to prove.
Now let $XaXaX=RST$, where $R, T\in\Sigma^*$,  and $S=a(ab)^kaa$ (resp., $S=b(ba)^kbb$)  for %
some integer $k>0$. %
Obviously, either $S\leq XaX$ or $aXa\leq S$.
In the first case, the word $S=a(ab)^kaa$ (resp., $S=b(ba)^kbb$) occurs in $XaXaX$ inside %
the factor $aba(ab)^kaaba$ (resp., $bab(ba)^kbbab$), as desired. %
In the second case, the word $X$ is letter-alternating. %
Clearly, if the length of $X$ is odd, then both words $XaX$ and $XaXaX$ are letter-alternating as well, %
in contradiction with the assumption $S\leq XaXaX$. %
Thus, $X=(ab)^l$ or $X=(ba)^l$ for some integer $l\geq1$ whence %
$XaXaX=(ab)^la(ab)^la(ab)^l$ or $XaXaX=(ba)^la(ba)^la(ba)^l$. %
Since $X\not\in\{ab,ba\}$ by the lemma's condition, we have $l>1$. Obviously, both words $(ab)^la(ab)^l(ab)$ %
and $(ba)^la(ba)^la(ba)^l$  are $\widetilde{AB}$-whole whenever $l>1$. %
This completes the proof of statement 6. \medskip

Now suppose that the words $XaX$ and $XaXaX$ are $\widetilde{AB}$-whole %
and have no non-reducible tails. Prove statement 7. %
By Proposition~\ref{RedSave}, we get $XaX\sim r(XaX)$, $XaXaX\sim r(XaXaX)$. Hence, %
by Proposition~\ref{WholeEqv} and Lemma~\ref{RedTails}, the words $r(X)ar(X)$ and %
$r(X)ar(X)ar(X)$ are $\widetilde{AB}$-whole, have no non-reducible tails, %
and satisfy obvious equalities $r(XaX) = r(r(X)ar(X))$ and $r(XaXaX) = r(r(X)ar(X)ar(X))$. %
So, we will assume that the word $X$ is already $r$-reduced and prove that the %
words $XaX$ and $XaXaX$ are $r$-reduced as well.

First, prove that the word $Xa$ is $r$-reduced. If it is not $r$-reduced, then $Xa=X'a(ab)^kaa$ for %
some $X'\in\Sigma^*$. After the reduction $a(ab)^kaa\to aa$ inside the words $XaX$ and $XaXaX$, %
we obtain two equivalent words: %
\begin{gather*}
XaX = X'\underbrace{a(ab)^kaa}X'a(ab)^ka \to  X'aaX'aa(ba)^k = U,\\
XaXaX =X'\underbrace{a(ab)^kaa}X'\underbrace{a(ab)^kaa}X'a(ab)^ka \to %
X'aaX'aaX'aa(ba)^k = V.  %
\end{gather*}
It follows from Proposition~\ref{RedSave} that $XaX\sim U$ and $XaXaX\sim V$.  %
So, we get $XaX\sim XaXaX$, which is impossible. Hence, the word $Xa$ is $r$-reduced.

In a symmetric way one can prove that the word $aX$ is $r$-reduced as well. %
So, if the word $XaX$ is not $r$-reduced, then it can be decomposed %
as $XaX = PcQaRcS$, where $P, Q, R, S\in\Sigma^*$, $c\in\Sigma$, $PcQ=RcS=X$, and %
the factor $cQaRc$ has the form $a\widetilde{A}a$ or $b\widetilde{B}b$.
Clearly, the prefix $R$ and the suffix $Q$ of the word $X$ do not overlap inside $X$,  %
since the letter-alternating word $R$ cannot contain the factor $cQ[1]=cc$. %
Thus, $X=RTQ$, where either $T=\lambda$ or $T[1] = T[|T|]=c$. %
So, the reduction $cQaRc\to cc$ transforms the pair $(XaX, XaXaX)$ to the pair %
of neighbours  $(U, V)$:
\begin{gather*}
XaX=RT\underbrace{QaR}TQ\to RTTQ = U, \\ %
XaXaX = RT\underbrace{QaR}T\underbrace{QaR}TQ\to RTTTQ = V.
\end{gather*}
Since $XaX\sim U$ and $XaXaX\sim V$, we get %
$XaX\sim XaXaX$, which is impossible. Hence, the word  $XaX$ is $r$-reduced.

Finally, suppose that the word $XaXaX$ is not $r$-reduced %
and let $S$ be a factor of $XaXaX$ of the form $a\widetilde{A}a$ or $b\widetilde{B}b$. %
Then we have $aXa\leq S$, because the word $XaX$ is r-reduced. So, the word $X$ %
is letter-alternating. If the length of $X$ is odd, then the word $XaXaX$ is %
letter-alternating as well, a contradiction with the assumption that the word $XaXaX$ is not $r$-reduced. %
Hence, the length of $X$ is even whence
$XaXaX=(ab)^ka(ab)^ka(ab)^k$ or $XaXaX = (ba)^ka(ba)^ka(ba)^k$ for some $k\geq1$. %
The condition $k=1$ implies that the word $XaXaX$ is not $\widetilde{AB}$-whole. %
From $k>1$ it follows that $r(XaXaX)=XaX = r(XaX)$ and $XaXaX\sim XaX$. %
Since both case are impossible, we finished the proof of statement 7. \medskip

It remains to prove  statement 8. Let $XaX$ and $XaXaX$ be uniform. %
Obviously, the word $X$ is uniform as well. %
First, suppose that the word $X$ is letter-alternating. %
In this case, the length of $X$ is odd, since the words  %
$(ab)^ka(ab)^ka(ab)^k$ and $(ba)^ka(ba)^ka(ba)^k$ are not uniform for any $k\geq1$. %
Hence, all the words $XaX$, $XaXaX$, and $X$ are odd-length letter-alternating %
words, therefore $h(XaX)=h(XaXaX)=h(X)=\lambda$ and $t(XaX)=t(XaXaX)=t(X)=X[|X|]$.%

Now suppose that $X[i..i{+}1]=cc$ for some letter $c$ and integer $i$. %
Then the factor $cc$ occurs in $XaX$ at the $i$-th and at the $|Xa|+i$-th positions. Since %
the word $XaX$ is uniform, we conclude that the length of $X$ is odd. Hence, all the words %
$XaX$, $XaXaX$, and $X$ have odd length. Moreover, by the definition of $\eta$, if $i$ is odd, %
then $h(XaX)=h(XaXaX)=h(X)=X[1]$  and $t(XaX)=t(XaXaX)=t(X)=\lambda$,  %
otherwise  $h(XaX)=h(XaXaX)=h(X)=\lambda$ and $t(XaX)=t(XaXaX)=t(X)=X[|X|]$. %
This completes the proof of the lemma.
\end{proof}

\bigskip
Call especial attention to pairs of inequivalent words $(XcX, XcXcX)$ with $|X|=2$. %
Clearly, the word $XcX$ is overlap-free in any such pair and $[XcX]_{r_1}=XcX$.  %
For the word $XcXcX$, we have %
\[XcXcX\in\{abaabaab, abbabbab, baabaaba, babbabba, bbabbabb, aabaabaa\}=\mathcal{S}_2.\] The next lemma describes
the equivalence classes of all words from $\mathcal{S}_2$.

\begin{lemma}
\label{S2} %
\begin{align*}
[abaabaab]_{r_1}& =(aba)^*(aba)^2ab, & [abbabbab]_{r_1}& =(abb)^*(abb)^2ab,\\
[baabaaba]_{r_1}& =(baa)^*(baa)^2ba, & [babbabba]_{r_1}& =(bab)^*(bab)^2ba, \\
[bbabbabb]_{r_1}& =(bba)^*(bba)^2bb, & [aabaabaa]_{r_1}& =(aab)^*(aab)^2aa.
\end{align*}
\end{lemma}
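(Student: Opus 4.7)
The plan is to combine two facts from Subsection~\ref{UnCompleteRed}: that $\sim_{r_{1}} = \pi_{r_{1}}^{+}$, and that $\sim$ is invariant under both negation and reversal (the latter because $\pi$ itself is preserved by reversal). Under the joint action of these two symmetries the six words of $\mathcal{S}_{2}$ fall into two orbits, $\{abaabaab,\,baabaaba,\,babbabba,\,abbabbab\}$ and $\{aabaabaa,\,bbabbabb\}$, and the six claimed regular languages transform accordingly. Hence it is enough to establish the formulas for the representatives $abaabaab$ and $aabaabaa$, the others following by applying the appropriate symmetry to both sides of the equality. I would handle only the first case, since the second is identical with the primitive root $aba$ replaced by $aab$.

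For the inclusion $(aba)^{*}(aba)^{2}ab \subseteq [abaabaab]_{r_{1}}$: every word $W_{n} := (aba)^{n}ab$ with $n \geq 2$ is $r_{1}$-reduced, and by iterating the relation $(aba)^{2} \sim (aba)^{3}$ inside the context $\cdot\,ab$ it is $\sim$-equivalent to $W_{2} = abaabaab$. This direction is routine.

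The opposite inclusion is the heart of the proof. Set $\mathcal{L} = \{W_{n} : n \geq 2\}$; since $\sim_{r_{1}} = \pi_{r_{1}}^{+}$, it suffices to show that $\mathcal{L}$ is closed under $\pi_{r_{1}}$-neighbours. So I would take $W_{n} = PY^{k}Q$ and an $r_{1}$-reduced neighbour $W' = PY^{l}Q$ with $\{k,l\} = \{2,3\}$, and split on $|Y|$. The key observation is that $W_{n}$ has period $3$, so every factor of length at least $3$ inherits this period. If $|Y| = 1$, then $Y \in \{a,b\}$; the case $k=2$ produces $aaa$ or $bbb$ in $W'$, and the case $k=3$ is impossible since $W_{n}$ is $r_{1}$-reduced. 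If $|Y| \geq 2$, the factor $Y^{k}$ has both period $|Y|$ and period $3$; comparing these periods forces $Y^{k}$ to have period $\gcd(|Y|,3)$, and a value $1$ would make $Y^{k}$ a unary string of length $\geq 4$, contradicting $r_{1}$-reducedness. Hence $3 \mid |Y|$, and writing $|Y| = 3m$ one sees that $Y$ must be one of the three cyclic rotations of $(aba)^{m}$, namely $(aba)^{m}$, $(baa)^{m}$, or $(aab)^{m}$.

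In every non-vacuous subcase, replacing $Y^{k}$ by $Y^{l}$ preserves the period-$3$ structure and shifts the total length by $\pm 3m$, so $W' = W_{n \pm m}$. The expanding direction ($k=2$, $l=3$) immediately stays in $\mathcal{L}$; in the contracting direction ($k=3$, $l=2$), the constraint $|Y^{3}| \leq |W_{n}|$ gives $9m \leq 3n+2$, which together with $n \geq 2$ forces $n - m \geq 2$ by a short arithmetic check (and for $n=2$ no cube $Y^{3}$ fits in $W_{n}$ at all). This shows $\mathcal{L}$ is $\pi_{r_{1}}$-closed and completes the proof for $abaabaab$; the case of $aabaabaa$ is identical with conjugate set $\{aab, aba, baa\}$ in place of $\{aba, baa, aab\}$. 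The main technical hurdle is the period comparison pinning down $3 \mid |Y|$; once that is in place, everything else reduces to routine length bookkeeping.
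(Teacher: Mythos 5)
Your proof is correct, but it takes a genuinely different route from the paper's. The paper derives Lemma~\ref{S2} from Lemma~\ref{A1} by a one-letter extension trick: if $W\in[abaabaab]_{r_1}$, then $aW\sim(aab)^3\sim(aab)^2$, so by Lemma~\ref{A1} $aW=(aab)^k$ with $k\geq2$, hence $W=(aba)^{k-1}ab$; the boundary case $W=abaab$ is then excluded by the separate fact $abaab\not\in[abaabaab]_{r_1}$ (which rests on Theorem~\ref{Th1}, since both words are almost overlap-free). The remaining classes are handled ``by symmetry,'' just as you do. Your argument instead works directly with $\sim_{r_1}=\pi_{r_1}^+$ and shows the candidate language $\{(aba)^nab\mid n\geq2\}$ is closed under $r_1$-reduced $\pi$-neighbours, pinning down the period of $Y$ via Fine--Wilf (the length hypothesis $k|Y|\geq|Y|+3-\gcd(|Y|,3)$ does hold for $k\geq2$, $|Y|\geq2$, with equality exactly at $|Y|=2$, $k=2$, so you should cite the theorem explicitly there) and then doing the length bookkeeping that keeps $n-m\geq2$ in the contracting direction. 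What the paper's route buys is brevity: three lines per class, at the cost of depending on Lemma~\ref{A1} and on the non-equivalence $abaab\not\sim abaabaab$. What your route buys is self-containedness: you never invoke Lemma~\ref{A1} or Theorem~\ref{Th1}, and the closure-under-neighbours template with a periodicity lemma is reusable for other classes of periodic words. Both are sound; yours is longer but more elementary in its dependencies.
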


\begin{proof}
Clearly, any word of the form $(aba)^*(aba)^2ab$ is equivalent to $abaabaab$. %
Conversely, if $W\in[abaabaab]_{r_1}$, then $aW\in[(aab)^3]_{r_1}=[aabaab]_{r_1}$. %
By Lemma~\ref{A1}, we have $aW=(aab)^k$ for some $k\geq2$. %
Hence, $W=(aba)^{k-1}ab$. Since $abaab\not\in[abaabaab]_{r_1}$, we have $k>2$ %
and the word $W$ has the form $(aba)^*(aba)^2ab$, as desired. %
Thus, $[abaabaab]_{r_1}=(aba)^*(aba)^2ab$. %
Now consider the equivalence class $[bbabbabb]_{r_1}$. Obviously, %
this class contains all words of the form $(bba)^*(bba)^2bb$. Conversely, %
let $W\in[bbabbabb]_{r_1}$. Then $Wa\in[bbabba]_{r_1}$ whence $Wa=(bba)^k$, where $k\geq2$, %
by Lemma~\ref{A1}. From $bbabb\not\in[bbabbabb]$ it follows that %
$k>2$ and $W=(bba)^{k-3}(bba)^2bb$. So, we get $[bbabbabb]_{r_1}=(bba)^*(bba)^2bb$. %
The other equivalence classes are symmetric to described above two classes.
\end{proof}

Now we ready to prove the main technical lemma of this section.

\begin{lemma}
\label{Zlo} %
Let $(U, V)$ be a bad pair such that the word $V$ is almost overlap-free. %
Then there exists a word $Y\in\Sigma^+$ such that $V=Y^2$. %
Furthermore, denote the word $Y^3$ by $W$ and let $\{Y_k\}_{k=1}^{\ell(Y)}$, %
$\{V_k\}_{k=1}^{\ell(V)}$, $\{W_k\}_{k=1}^{\ell(W)}$, and $\{U_k\}_{k=1}^{\ell(U)}$ %
be the primary $Y$-, $V$-, $W$-, and $U$-series respectively. Then

\textup{1)}\quad $\eta(U)\sim\eta(W)$.

\textup{2)}\quad $\ell(W)-1\leq\ell(Y)\leq\ell(U)=\ell(W)\leq\ell(V)\leq\ell(W)+1$.

\textup{3)}\quad $U_k\sim W_k\not\sim V_k$ for each $k=2, \dotsc, \ell(W)$. %

\textup{4)}\quad $L_U[k]=L_V[k]=L_W[k]=\lambda$, $R_U[k]=R_V[k]=R_W[k]=\lambda$, %
$h_U[k]=h_V[k]=h_W[k]$, and $t_U[k]=t_V[k]=t_W[k]$ for all $k<\ell(W)$;

$L_Y[k]=R_Y[k]=\lambda$, $h_Y[k]=h_W[k]$, and $t_Y[k]=t_W[k]$ for all $k<\ell(Y)$.

\textup{5)}\quad The words $Y_k$, $V_k$, and $W_k$ are uniform for each $k=1,\dotsc,\ell(W){-}1$. %

\textup{6)}\quad There exists a sequence of letters $\{c_k\}_{k=2}^{\ell(W)}$ %
such that $V_k=Y_kc_kY_k$ and $W_k=Y_kc_kY_kc_kY_k$ for each %
$k=2,\dotsc,\ell(W)$ (if $\ell(Y) = \ell(W){-}1$, then we put $Y_{\ell(W)}=\lambda$). %

\textup{7)}\quad $Y_{\ell(Y)}\in\{ab, ba, aa, bb, a, b \}$.
\end{lemma}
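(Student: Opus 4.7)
The plan is to first pin down $Y$ (so that $V=Y^2$ and $W=Y^3$) and verify $\eta(U)\sim\eta(W)$ using the $r$-linking machinery of Proposition~\ref{Pi_r} and the precise form of bad neighbour pairs given by Lemma~\ref{EtaNonSave}; then, starting from this base case, to propagate the claimed product structure through the primary series by iteratively applying Lemma~\ref{XaXaX}.

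For part~1 I would take any $r$-linking $(U,V)$-sequence. Since $\eta(U)\not\sim\eta(V)$, at least one consecutive pair in it must be bad, and by Lemma~\ref{EtaNonSave} every such pair has the shape $(Z^2,Z^3)$ with $|Z|>2$ even and $\{h(Z),t(Z)\}=\{a,b\}$; so $Z=c\,Z'\,\overline c$ with $c\in\Sigma$ and $Z'\neq\lambda$. The crux is to force $V$ itself onto the square side of such a pair. The key combinatorial observation is that $Z^3=(cZ'\overline c)^3$ contains the proper internal factor $Z'\,\overline c\, c\,Z'\,\overline c\, c\,Z'$, which is of the form $XYXYX$ with $X=Z'\in\Sigma^+$ and $Y=\overline c c$; thus any word equal to $Z^3$ fails to be almost overlap-free. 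Combining this with the observation that good $r$-pairs along the sequence can only reshuffle outer structure and cannot erase such an internal overlap, I would conclude that $V$ cannot be $\sim$-connected to $Z^3$ through only good pairs. Hence $V$ coincides with the square side of some bad neighbour pair, giving $V=Y^2$ with $Y$ as described.

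Setting $W=Y^3$, we have $V\sim W\sim U$. For $\eta(U)\sim\eta(W)$ I would argue that an appropriately chosen $r$-linking sequence from $U$ to $V$ uses the bad step $V\leftrightarrow W$ as its unique bad transition (any additional bad transition would, by the same cube-overlap argument, contradict almost overlap-freeness of $V$). Splicing out $V\to W$ leaves $U\to\dotsb\to W$ of only good pairs, so $\eta(U)\sim\eta(W)$. The explicit formulas
\[\eta(V)=\eta(Y)\,t(Y)h(Y)\,\eta(Y),\qquad \eta(W)=\eta(Y)\,t(Y)h(Y)\,\eta(Y)\,t(Y)h(Y)\,\eta(Y)\]
from Lemma~\ref{EtaNonSave}, together with $\varphi^{-1}$, then supply the base case $V_2=Y_2\,c_2\,Y_2$, $W_2=Y_2\,c_2\,Y_2\,c_2\,Y_2$, where $Y_2=\varphi^{-1}(\eta(Y))$ and $c_2\in\Sigma$ is determined by parity.

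From here I would run an induction on $k$: assuming $V_k=Y_k c_k Y_k$ and $W_k=Y_k c_k Y_k c_k Y_k$ with $V_k\not\sim W_k$, Lemma~\ref{XaXaX} delivers everything needed. Its parts~2--5 show the words are $r_1$-reduced and inherit (or lack) non-uniform and non-reducible tails in sync; parts~7--8 give uniformity, $r$-reducedness, the multiplicative factorization of the complete reduction, and matching $h,t$ values. One further primary-series step then produces $V_{k+1}=Y_{k+1}c_{k+1}Y_{k+1}$ and $W_{k+1}=Y_{k+1}c_{k+1}Y_{k+1}c_{k+1}Y_{k+1}$, while $V_{k+1}\not\sim W_{k+1}$ descends from the previous level. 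Combined with $\eta(U)\sim\eta(W)$, an adaptation of Lemma~\ref{Anc2} yields $U_k\sim W_k$ and the matching $L,R,h,t$ arrays of part~4. The length bounds in part~2 come from comparing termination moments: the $V$- and $W$-series terminate simultaneously (their stopping conditions being structurally linked by Lemma~\ref{XaXaX}), while the $Y$-series may stop one step earlier when $Y_{k+1}$ is too small for further unwinding; inspection of the possible terminal $Y_{\ell(Y)}$ against the stopping criteria of procedure Ancestor (length~$\leq 2$ or the $XcX$/$XcXcX$ structure already falling into an $\mathcal{S}_1$- or $\mathcal{S}_2$-type class) yields the list $\{a,b,ab,ba,aa,bb\}$ in part~7. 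The main obstacle, I expect, is part~1: rigorously ruling out $V$ from being on the cube side of the (possibly non-unique) bad neighbour pair, using the almost overlap-freeness of $V$ together with the flexibility of $r$-linking sequences. The remaining parts should then follow by systematic bookkeeping via Lemma~\ref{XaXaX}.
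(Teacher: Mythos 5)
Your overall architecture (locate a bad neighbour pair of the form $(Z^2,Z^3)$ via Lemma~\ref{EtaNonSave}, propagate the $XcX$/$XcXcX$ structure down the primary series with Lemma~\ref{XaXaX}, and finish with Lemma~\ref{Anc2}) matches the paper's, but the step you yourself identify as the crux --- forcing $V$ onto the square side and, more importantly, forcing $V$ to \emph{equal} $Y^2$ --- is where the proposal has a genuine gap. Your argument is: $Z^3$ contains an internal overlap $Z'\overline{c}cZ'\overline{c}cZ'$, and ``good $r$-pairs can only reshuffle outer structure and cannot erase such an internal overlap.'' That last claim is not justified and is not obviously true: words joined by a chain of good pairs are merely equivalent, and equivalence does not preserve factors or overlap structure in any usable way (indeed, the whole difficulty of the paper is that equivalent words can look very different). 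The paper instead distinguishes square from cube by descending the primary series all the way to the ancestor level, where $|Y_m|\le 2$, and invoking Lemma~\ref{S2}: the class of $Y_mc_mY_mc_mY_m\in\mathcal{S}_2$ contains no overlap-free word, while $V_m$ is overlap-free by Lemma~\ref{Anc1}. Nothing in your sketch replaces this.

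The second gap is the jump from ``$V$ coincides with the square side of some bad neighbour pair'' to ``$V=Y^2$.'' The bad pair $(P,Q)=(R_{\overline{i}-1},R_{\overline{i}})$ may sit deep inside the linking sequence, in which case you only get $V\sim P=YY$, not $V=YY$. Closing this requires the paper's Step 2: one shows $V=N^D(P)=N^D(Y)N^D(Y)$ via the normal-series machinery, then uses an extremal choice of linking sequence (lexicographically minimal pair $(\beta,\gamma)$, where $\beta$ counts bad pairs and $\gamma$ is the index of the first one) and a re-splicing construction to force $\overline{i}=1$, whence $P=V$ and $Y=N^D(Y)$. Similarly, your claim that any second bad transition ``would contradict almost overlap-freeness of $V$ by the same cube-overlap argument'' does not survive scrutiny for the same reason as above; the paper's Step~3 is a separate induction culminating in another splice that strictly decreases $\beta$, contradicting minimality. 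Without these two minimality arguments the statements $V=Y^2$ and $\eta(U)\sim\eta(W)$ remain unproved.
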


\begin{proof}


Recall that the words $U$ and $V$ are uniform, because the notion of bad pair uses the words $\eta(U)$ and
$\eta(V)$. So, the set of all $r$-linking $(V, U)$-sequences is not empty by Proposition~\ref{Pi_r}. %
For each sequence $\{R_i\}_{i=0}^n$ from this set, %
define \[\beta(\{R_i\}_{i=0}^n)=\Card(\{i\mid (R_{i-1}, R_i)\text{ is bad}, 1\leq i\leq n\})\] %
and \[\gamma(\{R_i\}_{i=0}^n)=\min\{i\mid(R_{i-1}, R_i) \text{ is bad}\}.\] %
Note that the number $\gamma(\{R_i\}_{i=0}^n)$ is well defined: since %
the pair $(V, U)$ is bad, we have  $\beta(\{R_i\}_{i=0}^n)\geq1$.

Among all $r$-linking $(V,U)$-sequences, we choose  the sequence $\{R_i\}_{i=0}^n$ %
having the lexicographically minimal pair $(\beta(\{R_i\}_{i=0}^n), \gamma(\{R_i\}_{i=0}^n))$. %
The proof consists of four steps. In steps 1--3 we %
prove that $\beta(\{R_i\}_{i=0}^n)=1$ and $\gamma(\{R_i\}_{i=0}^n)=1$, that is, %
the sequence $\{R_i\}_{i=0}^n$ has a unique bad pair of neigbours $(V, R_1)$. %
As a by-product of this proof, we will get almost all statements of the lemma. %
Step~4 is a short argument about the pair $(R_1,U)$.\bigskip

\textbf{Step 1.} Let $\overline i = \gamma(\{R_i\}_{i=0}^n)$. We aim to show %
that $\overline i = 1$, i. e., $R_{\overline i-1}=V$.
To simplify the notation we denote %
the words $R_{\overline i-1}$ and $R_{\overline i}$ by $P$ and $Q$ respectively. %
By Lemma~\ref{EtaNonSave}, %
there exists a word $Y$ such that one of the %
words $P$ and $Q$ is equal to $YY$ and the other is equal to $YYY$. %
Consider the primary $Y$-, $P$-, $Q$-, and $V$-series %
$\{Y_k\}_{k=1}^{\ell(Y)}$,  $\{P_k\}_{k=1}^{\ell(P)}$, $\{Q_k\}_{k=1}^{\ell(Q)}$, and  %
$\{V_k\}_{k=1}^{\ell(V)}$ respectively. %
Clearly, $Y_1=Y$, $P_1=P$, $Q_1=Q$, and $V_1=V$.
By the definition of $P$ and $\overline i$, the pair $(V, P)$ is good. %
Hence, we have $\eta(V_1)\sim\eta(P_1)\not\sim\eta(Q_1)$ whence $V_2\sim P_2\not\sim Q_2$. %
Moreover, we obviously have \[L_V[1]=L_P[1]=L_Q[1]=L_Y[1]=R_V[1]=R_P[1]=R_Q[1]=R_Y[1]=\lambda\]
and, by Proposition~\ref{HTXiEta} and Lemma~\ref{EtaNonSave}, %
\[h(V)t(V) = h(P)t(P)=h(Q)t(Q)=h(Y)t(Y)\in\{ab, ba\}.\] %
If we denote $\varphi^{-1}(\eta(Y))$ by $Y'$, then we get %
\[\varphi^{-1}(\eta(YY)) = Y'\varphi^{-1}(t(Y)h(Y))Y'  = Y't(Y)Y'\]
and %
\[\varphi^{-1}(\eta(YYY)) = Y'\varphi^{-1}(t(Y)h(Y))Y'\varphi^{-1}(t(Y)h(Y))Y' %
 = Y't(Y)Y't(Y)Y'
\]
whence $\{P_2, Q_2\}=\{Y_2t(Y)Y_2, Y_2t(Y)Y_2t(Y)Y_2\}$   %
by Lemma~\ref{XaXaX}, 1).\medskip %

Let us prove by induction on $k$ that
\begin{equation}
\label{AncBadPair}
\{P_k, Q_k\} = \{Y_kc_kY_k, Y_kc_kY_kc_kY_k\} \text{ and } %
V_k\sim P_k\not\sim Q_k %
\end{equation}
for some sequence of letters $\{c_k\}_{k=2}^{\ell(Y)}$ and each $k=2, \dotsc, \ell(Y)$. %
In addition, we prove that $|\Anc(Y)|\leq2$.

For $k=2$, we put $c_2=t(Y)$, and  \eqref{AncBadPair} holds, as was shown above. %
Now suppose that \eqref{AncBadPair} holds for some $k$ such that $|Y_k|>2$ %
and prove that $k<\ell(Y)$ and \eqref{AncBadPair} holds for $k+1$ as well.

Obviously, if $|Y_k|>2$, then $|P_k|>2$, $|Q_k|>2$, and $|V_k|>2$ as well. %
By Lemma~\ref{XaXaX}, 2), we have $V_k\not\in\mathcal{S}_1$. From Lemma~\ref{Anc1} and %
Corollary~\ref{TailAOF} it follows %
that $k<\ell(V)$, $r_T(P_k)\sim r_T(V_k)$, and both words $r_T(V_k)$ and $r_T(P_k)$ %
are $\widetilde{AB}$-whole.\smallskip

We claim that $r_T(P_k)=P_k$ and $r_T(V_k)=V_k$, i. e., both words $P_k$ and %
$V_k$ have no non-uniform tails. Assume the converse.  Let $T$ be a tail of the word $P_k$. %
Without loss of generality, $T$ is a left $A$-tail, i. e., $T=(aab)^l(aab)^2ba$ for some $l\geq0$. %
According to Lemma~\ref{XaXaX}, 4), the word $Q_k$ has the tail $T$ as well. %
Since one of the words $P_k$ and $Q_k$ equals $Y_kc_kY_k$, we conclude that %
$T$ is a prefix of $Y_kc_kY_k$. We have $c_k=b$, because $Y_k$ begins with $aa$ and the words $P_k$ and $Q_k$ %
are $r_1$-reduced. If $T\leq Y_k$, then both words $r_T(Y_kc_kY_k)$ and $r_T(Y_kc_kY_kc_kY_k)$ have the factor
$aabaabb$ inside the second occurrence of $Y_k$. So, these words are not $\widetilde{AB}$-whole by definition. But
one of them is $r_T(P_k)$, which is an $\widetilde{AB}$-whole word, a contradiction.

Thus, $Y_k < T$. %
Then $Y_k=(aab)^{l+2}$ and the words $r_T(P_k)$ and $r_T(Q_k)$ %
have the suffix $Y_k$, which is not $\widetilde{AB}$-whole. This contradiction proves %
that $r_T(P_k)=P_k$. Since $P_k\sim V_k$, by Proposition~\ref{TailReg}, 1) we have $r_T(V_k)=V_k$, as desired.
\smallskip

From Lemma~\ref{XaXaX}, 4), 6) it follows that all the words $P_k$, $Q_k$, and $Y_k$  are $\widetilde{AB}$-whole %
and have no non-uniform tails. %
Moreover, since the overlap-free word $V_k$ has no non-reducible tails, the words %
$P_k$, $Q_k$, and $Y_k$ have no non-reducible tails as well.
Hence,  procedure Ancestor cannot stop while processing any of %
the words $V_k$, $P_k$, $Q_k$, and $Y_k$, that is,  %
$k<\min\{\ell(V), \ell(P), \ell(Q), \ell(Y)\}$.\medskip %

By Proposition~\ref{RedSave} and Corollary~\ref{ABUn}, we get %
\[r(V_k) = V_k\sim r(P_k)\not\sim r(Q_k).\] %
At the same time, we have \[\{r(P_k), r(Q_k)\}=\{r(Y_k)c_kr(Y_k), r(Y_k)c_kr(Y_k)c_kr(Y_k)\}\] %
by Lemma~\ref{XaXaX}, 7), 8), where $r(Y_k)$, $r(P_k)$, and $r(Q_k)$ are odd-length uniform words. %
By definition, the function $\eta$ deletes exactly one %
letter from each word $r(Y_k)$, $r(P_k)$, and $r(Q_k)$. %
According to Proposition~\ref{EtaBad}, we get %
\[\eta(V_k)\sim\eta(r(P_k))\not\sim\eta(r(Q_k))\]
whence $V_{k+1}\sim P_{k+1}\not\sim Q_{k+1}.$ %

Additionally, if we put $Y'_k=r(Y_k)$, we get either %
$h(Y'_k)\neq\lambda$ and $t(Y'_k)=\lambda$ or $t(Y'_k)\neq\lambda$ and %
$h(Y'_k)=\lambda$. Consider the first case. %
Then
\begin{equation*}
\begin{split}
&\{\varphi^{-1}(\eta(r(P_k))), \varphi^{-1}(\eta(r(Q_k)))\} = \\
&\{\varphi^{-1}(\eta(Y'_k)c_kh(Y'_k)\eta(Y'_k)), %
\varphi^{-1}(\eta(Y'_k)c_kh(Y'_k)\eta(Y'_k)c_kh(Y'_k)\eta(Y'_k))\}. %
\end{split}
\end{equation*} %
Hence, $h(Y'_k)=\overline c_k$ and we get %
\begin{equation*}
\begin{split}
&\{\varphi^{-1}(\eta(r(P_k))), \varphi^{-1}(\eta(r(Q_k)))\}=\\%
&\{\varphi^{-1}(\eta(Y'_k))\,c_k\varphi^{-1}(\eta(Y'_k)), %
\varphi^{-1}(\eta(Y'_k))\,c_k\varphi^{-1}(\eta(Y'_k))\,c_k\varphi^{-1}(\eta(Y'_k))\}.
\end{split}
\end{equation*} %

In the second case, we have $t(Y'_k)=\overline c_k$ and we get %
\begin{equation*}
\begin{split}
&\{\varphi^{-1}(\eta(r(P_k))), \varphi^{-1}(\eta(r(Q_k)))\}=\\%
&\{\varphi^{-1}(\eta(Y'_k))\,t(Y'_k)\varphi^{-1}(\eta(Y'_k)), %
\varphi^{-1}(\eta(Y'_k))\,t(Y'_k)\varphi^{-1}(\eta(Y'_k))\,t(Y'_k)\varphi^{-1}(\eta(Y'_k))\}.
\end{split}
\end{equation*} %

Thus, if we put $c_{k+1} = \overline{h(Y'_k)}t(Y'_k)$, we get
\[\{P_{k+1}, Q_{k+1}\}=\{Y_{k+1}c_{k+1}Y_{k+1},
Y_{k+1}c_{k+1}Y_{k+1}c_{k+1}Y_{k+1}\} \] %
in all cases in view of Lemma~\ref{XaXaX}, 1). \medskip  %

So, we have proved that
\[\{P_k, Q_k\}=\{Y_kc_kY_k, Y_kc_kY_kc_kY_k\} \text{ and } V_k\sim P_k\not\sim Q_k\] %
for each $k=2, \dotsc, \ell(Y)$, and $|\Anc(Y)|\leq2$. %
Moreover, we have shown that $V_k$ is uniform, the words $V_k$, $P_k$, $Q_k$, and $Y_k$ %
have no tails, $h(V_k)=h(P_k)=h(Q_k)=h(Y_k)$, and $t(V_k)=t(P_k)=t(Q_k)=t(Y_k)$ for all $k <\ell(Y)$. \medskip %

Now consider the words $Y_m=\Anc(Y)$, $Q_m$, $P_m$, and $V_m$, where $m=\ell(Y)$. %
If $Y_m\in\{a, b\}$, then $P_m$ and $Q_m$ are odd-length letter-alternating words. %
Since $V_m\sim P_m$, by Observation~\ref{letteralt} we have %
\[V_m=P_m=Y_mc_mY_m\in\{bab, aba\}\text{ and } %
Q_m = Y_mc_mY_mc_mY_m\in\{babab, ababa\}.\] %
Obviously, we have $\ell(V)=\ell(P)=\ell(Q)=m+1$. %
Note that $t(Y_m)= Y_m$ in this case.  Thus, if we put $Y_{m+1}=\lambda$, %
we get \[V_{m+1}=P_{m+1} = Y_{m+1}c_{m+1}Y_{m+1}\text{ and } %
Q_{m+1}=Y_{m+1}c_{m+1}Y_{m+1}c_{m+1}Y_{m+1},\]%
where $c_{m+1}=\overline{h(Y_m)}t(Y_m)=Y_m$. In the subsequent considerations we put $Y_{m+1}=\lambda$ if
$Y_m\in\{a, b\}$.

If $|Y_m|=2$, then the class $[Y_mc_mY_mc_mY_m]_{r_1}$ contains no %
overlap-free words by Lemma~\ref{S2} and the word $Y_mc_mY_m$ is overlap-free. %
Hence, we have \[V_m=P_m=Y_mc_mY_m\text{ and }Q_m=Y_mc_mY_mc_mY_m.\] %
Clearly, if $Y_m\in\{aa, bb\}$, then the words $V_m$, $P_m$, and $Q_m$ are not %
$\widetilde{AB}$-whole whence $\ell(V)=\ell(P)=\ell(Q)=m$. %
In the case $Y_m\in\{ab, ba\}$, we have $\ell(Q)=m$, since the word $Q$ is not %
$\widetilde{AB}$-whole, and $\ell(V)=\ell(P)=m+1$.

So, in all cases we have $V_m=P_m=Y_mc_mY_m$, $Q_m=Y_mc_mY_mc_mY_m$, and %
\[\ell(Q){-}1\leq\ell(Y)\leq\ell(Q)\leq\ell(P)=\ell(V)\leq\ell(Q){+}1.\]
Hence, $P_k=Y_kc_kY_k$ and $Q_k=Y_kc_kY_kc_kY_k$ for each $k=2, \dotsc, m$,  %
and we get $P=YY$ and $Q=YYY$. %

Notice that the word $V_m=P_m$ is uniform if $\ell(V)=\ell(P)=m+1$.
It is shown above that the words $V_k$ are uniform for all $k<m=\ell(Y)$. %
Hence, the words $V_k$ are uniform for all $k < \ell(V)$.\bigskip

\textbf{Step 2.} Consider the direct normal forms of the words $Y$, $P$, $V$, and $Q$ and %
prove that \[V=N^D(P)=N^D(Y)N^D(Y)\text{ and }N^D(Q)=N^D(Y)N^D(Y)N^D(Y).\] %
Since $\Anc(P)=\Anc(V)$ is an  overlap-free word, the words $P$ and $Q$ share the same main direct normal series. %
Moreover, this series is the primary series of $V$ by Lemma~\ref{Norm4}.

Let $\{\widetilde{Y}_k\}_{k=1}^{\ell(Y)}$  and  $\{\widetilde{Q}_k\}_{k=1}^{\ell(Q)}$ %
be the direct normal series of the words $Y$ and $Q$ respectively. %
Let $m=\ell(Q)$ and put $Y_m=\widetilde{Y}_m=\lambda$ if $m=\ell(Y)+1$. %
Thus, we have  %
\[V_m = Y_mc_mY_m = \widetilde{Y}_mc_m\widetilde{Y}_m \text{ and } %
\widetilde{Q}_m = Q_m = Y_mc_mY_mc_mY_m = \widetilde{Y}_mc_m\widetilde{Y}_mc_m\widetilde{Y}_m.\] %
Now suppose that %
\[V_{k+1}=\widetilde{Y}_{k+1}c_{k+1}\widetilde{Y}_{k+1} \text{ and } %
\widetilde{Q}_{k+1}=\widetilde{Y}_{k+1}c_{k+1}\widetilde{Y}_{k+1}c_{k+1}\widetilde{Y}_{k+1}\] %
for some $k\in[2;m{-}1]$. Prove that the same holds for $k$. From Step 1 we know that either $h_Y[k]=\lambda$ or $t_Y[k]=\lambda$. %
First, assume that  $t_Y[k]=\lambda$. Then $c_{k+1}=c_k=\overline{h_Y[k]}$, %
and we get
\begin{multline*}
V_k= h_V[k]\varphi(V_{k+1})=h_Y[k]\varphi(\widetilde{Y}_{k+1}c_{k+1}\widetilde{Y}_{k+1})=\\%
h_Y[k]\varphi(\widetilde{Y}_{k+1})c_k\overline c_k\varphi(\widetilde{Y}_{k+1}) = %
\widetilde{Y}_kc_k\widetilde{Y}_k %
\end{multline*}
and
\begin{multline*}
\widetilde{Q}_k= h_Q[k]\varphi(\widetilde{Q}_{k+1})=%
h_Y[k]\varphi(\widetilde{Y}_{k+1}c_{k+1}\widetilde{Y}_{k+1}c_{k+1}\widetilde{Y}_{k+1})=\\%
h_Y[k]\varphi(\widetilde{Y}_{k+1})c_k\overline c_k\varphi(\widetilde{Y}_{k+1})%
c_k\overline c_k\varphi(\widetilde{Y}_{k+1}) = %
\widetilde{Y}_kc_k\widetilde{Y}_kc_k\widetilde{Y}_k,
\end{multline*}
as desired.
Now assume that $h_Y[k]=\lambda$. In this case, $c_{k+1} = t_Y[k] = \overline{c_k}$, and we get %
\begin{multline*}
V_k=\varphi(V_{k+1})t_V[k] =  %
\varphi(\widetilde{Y}_{k+1}c_{k+1}\widetilde{Y}_{k+1})t_Y[k] =\\ %
\varphi(\widetilde{Y}_{k+1})\overline c_kc_k\varphi(\widetilde{Y}_{k+1})t_Y[k] = %
\widetilde{Y}_kc_k\widetilde{Y}_k
\end{multline*}
and
\begin{multline*}
\widetilde{Q}_k=\varphi(\widetilde{Q}_{k+1})t_Q[k] =  %
\varphi(\widetilde{Y}_{k+1}c_{k+1}\widetilde{Y}_{k+1}c_{k+1}\widetilde{Y}_{k+1})t_Y[k] = \\%
\varphi(\widetilde{Y}_{k+1})\overline c_kc_k\varphi(\widetilde{Y}_{k+1})%
\overline c_kc_k\varphi(\widetilde{Y}_{k+1})t_Y[k] = %
\widetilde{Y}_kc_k\widetilde{Y}_kc_k\widetilde{Y}_k.
\end{multline*}

So, we have proved that
\[V_k=\widetilde{Y}_kc_k\widetilde{Y}_k \text{ and } %
\widetilde{Q}_k=\widetilde{Y}_kc_k\widetilde{Y}_kc_k\widetilde{Y}_k\] %
for all $k= 2, \dotsc, m$. %
Finally, consider the words $V$ and $\widetilde{Q}_1=N^D(Q)$. %
From Step 1 we have $t_Y[1] = c_2=\overline{h_Y[1]}$.
Thus, we get  %
\begin{multline*}
V = h_V[1]\varphi(V_2)t_V[1] = %
h_Y[1]\varphi(\widetilde{Y}_2c_2\widetilde{Y}_2)t_Y[1]=\\%
h_Y[1]\varphi(\widetilde{Y}_2)t_Y[1]h_Y[1]\varphi(\widetilde{Y}_2)t_Y[1] = %
\widetilde{Y}_1\widetilde{Y}_1 = N^D(Y)N^D(Y)
\end{multline*}
and
\begin{multline*}
\widetilde{Q}_1=h_Q[1]\varphi(\widetilde{Q}_2)t_Q[1]=
h_Y[1]\varphi(\widetilde{Y}_2c_2\widetilde{Y}_2c_2\widetilde{Y}_2)t_Y[1] =\\%
h_Y[1]\varphi(\widetilde{Y}_2)t_Y[1]h_Y[1]\varphi(\widetilde{Y}_2)t_Y[1]h_Y[1]%
\varphi(\widetilde{Y}_2)t_Y[1] =\\%
\widetilde{Y}_1\widetilde{Y}_1\widetilde{Y}_1=N^D(Y)N^D(Y)N^D(Y), %
\end{multline*}%
as desired.

Let $W=N^D(Q)$ and $Y'=N^D(Y)$. By Lemma~\ref{Norm1}, we have $W\sim Q$ and $W_2\sim Q_2$. %
Let $\{S_j\}_{j=0}^l$ be a linking $(W_2,Q_2)$-sequence. Then the sequence %
$\{S'_j=h(Q)\varphi(S_j)t(Q)\}_{j=0}^l$ is an $r$-linking $(W, Q)$-sequence by Lemma~\ref{UniqueXiEta}, 2). %
From Proposition~\ref{HTXiEta} it follows that $h(S'_j)=h(Q)$ and $t(S'_j)=t(Q)$ %
for all $j\leq l$. Hence, $\eta(S'_j)=\varphi(S_j)$ for each $j=0, 1, \dotsc, l$, and %
all pairs $(S'_{j-1}, S'_j)$ for $j\geq1$ are good. %

Recall that $(P, Q)=(R_{\overline{i}-1}, R_{\overline i})$, where
$\overline i=\gamma(\{R_i\}_{i=0}^n)$. Suppose that $\overline i>1$. %
Construct a new $r$-linking $(V, U)$-sequence $\{R'_j\}_{j=0}^{n'}$, where $n'=n+l+1-\overline i$, %
as follows: %
\[R'_0=V, \quad R'_{1+j}=S'_j \text{ for $j=0, \dotsc, l$}, \quad %
R'_{l+1+i-\overline i} = R_i \text{ for $i= \overline i+1, \dotsc, n$}.\] %
Obviously, $\beta(\{R'_j\}_{j=0}^{n'})=\beta(\{R_i\}_{i=0}^n)$ and %
$\gamma(\{R'_j\}_{j=0}^{n'})=1<\gamma(\{R_i\}_{i=0}^n)$, which contradicts  %
the choice of the sequence $\{R_i\}_{i=0}^n$. Hence,
$\gamma(\{R_i\}_{i=0}^n)=1$ whence $P=V$ and $Q = R_1$. %
Since $P=YY$ and $V=Y'Y'$, we conclude that $Y=Y'=N^D(Y)$ and $Q=W$. %
So, we have $V_k=Y_kc_kY_k$ and $W_k=Y_kc_kY_kc_kY_k$ for all $k\leq\ell(Y)$, %
where $\{W_k\}_{k=0}^{\ell(W)}$ is %
the primary $W$-series, and $\widetilde{Y}_k=Y_k$ for all $k\leq\ell(Y)$. %

Moreover, since the word $V_k$ is uniform for each $k=1, \dotsc, \ell(Y){-}1$, the %
word $Y_k$  is uniform for each $k=1, \dotsc, \ell(Y){-}1$ as well. In view of %
Proposition~\ref{RegRed} and Lemma~\ref{XaXaX}, 7), all words $W_k$ are uniform for %
$k<\ell(Y)$ as well. Note that if $\ell(Y)=\ell(W){-}1$, then $\Anc(Y)\in\{a,b\}$. %
So, the words $V_k$ and $W_k$ are uniform for all $k<\ell(W)$ even if $\ell(Y)=\ell(W){-}1$.

This completes the proof of statements 5--7 of the lemma. Additionally, we have proved %
statements 2--4 of the lemma for the words $V$, $W$, and $Y$. It remains to establish %
a connection between the words $W$  and $U$.\bigskip

\textbf{Step 3}.  On this step, we prove that $\beta(\{R_i\}_{i=0}^n)=1$, %
i. e., the pair $(V, W)$ is the only bad pair of neighbours. %
Assume the converse. Let $(P, Q) = (R_{\overline i-1}, R_{\overline i})$, where $\overline i>1$, be the bad pair such that %
all pairs $(R_{i-1}, R_i)$ are good for $1<i<\overline i$. %

By Lemma~\ref{EtaNonSave}, there exists a word $X$ such that $\{P, Q\}=\{XXX, XX\}$. %
Let $\{P_k\}_{k=1}^{\ell(P)}$, $\{Q_k\}_{k=1}^{\ell(Q)}$, and %
$\{X_k\}_{k=1}^{\ell(X)}$ be  the primary $P$-, $Q$-, and $X$-series %
respectively. Obviously, $P_1=P$, $Q_1=Q$, and $X_1=X$. %
One can easily prove (see Step 1) that %
$\{P_2, Q_2\}=\{X_2d_2X_2, X_2d_2X_2d_2X_2\}$, where $d_2=t(X)$, and $W_2\sim P_2\not\sim Q_2$. %
Moreover, we have $h(X)=h(P)=h(Q)=h(W)$ and  $t(X)=t(P)=t(Q)=t(W)$ whence  $d_2=c_2$. %
Now suppose that %
\begin{equation}
\label{AnotherBadPair}
\{P_k, Q_k\}=\{X_kc_kX_k, X_kc_kX_kc_kX_k\}\text{ and }%
W_k\sim P_k\not\sim Q_k %
\end{equation}
for some $k\in[2;\ell(Y){-}1]$, and prove that the same holds for $k+1$.\smallskip  %

First, we show that $|X_k|>2$. Indeed, if $|X_k|=2$, then %
the word $X_kc_kX_k$ is overlap-free, $[X_kc_kX_k]_{r_1}=X_kc_kX_k$,  %
and $X_kc_kX_kc_kX_k\in\mathcal{S}_2$. %
Note that all words from $\mathcal{S}_2$ are  not $\widetilde{AB}$-whole. Since  %
the word $W_k$ is $\widetilde{AB}$-whole, but it is not overlap-free,  %
we get $W_k\not\sim X_kc_kX_k$ and $W_k\not\sim X_kc_kX_kc_kX_k$, contradicting \eqref{AnotherBadPair}. %
In the case $|X_k|=1$, the words $P_k$ and $Q_k$ are letter-alternating. Given $W_k\sim P_k$ and $V_k\leq W_k$, %
we conclude that the words %
$W_k$ and $V_k$ are letter-alternating as well. %
According to Lemma~\ref{XaXaX}, 3), the length of $W_k$ and $V_k$ is odd. The inequality $|Y_k| > 2$ yields %
$|V_k|,|W_k|> 5$ whence $V_k \sim W_k$, contradicting to the condition that $(V,W)$ is a bad pair.\smallskip

So, we have $|X_k|>2$ whence $|P_k|>2$ and $|Q_k|>2$. %
Since $P_k\sim W_k$ and the word $W_k$ is uniform, Lemma~\ref{XaXaX} implies that %
both words $P_k$ and $Q_k$ are $\widetilde{AB}$-whole and contain neither non-uniform nor non-reducible tails. %
Moreover, by Proposition~\ref{RedSave} and Lemma~\ref{XaXaX}, 7) we have %
\[\{r(P_k), r(Q_k)\}=\{r(X_k)c_kr(X_k), r(X_k)c_kr(X_k)c_kr(X_k)\}\] %
and $W_k\sim r(P_k)\not\sim r(Q_k)$. %
Finally, the same argument as in Step 1 gives
\[\{P_{k+1}, Q_{k+1}\}=\{X_{k+1}c_{k+1}X_{k+1}, X_{k+1}c_{k+1}X_{k+1}c_{k+1}X_{k+1}\}\] %
and $W_{k+1}\sim P_{k+1}\not\sim Q_{k+1}$. Also, we get %
\[h(X_k)=h(Q_k)=h(P_k)=h(W_k)  \text{ and } t(X_k)=t(Q_k)=t(P_k)=t(W_k).\]\medskip   %

So, by induction on $k$ we have proved that \[\{P_k, Q_k\}=\{X_kc_kX_k, X_kc_kX_kc_kX_k\}\text{ and } %
W_k\sim P_k\not\sim Q_k\] for each $k=2, \dotsc, \ell(Y).$ %
Let $m=\ell(Y)$. Consider the words $X_m$, $P_m$, and $Q_m$. %
If $Y_m=aa$ or $Y_m=bb$,  then the word $P_m$  has the form %
$(aab)^*(aab)^2aa$ or respectively $(bba)^*(bba)^2bb$ by Lemma~\ref{S2}. %
Clearly, $X_m=(aab)^laa$ (resp., $(bba)^lbb$), where $l\geq0$. %
If $l>0$, then the words $P_m$ and $Q_m$ are equivalent, which is impossible. %
Hence, $l=0$, $X_m=Y_m$, and $\{P_m, Q_m\}=\{W_m, V_m\}$. %

In the case $Y_m=ba$ or $Y_m=ab$, the word $P_m$ has the %
form $(abc_m)^*(abc_m)^2ab$ or $(bac_m)^*(bac_m)^2ba$ respectively. One can easily check %
that $X_m=(abc_m)^lab$ (resp., $(bac_m)^lba$), where $l\geq0$. %
Again, if $l>0$, then $P_m\sim Q_m$, which is impossible. %
Thus, we have $l=0$, $X_m=Y_m$, and $\{P_m, Q_m\} =\{W_m, V_m\}$.

Finally, if $Y_m = a$ or $Y_m=b$, then the word $P_m$ is letter-alternating. %
By Lemma~\ref{XaXaX}, 3), the word $Q_m$ is letter-alternating as well %
and the lengths of the words $X_m$, $P_m$, and $Q_m$ are odd. %
The inequality $|X_m|\geq 2$ yields $|P_m|, |Q_m|\geq 5 $ whence $P_m\sim Q_m$, a contradiction. 
Hence, $|X_m|=1$, and we get $X_m=Y_m$ and $\{P_m, Q_m\}= \{W_m, V_m\}$ again.

So, in all cases we get $\{P_m, Q_m\} = \{W_m, V_m\}$. %
Since  $P_m\sim W_m\not\sim V_m$, we conclude that $P_m=W_m$ and $Q_m=V_m$. %
Hence, we have %
\[\ell(P)=\ell(W), \quad \ell(Q)=\ell(V), \quad \ell(X)=\ell(Y),\]%
and %
\[\Anc(P)=\Anc(W),\quad \Anc(Q)=\Anc(V),\quad \Anc(X)=\Anc(Y).\] %

Obviously, the primary $W$-, $V$-, and $Y$-series are the direct normal series %
of the  words $P$, $Q$, and $X$ respectively. %
In particular, we have $Q_2\sim V_2$ and $Q\sim V$. Let $\{S_j\}_{j=0}^l$ be
a linking $(V_2, Q_2)$-sequence. Then the sequence $\{S'_j=h(V)\varphi(S_j)t(V)\}_{j=0}^l$ %
appears to be an $r$-linking $(V, Q)$-sequence such that %
$\eta(S'_j)=\varphi(S_j)$ for each $j=0, \dotsc, l$.
Hence, all pairs $(S'_{j-1}, S'_j)$ for $j\geq1$ are good. %
Now, if we replace the subsequence $R_0, \dotsc, R_{\overline i}$ from the  $r$-linking $(V, U)$-sequence %
$\{R_i\}_{i=0}^n$ by the sequence $\{S'_j\}_{j=0}^l$, then %
we get a new $r$-linking $(V, U)$-sequence with lesser values of $\beta$ than %
$\beta(\{R_i\}_{i=0}^n)$. We get a contradiction with the choice of $\{R_i\}_{i=0}^n$. %
Hence the pair $(V, W)$ is a unique bad pair among %
$\{(R_{i-1}, R_i)\mid 1\leq i\leq n\}$, as desired.\bigskip

\textbf{Step 4.}
Since all pairs $(R_{i-1}, R_i)$ for $i>1$  are good, we conclude that %
the pair $(W, U)$ is good. %
In addition, the function $\eta$ deletes exactly one letter from the word $W_k$ %
for each $k= 1, \dotsc, \ell(W){-}1$. By  Proposition~\ref{EtaBad}, %
condition~$(**)$ from Lemma~\ref{Anc2} holds for the pair $(W, U)$.  %
The rest of the lemma now follows from Lemma~\ref{Anc2}. %
\end{proof}

\section{Algorithm EqAOF}
\label{AlgEqAOF}
In this section we complete the construction of  Algorithm EqAOF. %
We start with the description of the second main procedure, called Normalize (procedure Ancestor is introduced in
Sect.~\ref{Ancestor}). \medskip

\noindent\textbf{Procedure Normalize.}\\
\textit{Input}. A word $W\sim \mathrm{Anc}(U)$, the arrays $L$, $R$, $h$, $t$ from %
procedure Ancestor$(U)$, and the number $m=\ell(U)$. \\
\textit{Output}. A word $\Norm(U, W)$.\\
\textbf{Step 1.}\ If $m=1$, then return $W$, stop.\\
\textbf{Step 2.}\ Let $m:=m-1$; $W:=h[m]\varphi(W)t[m]$.\\
\textbf{Step 3.}\ If $W = Y^3$ for some $Y\in\Sigma^+$, then set $W:=Y^2$.\\
\textbf{Step 4.}\ Let $W:=L[m]WR[m]$; goto step 1. \\
\textbf{End.}\medskip


Let $\mathcal{S}=\mathcal{S}_1\cup\mathcal{S}_2\cup\{a,b,aa,bb,ab,ba\}$. We ready to construct %
Algorithm EqAOF.\medskip %

\noindent\textbf{Algorithm EqAOF.}\\
\textit{Input}.\ An arbitrary word $U$. \\
\textit{Output}.\ An almost overlap-free word that is equivalent to $U$ or ``FALSE'' if no such almost overlap-free word exists.\\
\textbf{Step 1.}\ Run Ancestor$(U)$ to get $\Anc(U)$, the arrays $L$, $R$, $h$, $t$, and the number $m=\ell(U)$.\\
\textbf{Step 2.}\ Find $W\in\mathcal{S}$ such that $\Anc(U)\sim W$; if no such word $W$ exists, then return ``FALSE'' and stop.\\
\textbf{Step 3.} Run Normalize$(W,L,R,h,t,m)$; Let $V:=\Norm(U, W)$.\\
\textbf{Step 4.}\ If $V$ is almost overlap-free, then return $V$ else return ``FALSE''. \\
\textbf{End.}
\medskip

The next lemma ensures that Algorithm EqAOF works correctly.

\begin{lemma}
\label{Correct}%
For a word $U$, $\EqAOF(U)=V$ if there exists an almost overlap-free word $V\not\in\{aaa, bbb\}$ %
such that $V\sim U$, and $\EqAOF(U)=\mathrm{FALSE}$ otherwise.
\end{lemma}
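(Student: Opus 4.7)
\textbf{Plan for the proof of Lemma~\ref{Correct}.}

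The plan is to prove the two implications separately. For soundness --- i.e., $\EqAOF(U) = V \neq \mathrm{FALSE}$ implies $V$ is almost overlap-free and $V \sim U$ --- Step~4 of the algorithm handles the first conclusion directly. For the equivalence $V \sim U$, I would first verify by inspection of Lemmas~\ref{A1} and~\ref{S2} together with the six short remaining words that every element of $\mathcal{S}$ is cube-free; this guarantees that the $W$-normal series of $U$ is well-defined whenever Step~2 succeeds. I would then argue, by induction on the iterations of Normalize, that after the iteration decrementing $m$ to $k$, the current value of $W$ is $\sim$-equivalent to the $k$-th term $\widetilde{U}_k$ of the $W$-normal series of $U$. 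The base case is immediate. The inductive step reproduces the recurrence $\widetilde{U}_k = L[k]h[k]\varphi(\widetilde{U}_{k+1})t[k]R[k]$ up to the optional replacement $Y^3 \leftrightarrow Y^2$ of Step~3, which is itself an instance of the defining congruence $\sim$. Hence the output is $\sim$-equivalent to $\widetilde{U}_1 = N_W(U)$, and the latter is $\sim U$ by Lemma~\ref{Norm1}.

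For completeness, assume $V$ is almost overlap-free, $V \sim U$, and $V \notin \{aaa, bbb\}$. By Lemma~\ref{Anc1}(1), $\Anc(V) \in \mathcal{S}$. I would split into two cases. In the \emph{good case}, condition~$(**)$ of Lemma~\ref{Anc2} holds throughout the primary series of $U$ and $V$; that lemma then gives $\ell(U) = \ell(V)$, $\Anc(U) \sim \Anc(V)$, and equality of all four arrays. A short case inspection shows that distinct elements of $\mathcal{S}$ lie in distinct $\sim$-classes, so Step~2 selects $W = \Anc(V)$. Lemma~\ref{Norm4} then says the direct main normal series of $V$ coincides with the primary $V$-series, and since the arrays agree, the iterations of Normalize reproduce this series term-by-term, yielding $V$ at termination. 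Each intermediate value is a cube-free term of a primary series of an almost overlap-free word, so Step~3 never fires.

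The substantive case is the \emph{bad case}: some pair $(r(r_T(U_k)), r(r_T(V_k)))$ is bad. I would take the smallest such $k_0$ and apply Lemma~\ref{Zlo}; its hypotheses are met because for $k < k_0$ Lemma~\ref{Anc2}'s argument still goes through, giving $U_{k_0} \sim V_{k_0}$, and $r(r_T(V_{k_0}))$ is (almost) overlap-free by Lemma~\ref{Anc1}(2). The lemma produces $Y$ with $r(r_T(V_{k_0})) = Y^2$, together with $W^{\star} = Y^3$ whose primary-series data matches that of $U$ from level $k_0 + 1$ downward and whose ancestor satisfies $\Anc(W^\star) \in \mathcal{S}$ and $\Anc(U) \sim \Anc(W^\star)$. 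The algorithm therefore selects $W$ as the $\mathcal{S}$-representative of $\Anc(W^\star)$. When Normalize runs, the upper levels build up exactly as they would for the $W^\star$-normal series, producing $Y^3$ when $m$ reaches $k_0$; Step~3 intercepts this cube and rewrites it as $Y^2 = r(r_T(V_{k_0}))$. From that level on, the arrays --- which, by Lemma~\ref{Zlo}(4), coincide with those of $V$ below $k_0$ --- drive the build-up along the $V$-normal series down to $V$.

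The main obstacle is this two-phase ``switching'' argument. One must combine Lemma~\ref{Zlo}'s rigid structural description (the factorizations $V_k = Y_k c_k Y_k$ and $W_k = Y_k c_k Y_k c_k Y_k$, the agreement of $h$- and $t$-values, and the off-by-one bookkeeping among $\ell(Y)$, $\ell(W^\star)$, and $\ell(V)$) with a careful induction through Normalize, in order to certify that Step~3 fires exactly once, at the correct level $k_0$, and that no spurious cube appears elsewhere. The fact that nested bad scenarios do not recur follows from Lemma~\ref{Zlo}(3), which gives $V_k \not\sim W_k$ for all $k \geq 2$ and thereby prevents re-entering a bad situation after the switch to tracking the $V$-series.
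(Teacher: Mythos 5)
Your plan follows the paper's proof essentially step for step: the same quick dispatch of soundness, the same dichotomy on whether condition $(**)$ holds, the same use of Lemmas~\ref{Anc1}, \ref{Anc2}, \ref{Norm4}, and~\ref{Zlo}, and the same mechanism of Step~3 of Normalize intercepting the cube $Y^3$ and rewriting it as $Y^2$ at the first bad level $k'$. The one verification you flag as an obstacle but leave open --- that no spurious cube triggers Step~3 at any other level --- is settled in the paper by observing that above level $k'$ the intermediate words $W'_m=Y_mc_mY_mc_mY_m$ have length $3|Y_m|+2\not\equiv 0 \pmod 3$ and so cannot be cubes, while below level $k'$ the intermediate words are (almost) overlap-free terms of the primary $V$-series.
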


\begin{proof}
Obviously, if $\EqAOF(U)=V$, then $V$ is an $r_1$-reduced almost overlap-free  word %
and $U\sim V$. Conversely, suppose that  $U\sim V$, where $V$ is an almost overlap-free word %
such that $V\not\in\{aaa, bbb\}$. %
Prove that $\EqAOF(U)=V$. %

Let $\{U_k\}_{k=1}^{\ell(U)}$, $\{V_k\}_{k=1}^{\ell(V)}$ be the primary $U$- and $V$-series %
respectively, and let $m=\min\{\ell(U), \ell(V)\}$. %
If condition~$(**)$ from Lemma~\ref{Anc2} holds, %
then $\ell(U)=\ell(V)$ and $U_k\sim V_k$ for all $k\leq\ell(V)$. %
In particular, we get $\Anc(U)\sim\Anc(V)$. %
By Lemma~\ref{Anc1}, 1), we have $\Anc(V)\in\mathcal{S}$. Thus, Algorithm EqAOF puts $W=\Anc(V)$ %
on Step 2. Since all words $V_k$ for $k\leq\ell(V)$ are almost overlap-free, procedure Normalize %
constructs the main normal $U$-series, which %
coincides with the primary $V$-series by Lemma~\ref{Norm4}. %
So, $\Norm(U, V) = N(U)=V$. Hence, $\EqAOF(U)=V$, as desired.

Now suppose that there exists an integer $k'<m$ such that %
all pairs $(r(r_T(U_{k})), r(r_T(V_{k})))$ for $k< k'$ are good  %
and the pair $(r(r_T(U_{k'})), r(r_T(V_{k'})))$ is bad.
From the proof of Lemma~\ref{Anc2} it follows that  $U_k\sim V_k$ %
for all $k\leq k'$. Moreover, we get $L_U[k]=L_V[k]$, $R_U[k]=R_V[k]$, %
$h_U[k]=h_V[k]$, and $t_U[k]=t_V[k]$ for each $k=1, \dotsc, k'$. 
Let us denote the words $r(r_T(U_{k'}))$ and $r(r_T(V_{k'}))$ by %
$U'$ and $V'$ respectively. By Lemma~\ref{Zlo}, there exists %
a word $Y$ such that $V'=YY$, $\ell(U')=\ell(W')$, and %
$\Anc(U')\sim \Anc(W')\in\mathcal{S}_2$, where $W'=YYY$. %
Let $\{W'_k\}_{k=1}^{\ell(W')}$ and $\{Y_k\}_{k=1}^{\ell(Y)}$ be %
the primary $W'$- and $Y$-series respectively. %
Since $\Anc(U')=\Anc(U)$, Algorithm EqAOF, running on $U$, chooses $W=\Anc(W')$ on Step 2. %

According to Lemma~\ref{Zlo}, the $W$-normal series of $U'$ coincides with %
the primary $W'$-series. Since $|W'_m| = |Y_mc_mY_mc_mY_m|=3|Y_m|+2$, the word $W'_m$ %
is not a cube for each $m = 2, \dotsc, \ell(W')$. Hence, the condition on Step 3 %
of procedure Normalize is not fulfilled, and procedure Normalize, running on $W$, constructs %
the $W$-normal series of $U'$, while $m=\ell(U), \dotsc, \ell(U){-}\ell(U'){+}2$. %

On the iteration with $m=\ell(U){-}\ell(U'){+1}=k'$ we obtain the word $W'=YYY$. %
On Step 3 of this iteration, procedure Normalize reduces $W'$ to the word $V'=YY$. %
Since the word $V_{k'}$ is almost overlap-free, we get $V'=r_T(V_{k'})$. %
Hence, on Step 4  we obtain the word $L_U[k']V'R_U[k']=L_V[k']r_T(V_{k'})R_V[k'] = V_{k'}$. 
After that procedure Normalize restores all words $V_m$ for $m=k'{-}1, \dotsc, 1$ and %
returns the word $\Norm(U, W)=V$. So, Algorithm EqAOF returns $V$, as desired.
\end{proof}

The next lemma  estimates the time complexity of Algorithm EqAOF. %

\begin{lemma}
\label{Time} %
Algorithm EqAOF has $O(|U|)$ time complexity, where $U$ is an input word.
\end{lemma}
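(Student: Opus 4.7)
The plan is to bound each phase of Algorithm EqAOF separately and exploit the geometric decay of word lengths along the primary $U$-series. The crucial quantitative fact is that if $\{U_k\}_{k=1}^{\ell(U)}$ is the primary $U$-series, then $|U_{k+1}| \leq |U_k|/2 + c$ for some small constant $c$: the operations $r_1$, $r_T$, $r$, and $\eta$ change the length by at most a bounded additive constant, while $\varphi^{-1}$ halves it. Hence $\sum_k |U_k| = O(|U|)$, and a symmetric bound holds when procedure Normalize ascends from $\Anc(U)$ to the output word.

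For procedure Ancestor, I would argue that a single iteration on $U_k$ runs in time linear in $|U_k|$. The routine operations ($r_1$-reduction, application of $\varphi^{-1}$, detection of a non-uniform tail as a fixed-shape prefix/suffix, tail reduction $r_T$, computation of $\eta$ together with its bookkeeping in $h$ and $t$, and the tests for $\widetilde{AB}$-wholeness and for non-reducible tails, which amount to scanning for finitely many forbidden patterns) are all single-pass scans. The complete reduction $r = r_B \circ r_A \circ r_1$ needs a little more care since contractions of the form $a\widetilde{A}a \to aa$ and $b\widetilde{B}b \to bb$ may cascade; however, implementing $r$ as a left-to-right scan with a stack (or equivalently as a deterministic transducer that locates the maximal $a\widetilde{A}a$ and $b\widetilde{B}b$ factors) achieves $O(|U_k|)$ time per iteration. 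Summing over $k$ and using the geometric decay yields total cost $O(|U|)$ for Ancestor.

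Step 2 of EqAOF reduces to deciding membership of $\Anc(U)$ in the finitely many regular languages $[W]_{r_1}$ for $W \in \mathcal{S} = \mathcal{S}_1 \cup \mathcal{S}_2 \cup \{a,b,aa,bb,ab,ba\}$, which are enumerated explicitly in Lemmas~\ref{A1} and~\ref{S2}. A fixed deterministic finite automaton recognizes their union and decides this in a single linear scan, so Step 2 costs $O(|\Anc(U)|) = O(|U|)$. Procedure Normalize inverts the primary series: at iteration with parameter $m$ the current word length is at most doubled by the $\varphi$-inflation and gains a bounded number of letters from $h[m]$, $t[m]$, $L[m]$, $R[m]$, so the lengths form an essentially geometric sequence bounded by the length of the returned word $V$. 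Each iteration performs $\varphi$, four constant-letter insertions, and the cube test of Step~3, all in time linear in the current length, totalling $O(|V|) = O(|U|)$. Testing afterwards whether $V$ is almost overlap-free is $O(|V|)$ using a standard repetition-detection algorithm (for example, Main and Lorentz's).

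The main obstacle is Step~3 of Normalize, which asks whether the current word equals $Y^3$ for some $Y$. A naive per-iteration cube test at cost linear in the current length still fits the geometric budget, so it suffices to show that one can decide $W = Y^3$ in time $O(|W|)$; this follows from standard stringology, since the shortest period of $W$ can be computed in one linear scan by the Knuth--Morris--Pratt failure function, and $W$ is a cube if and only if $3$ divides $|W|$ and the period of $W$ divides $|W|/3$. Combining the four linear bounds above yields the claimed $O(|U|)$ time complexity.
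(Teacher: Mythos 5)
Your proposal is correct and follows essentially the same approach as the paper: both decompose the running time phase by phase, bound each iteration of Ancestor and Normalize by a linear scan, invoke the geometric decay $|U_k|\leq 2^{-k+1}|U|$ to sum the costs, and reduce Step~2 to membership in finitely many fixed regular languages. Your treatment merely supplies more implementation detail (the stack-based complete reduction, the KMP-based cube test, Main--Lorentz for the final check in place of the paper's modification of Algorithm A$'$ from~\cite{Shur}), all of which is consistent with the paper's argument.
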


\begin{proof}
Step 1 runs procedure Ancestor. The cycle in Ancestor consists of %
constant-time and linear-time operations (checks and reductions). The %
nontrivial check in Step~1 of Ancestor is linear, because two given %
classes are recognizable languages. So, if the cycle in Ancestor %
is bounded by $C|U|+D$ for any given word $U$, then %
the complexity of Ancestor is bounded by $\sum_{k=1}^{\ell(U)}(C|U_k|+D)$. %
Since $|U_k|\leq 2^{-k+1}|U|$ and $\ell(U)\leq\log |U|$,  we %
conclude that the time complexity of Ancestor is bounded by %
\[C|U|\sum_{k=0}^{\infty}\frac{1}{2^k} + D\log |U| = 2C|U| + D\log |U|.\] %
Hence, procedure Ancestor runs in $O(|U|)$ time.

As to the complexity of Step 2 of Algorithm EqAOF, from Lemmas~\ref{A1} and~\ref{S2} %
it follows that the equivalence class of any word of $\mathcal{S}$ is a recognizable language. %
Thus, the word $\mathrm{Anc}(U)$ is examined by a finite set of fixed %
finite automata, and the complexity of this step is linear with respect %
to $|\mathrm{Anc}(U)|$. Procedure Normalize, applied on Step~3, runs %
in $O(|U|)$ time by the same reason as procedure Ancestor. Finally, a word can be %
checked for almost overlap-freeness in linear time also. %
The corresponding algorithm can be constructed, for
example, by modifying Algorithm A$'$ from \cite{Shur}. We see that %
Algorithm~EqAOF has $O(|U|)$ time complexity.
\end{proof}

Lemmas~\ref{Correct} and~\ref{Time} together prove Theorem~\ref{Th2}.


\end{document}